\documentclass[leqno]{article}

\usepackage{times}
\usepackage[pdftex]{graphicx}
\usepackage{amssymb,amsfonts,amsmath,amsthm}
\usepackage{float}
\usepackage{xcolor}

\usepackage{natbib}
\bibliographystyle{authordate1}
\setcitestyle{authoryear,open={(},close={)}}

\numberwithin{equation}{section}

\textwidth 16truecm \textheight 22truecm \hoffset -1.8truecm \voffset -1.6truecm

\baselineskip=16pt 
\setlength{\unitlength}{1.0in}

\newtheorem{theorem}{Theorem}[section]
\newtheorem{proposition}{Proposition}[section]
\newtheorem{lemma}{Lemma}[section]
\newtheorem{ass}{Assumption}[section]
\newtheorem{definition}{Definition}[section]
\newtheorem{remark}{Remark}[section]
\newtheorem{cor}{Corollary}[section]

\newtheorem{notation}{Notational Convention}[section]

\newcommand\EE {\mathbb E}
\newcommand\FF {\mathbb F}

\newcommand\RR {\mathbb R}
\newcommand\PP {\mathbb P}

\def\bone{\mathbf{1}}

\def\qed{\hskip6pt\vrule height6pt width5pt depth1pt}

\def\qed{\hskip 6pt\vrule height6pt width5pt depth1pt}

\newcommand{\ed}{\end{document}}
\newcommand{\be}{\begin{equation}}
\newcommand{\ee}{\end{equation}}
\newcommand{\bq}{\begin{eqnarray}}
\newcommand{\eq}{\end{eqnarray}}

\vspace{4in}

\newcommand{\een}{\EE^\alpha_{N-1}}

\begin{document}

\title{Liquidity Effects of Trading Frequency \thanks{This version: February 13, 2017. First version August 28, 2015.}}

\author{Roman Gayduk and Sergey Nadtochiy\footnote{Partial support from the NSF grant DMS-1411824 is acknowledged by both authors.} \footnote{We thank the anonymous referees and the Associate Editor for constructive comments that helped us improve the paper significantly.} \footnote{Address the correspondence to Sergey Nadtochiy, Mathematics Department, University of Michigan, 530 Church Street, Ann Arbor, MI 48109, USA; e-mail: sergeyn@umich.edu.}\\$\,\,\,\,$\\ \emph{University of Michigan}}
\date{}
\maketitle


\begin{abstract}
In this article, we present a discrete time modeling framework, in which the shape and dynamics of a Limit Order Book (LOB) arise endogenously from an equilibrium between multiple market participants (agents).
We use the proposed modeling framework to analyze the effects of trading frequency on market liquidity in a very general setting. In particular, we demonstrate the dual effect of high trading frequency. On the one hand, the higher frequency increases market efficiency, if the agents choose to provide liquidity in equilibrium. On the other hand, it also makes markets more fragile, in the sense that the agents choose to provide liquidity in equilibrium only if they are market-neutral (i.e., their beliefs satisfy certain martingale property). Even a very small deviation from market-neutrality may cause the agents to stop providing liquidity, if the trading frequency is sufficiently high, which represents an endogenous liquidity crisis (aka flash crash) in the market. This framework enables us to provide more insight into how such a liquidity crisis unfolds, connecting it to the so-called adverse selection effect.
\end{abstract}

\begin{small}{\bf Key words}: liquidity, trading frequency, Limit Order Book, continuum-player games, conditional tails of It\^o processes.\end{small}

\section{Introduction}

This paper is concerned with liquidity effects of trading frequency on an auction-style exchange, in which the participating agents can post limit or market orders. 
On the one hand, higher trading frequency provides more opportunities for the market participants to trade, hence, improving the liquidity of the market and increasing its efficiency. 
On the other hand, higher trading frequency also provides more opportunities for some participants to manipulate the price and disrupt the market liquidity.
Such a manipulation creates a new type of risk, which reveals itself in unusually high price deviations, which cannot be explained by changes in the fundamental value of the asset. The most famous example of this phenomenon is the ``flash crash'' of $2010$. 
This example motivates the need for a comprehensive study of
the tradeoff between the liquidity providing role of strategic players and the liquidity risk they generate, and its relation to trading frequency.
The collective liquidity of the market is captured by the \emph{Limit Order Book (LOB)}, which contains all the limit buy and sell orders. 

The goal of the present paper is two-fold. 
First, we develop a new framework for modeling market microstructure,
in which the shape of the LOB, and its dynamics, arise \emph{endogenously} from the interactions between the agents.
Among the many advantages of such an approach is the possibility of modeling the market reaction to changes in the rules of the exchange: e.g., limited trading frequency, transaction tax, etc. 
The second, and most important, goal of the present work is to investigate the \emph{liquidity effects} of \emph{trading frequency}, using the proposed modeling framework.  
In particular, the main results of this paper (cf. the discussion in Section \ref{se:examples}, as well as Theorems \ref{le:main.zeroTermSpread}, \ref{thm:main.necessary} and Corollary \ref{prop:main.smallspread}, in Section \ref{se:main}) describe the dual effect of high trading frequency. On the one hand, if the agents choose to provide liquidity in equilibrium, higher trading frequency decreases the bid-ask spread and makes the expected profits of all market participants converge to the same (fundamental) value, thus, improving the market efficiency. 
On the other hand, higher trading frequency also makes the LOB more sensitive to the deviations of the agents' attitudes from market-neutrality. 
It is, of course, clear that a strong bullish or bearish signal induces market participants to trade at a higher or lower price. However, the novelty of our observation is in the role that the trading frequency plays in amplifying this effect. Namely, we show that, if the trading frequency is high, even if agents have plenty of inventory, a very small deviation from market-neutrality may cause them to stop providing liquidity, by either withdrawing from the market completely, or by posting limit orders far away from the fundamental price. Such actions cause disproportional deviations in the LOB, which cannot be explained by any fundamental reasons: they are much higher than the trading signal (i.e., the expected change in the fundamental price), and they occur without any shortage of supply or demand for the asset. We refer to such a deviation as an \emph{endogenous} liquidity crisis, because it is due to the trading mechanism (i.e., the rules by which the market participants interact), rather than any fundamental reasons (note the similarity with the flash crash). Our framework provides insights into how such a liquidity crisis unfolds, connecting it to the so-called \emph{adverse selection} effect. In particular, Section \ref{se:examples} constructs an equilibrium in which an endogenous liquidity crisis does not occur because of an abnormally large market order, wiping out the liquidity on one side of the LOB, but because the optimal strategies of the agents require them to stop providing liquidity on one side of the LOB. On the mathematical side, our analysis uses the properties of conditional tails of the increments of a general It{\^o} process. The main result in that regard, in Lemma \ref{le:necessary.marginal.maximum}, provides a uniform exponential bound on the conditional tails of the increments of a general It{\^o} process. We believe that this result is useful in its own right, and, to the best of our knowledge, it is not available in the existing literature.

In recent years, we observed an explosion in the amount of literature devoted to the study of market microstructure. In addition to various empirical studies, a large part of the existing theoretical work focuses on the problem of optimal execution:
see, among others, \cite{MMS1}, \cite{MMS2}, \cite{MMS6}, \cite{MMS7}, \cite{MMS11}, \cite{MMS13}, \cite{MMS15}, \cite{MMS16}, \cite{MMS20}, \cite{MMS22}, \cite{MMS23}, \cite{MMS26}, \cite{MMS25}, and references therein. 
In these articles, the dynamics and shape of the LOB are modeled exogenously, or, equivalently, the arrival processes of the limit and market orders are specified exogenously. In particular, none of these articles attempts to explain the shape and dynamics of the LOB, arising directly from the interaction between the market participants. 
A different approach to the analysis of market microstructure has its roots in the economic literature. 
For example, \cite{MMS.g1}, \cite{MMS.g2}, \cite{MMS.g3}, \cite{MMS.g4}, \cite{MMS.g5}, \cite{MMS.g6}, \cite{DA.DuZhu}, \cite{Bressan1}, \cite{Bressan2}, \cite{Bressan4} consider equilibrium models of market microstructure, and they are more closely related to the present work. However, the models proposed in the aforementioned papers do not aim to represent the mechanics of an auction-style exchange with sufficient precision, and, in particular, they are not well suited for analyzing the liquidity effects of trading frequency, which is the main focus of the present paper.
A somewhat related strand of literature focuses on the endogenous formation of LOB in markets with a designated market maker: see e.g., \cite{MMS.gmm1}, \cite{MMS.gmm2}, \cite{MMS.gmm3}, \cite{MMS.gmm4}, \cite{MMS.gmm5}. In these papers, the LOB is not an outcome of a multi-agent equilibrium: instead, it is controlled by a single agent, the market maker. 
In the present paper, we model the entire LOB as an output of an equilibrium between a large number of agents, each of whom is allowed to both consume and provide liquidity (in particular, we have no designated market maker).
Our setting is related to the literature on \emph{double auctions} (cf. \cite{DA.Vayanos}, \cite{DA.DuZhu}), with the crucial difference that the participants of each auction are allowed to choose two ``asymmetric" types of strategies: market or limit orders. In addition, the present framework assumes that, ex ante, all agents have access to the same information, and, in this sense, it is similar to \cite{MMS.g1}, \cite{MMS.g3}, \cite{MMS.g6}. In particular, the \emph{adverse selection} effect, herein, does not arise from any a priori information asymmetry between agents, instead, it is caused by the \emph{mechanics} of the exchange.
We formulate the problem as a \emph{continuum-player game} -- this abstraction allows us to obtain computationally tractable results (cf. \cite{Aumann}, \cite{Schmeidler}, \cite{GCarmona} for the concept of a continuum-player game, and \cite{MFG1}, \cite{MFG2}, \cite{MFG3}, \cite{MFG4} for the subclass of mean field games). 

The paper is organized as follows. Subsection \ref{se:setup} describes the probabilistic setting, along with the execution rules of the exchange and the resulting state processes of the agents. 
Subsection \ref{se:equil.def} defines the equilibrium and introduces the notion of \emph{degeneracy} of the market (which represents an endogenous liquidity crisis).
In Section \ref{se:examples}, we construct an equilibrium in a simple model, illustrating how an endogenous liquidity crisis unfolds, and how it is connected to the adverse selection effect.
Theorems \ref{le:main.zeroTermSpread}, \ref{thm:main.necessary}, and Corollary \ref{prop:main.smallspread}, in Section \ref{se:main}, are the main results of the paper: they formalize and generalize the conclusions of Section \ref{se:examples}.
In Section \ref{se:tails}, we prove the key technical results on the (conditional) tails of marginal distributions of It\^{o} processes. 
Sections \ref{se:pf.1}, \ref{se:pf.2} contain the proofs of the main results.
We conclude in Section \ref{se:conclusion}.

\section{Modeling framework for a finite-frequency auction-style exchange}
\label{se:setup}

\subsection{Mechanics of the exchange}
\label{se:setup}

We consider an exchange in which trading can only occur at discrete times $n=0,1,\ldots,N$.
We assume that the market participants are split into two groups: the \emph{external investors}, who are ``impatient", in the sense that they only submit market orders, and the \emph{strategic players}, who can submit both market and limit orders, and who are willing to optimize their actions over a given (short) time horizon, in order to get a better execution price.\footnote{We do not distinguish the ``aggressive" limit orders, which are posted at the price level of an opposite limit order, and treat them as market orders. This causes no loss of generality, as the market participants in our setting have a perfect observation of the LOB.} In our study, we focus on the strategic players, who are referred to as \emph{agents}, and we model the behavior of the external investors exogenously, via an \emph{exogenous demand}. The interpretation of the external investors is clear: these are the investors who either have a longer-term view on the market, or who simply need to buy or sell the asset for reasons other than short-term profits. The strategic players (i.e., agents), on the contrary, are short-term traders, who attempt to maximize their objective at a shorter time horizon $N$. During every time period $[n,n+1)$, all the orders coming to the exchange are split into \emph{limit} and \emph{market} orders. The limit orders are collected in the so-called \emph{Limit Order Book (LOB)}, and the market orders form the \emph{demand curve}. At time $n+1$, the market orders in the demand curve are executed against the limit orders in the LOB. Then, this process is repeated in the next time interval. 
In particular, during a time period $[n,n+1)$ (for simplicity, we say ``at time $n$"), an agent is allowed to submit a market order, post a limit buy or sell order, or wait (i.e., do nothing). If a limit order is not executed in a given time period, it costs nothing to cancel or re-position it for the next time period.
Notice that our framework does not model the time-priority of limit orders. However, introducing a time-priority would not change the agents' maximum objective value, as the ``tick size" is assumed to be zero (i.e., the set of possible price levels is $\RR$), and, hence, an agent can always achieve a priority by posting her order ``infinitesimally" above or below a given competing order.
Further details on modeling the formation of an LOB and the execution rules are presented below.

The demand curves are modeled exogenously by a random field $D=\left(D_n(p)\right)_{p\in\RR,n=1,\ldots,N}$ on a filtered probability space $\left(\Omega,\mathbb{F}=\left(\mathcal{F}_n\right)_{n=0}^N,\PP\right)$, such that $\mathcal{F}_0$ is a trivial sigma-algebra, completed w.r.t. $\PP$. 
The random variable $D^+_n(p) = \max(D_n(p),0)$ denotes the number of shares of the asset that the external investors and the agents submitting market orders are willing to purchase at or below the price $p$, accumulated over the time period $[n-1,n)$, and $D^-_n(p) = -\min(D_n(p),0)$ denotes the number of shares of the asset that the external investors and the agents submitting market orders are willing to sell at or above the price $p$, in the same time period. 
We assume that $D_n(\cdot)$ is a.s. nonincreasing and measurable w.r.t. $\mathcal{F}_n\otimes\mathcal{B}(\RR)$.
We denote by $\mathbb{A}$ a Borel space of \emph{beliefs}, and, for each $\alpha\in\mathbb{A}$, there exists a \emph{subjective probability measure} $\PP^{\alpha}$ on $\left(\Omega,\mathcal{F}_N\right)$, which is absolutely continuous with resect to $\PP$. 
We assume that, for any $n=0,\ldots,N$ and any $\alpha\in\mathbb{A}$, there exists a regular version of the conditional probability $\PP^{\alpha}$ given $\mathcal{F}_n$, denoted $\PP^{\alpha}_n$.\footnote{This assumption holds, for example, if $\mathcal{F}_N$ is generated by a random element with values in a standard Borel space.} We denote the associated conditional expectations by $\EE^{\alpha}_n$. We also need to assume that, for any $\alpha\in\mathbb{A}$, there exists a modification of the family $\left\{\PP^{\alpha}_n\right\}_{n=0}^N$, which satisfies the \emph{tower property with respect to $\PP$}, in the following sense: for any $n\leq m$ and any r.v. $\xi$, such that $\EE^{\alpha} \xi^+ < \infty$, we have
$$
\EE^{\alpha}_n \EE^{\alpha}_m \xi = \EE^{\alpha}_n \xi,\,\,\,\,\,\,\,\,\,\PP\text{-a.s.}
$$
There exists such a modification, for example, if $\PP^{\alpha}\sim\PP$.
In any market model, for every $\alpha$, we fix such a modification of conditional probabilities (up to a set of $\PP$-measure zero) and assume that all conditional expectations $\left\{\EE^{\alpha}_n\right\}$ are taken under this family of measures.
The \emph{Limit Order Book (LOB)} is given by a pair of adapted processes $\nu=(\nu^+_n,\nu^-_n)_{n=0}^{N}$, such that every $\nu^+_n$ and $\nu^-_n$ is a finite sigma-additive random measure on $\RR$ (w.r.t. $\mathcal{F}_n\otimes\mathcal{B}(\RR)$). Herein, $\nu^+_n$ corresponds to the cumulative limit sell orders, and $\nu^-_n$ corresponds to the cumulative limit buy orders, posted at time $n$.
The bid and ask prices at any time $n=0,\ldots,N$ are given by the random variables
$$
p^b_n = \sup \text{supp}(\nu^-_n),
\,\,\,\,\,\,\,\,\,\,\,\,\,\,\,\,p^a_n = \inf \text{supp}(\nu^+_n),
$$
respectively. Notice that these extended random variables are always well defined but may take infinite values.

We define the \emph{state space} of an agent as $\mathbb{S}=\RR\times\mathbb{A}$, where the first component denotes the \emph{inventory} of an agent, and the second component denotes her \emph{beliefs}. Every agent in state $(s,\alpha)$ models the future outcomes using the subjective probability measure $\PP^{\alpha}$.
There are infinitely many agents, and their distribution over the state space is given by the \emph{empirical distribution} process $\mu=(\mu_n)_{n=0}^{N}$, such that every $\mu$ is a finite sigma-additive random measure on $\mathbb{S}$ (w.r.t. $\mathcal{F}_n\otimes\mathcal{B}(\mathbb{S})$). In particular, the total mass of agents in the set $S\subset\mathbb{S}$ at time $n$ is given by $\mu_n(S)$.
The inventory level $s$ represents the \emph{number of shares per agent}, held in state $(s,\alpha)$. In particular, the total number of shares held by all agents in the set $S\subset\mathbb{S}$ is given by $\int_{S} s\mu_n(ds,d\alpha)$. 
The interpretation of this definition in a finite-player game is discussed in Remark \ref{rem:finplayer} below.
We refer the reader to \cite{GCarmona} for the general concept of a continuum-player game.

\begin{remark}\label{rem:finplayer}
The continuum-player game defined in this section can be related to a finite-player game as follows.
Denote by $\mu_0$ the empirical distribution of the agents' states at a given time. Recall that $\mu_0$ is a measure on $\mathbb{S}=\RR\times\mathbb{A}$, and assume that it is a finite linear combination of Dirac measures: $\mu_0 = \frac{1}{M} \sum_{i=1}^M \delta_{(s^{i},\alpha^{i})}$.
In this case, we interpret $s^i$ as the {\bf number of shares per agent} held by the agents in the $i$th group. Let us explain how this notion is related to the actual inventory levels (i.e., the actual numbers of shares held by the agents) in the associated finite-player game. To this end, consider a collection of $M$ agents, whose states are given by their (actual) inventories and beliefs, $(s,\alpha)$, with the current states being $\{(\tilde{s}^i = s^i/M,\alpha^i)\}$. Define the ``unit mass" of agents to be $M$. In this finite-player collection, the mass of agents (measured relative to the unit mass, $M$) at any state $(Ms,\alpha)$ is precisely $\mu_0(\{(s,\alpha)\})$, and their total inventory is $Ms\mu_0(\{(s,\alpha)\})$. The number of shares per agent is, then, defined as the total inventory held by these agents divided by their mass, and it is equal to $Ms$. Choosing $s=\tilde{s}^i$, we conclude that, in the finite-player collection, the number of shares per agent held by the agents at state $(\tilde{s}^i,\alpha^i)$ is given by $M\tilde{s}^i = s^i$, which coincides with our interpretation of $s^i$ in the continuum-player game.
It is also easy to show that an equilibrium in the proposed continuum-player game (defined in the next subsection) produces an approximate equilibrium in the associated finite-player game, when the inventory levels $\{\tilde{s}^i\}$ are small (cf. Subsection 2.3 in the extended version of this paper, \cite{GaydukNadtochiy1})
\end{remark}

As the parameter $\alpha$ does not change over time, the state process of an agent, denoted $(S_n)$, is an adapted $\RR$-valued process, representing her inventory.\footnote{Note that, although $\PP^{\alpha}$ does not change over time, the conditional distribution of the future demand, as perceived by the agent, changes dynamically, according to the new information received.}
The control of every agent is given by a triplet of adapted processes $(p,q,r) = (p_n,q_n,r_n)_{n=0}^{N-1}$ on $\left(\Omega,\mathbb{F}\right)$, with values in $\RR^2\times\left\{0,1\right\}$. The first coordinate, $p_n$, indicates the location of a limit order placed at time $n$, and $q_n$ indicates the size of the order (measured in shares per agent, and with negative values corresponding to buy orders).\footnote{Note each agent is only allowed to place her limit order at a single price level, at any given time. However, this entails no loss of optimality. Indeed, using the Dynamic Programming Principle derived in Appendix A, one can show, by induction, that, in equilibrium, an agent does not benefit from posting multiple limit orders at the same time. As shown in \cite{Schmeidler}, this is typical for a continuum-player game.}
The last coordinate $r_n$ shows whether the agent submits a market order (if $r_n = 1$) or a limit order (if $r_n=0$).
Assume that an agent posts a limit sell order at a price level $p_n$.
If the demand to buy the asset at this price level, $D^+_{n+1}(p_n)$, exceeds the amount of all limit sell orders posted below $p_n$ at time $n$, then (and only then) the limit sell order of the agent is executed. 
Market orders of the agents are always executed at the bid or ask prices available at the time when the order is submitted. We interpret an internal market order (i.e., the one submitted by an agent) as the decision of an agent to join the external investors, in the given time period.
Summing up the above, we obtain the following dynamics for the state process of an agent, starting with initial inventory $s\in\RR$ at time $m=0,\ldots,N-1$:
\begin{equation*}
S^{(p,q,r)}_m(m,s,\nu) = s,\,\,\,\,\,\, \Delta S^{(p,q,r)}_{n+1}(m,s,\nu) = S^{(p,q,r)}_{n+1}(m,s,\nu) - S^{(p,q,r)}_n(m,s,\nu) 
= -q_n \bone_{\left\{ r_n=1 \right\}}
\end{equation*}
\begin{equation}\label{eq.stateProc.def}
- \bone_{\left\{ r_n=0 \right\}}
\left( q^+_n \bone_{\left\{D^+_{n+1}(p_n) > \nu^+_n((-\infty,p_n))\right\}}
-  q^-_n \bone_{\left\{D^-_{n+1}(p_n) > \nu^-_n((p_n,\infty))\right\}}\right),
\,\,\,\,n=m,\ldots,N-1.
\end{equation}
The above dynamics represent an optimistic view on the execution by the agents. In particular, they imply that all limit orders at the same price level are executed in full, once the demand reaches them: i.e., each agent believes that her limit order will be executed first among all orders at a given price level.
In addition, all agents' market orders are executed at the bid and ask prices: i.e., each agent believes that her market order will be executed first, when the demand curve is cleared against the LOB, at the end of a given time period. 
These assumptions can be partially justified by the fact that the agents' orders are infinitesimal: $q_n$ is measured in shares per agent, and an individual agent has zero mass. 
However, if a non-zero mass of agents submit limit orders at the same price level, or execute market orders, at the same time, then, the above state dynamics may violate the market clearing condition: the total size of executed market orders (both in shares and in dollars) may not coincide with the total size of executed limit orders (at least, as viewed by the agents).
Nevertheless, this issue is resolved if, at any time, the mass of agents posting limit orders at the same price level or posting market orders is zero. In other words, $(\nu,p,q,r)$ satisfy, $\PP$-a.s.: $\nu_n$ is continuous, as a measure on $\RR$ (i.e., it has no atoms), and $r_n=0$. Such an equilibrium is constructed in Section 8 of the extended version of this paper, \cite{GaydukNadtochiy1}.
The general definition of a continuum-player game and its connection to a finite-player game can be found, e.g., in \cite{GCarmona} and in the references therein (see also Subsection 2.3 in the extended version of this paper, \cite{GaydukNadtochiy1}).

The modeling framework proposed herein has a close connection to the models of \emph{double auctions}, in the economic literature (cf. \cite{DA.DuZhu}, \cite{DA.Vayanos}). The main difference is in the non-standard design of the auction. Namely, in the proposed setting, the auction participants may choose different styles of trading, i.e., market or limit orders, which generates an ex-post information asymmetry between participants: the limit orders have to be submitted before the demand curve is observed, while the market orders are submitted using complete information about the LOB. This difference is not coincidental -- it is, in fact, crucial for a realistic representation of the risks associated with each order type, and it is at the core of the results established herein. A more detailed discussion of the information structure is provided in the next subsection.

\subsection{Equilibrium}
\label{se:equil.def}

The objective function of an agent, starting at the initial state $(s,\alpha)\in\mathbb{S}$, at any time $m=0,\ldots,N$, and using the control $(p,q,r)$, is given by the $\mathcal{F}_m$-measurable random variable
\begin{equation}\label{eq.intro.Jm.def}
J^{(p,q,r)}(m,s,\alpha,\nu) =
\EE^{\alpha}_m \left[ \left(S^{(p,q,r)}_N(m,s,\nu)\right)^+ p^b_N - \left(S^{(p,q,r)}_N(m,s,\nu)\right)^- p^a_N 
\right.
\end{equation}
$$
\left.
- \sum_{n=m}^{N-1} \left(p_n\bone_{\left\{ r_n = 0\right\}} + p^a_n\bone_{\left\{ r_n = 1, q_n <0\right\}} + p^b_n\bone_{\left\{ r_n = 1, q_n >0\right\}} \right) \Delta S^{(p,q,r)}_{n+1}(m,s,\nu) \right],
$$
where we assume that $0\cdot\infty = 0$.
In the above expression, we assume that, at the final time $n=N$, each agent is forced to liquidate her position at the bid or ask prices available at that time. Alternatively, one can think of it as \emph{marking to market} the residual inventory, right after the last external market order is executed.


\begin{definition}\label{def:admis}
For a given LOB $\nu$, integer $m=0,\ldots,N-1$, and state $(s,\alpha)\in\mathbb{S}$, the triplet of adapted processes $(p,q,r)$ is an {\bf admissible control} if the positive part of the expression inside the expectation in (\ref{eq.intro.Jm.def}) has a finite expectation under $\PP^{\alpha}$.
\end{definition}

For a given LOB $\nu$, an initial condition $(m,s,\alpha)$, and a triplet of $\mathbb{F}\times\mathcal{B}(\mathbb{S})$-adapted random fields $(p,q,r)$, we identify the latter (whenever it causes no confusion) with stochastic processes $(p,q,r)$ via:
$$
p_n=p_n\left(S^{(p,q,r)}_n(m,s,\nu),\alpha\right),\,\,\,
q_n=q_n\left(S^{(p,q,r)}_n(m,s,\nu),\alpha\right),\,\,\,
r_n=r_n\left(S^{(p,q,r)}_n(m,s,\nu),\alpha\right),
$$
and the state dynamics (\ref{eq.stateProc.def}), for $n=m,\ldots,N$. This system determines $(p,q,r)$ and $S^{(p,q,r)}$ recursively.

\begin{definition}\label{def:optControl}
For a given LOB $\nu$, we call the triplet of progressively measurable random fields $(p,q,r)$ an {\bf optimal control} if, for any $m=0,\ldots,N$ and any $(s,\alpha)\in\mathbb{S}$, we have:
\begin{itemize}
\item $(p,q,r)$ is admissible,

\item $J^{(p,q,r)}(m,s,\alpha,\nu) \geq J^{(p',q',r')}(m,s,\alpha,\nu)$, $\PP$-a.s., for any admissible control $(p',q',r')$.
\end{itemize}
\end{definition}

In the above, we make the standard simplifying assumptions of continuum-player games: each agent is too small to affect the empirical distribution of cumulative controls (reflected in $\nu$) when she changes her control (cf. \cite{GCarmona}).
Note also that our definition of the optimal control implies that it is time consistent: re-evaluation of the optimality at any future step, using the same terminal criteria, must lead to the same optimal strategy.
Next, we discuss the notion of equilibrium in the proposed game.
First, we notice that, if $p^b_N$ or $p^a_N$ becomes infinite, the agents with positive or negative inventory may face the objective value of ``$-\infty$", for any control they use. In such a case, their optimal controls may be chosen in an arbitrary way, resulting in unrealistic equilibria. To avoid this, we impose the additional regularity condition on $\nu$.

\begin{definition}\label{def:admis.LOB}
A given LOB $\nu$ is admissible if, for any $m=0,\ldots,N-1$ and any $\alpha\in\mathbb{A}$, we have, $\PP$-a.s.:
$$
\EE^{\alpha}_m |p^a_N|\vee|p^b_N| < \infty.
$$
\end{definition}

Let us consider the (stochastic) value function of an agent for a fixed $(m,s,\alpha,\nu)$: 
\begin{equation}\label{eq.gen.Val.randField}
V^{\nu}_m(s,\alpha) = \text{esssup}_{p,q,r} J^{(p,q,r)}\left(m,s,\alpha,\nu\right),
\end{equation}
where the essential supremum is taken under $\PP$, over all admissible controls $(p,q,r)$, and $J^{(p,q,r)}$ is given by (\ref{eq.intro.Jm.def}).
Appendix A shows that, for any admissible $\nu$, $V^{\nu}_m(\cdot,\alpha)$ has a continuous modification under $\PP$, which we refer to as the value function of an agent with beliefs $\alpha$.
Using the Dynamic Programming Principle, Appendix A provides an explicit system of recursive equations that characterize optimal strategies and the value function. In particular, the results of Appendix A (cf. Corollary \ref{cor:piecewiseLin}) yield the following proposition.

\begin{proposition}\label{cor:piecewiseLin.new}
Assume that, for an admissible LOB $\nu$, there exists an optimal control $(\hat{p},\hat{q},\hat{r})$. Then, for any $(s,\alpha)\in\mathbb{S}$, the following holds $\PP$-a.s., for all $n=0,\ldots,N-1$:
$$
V^{\nu}_n(s,\alpha) = s^+ \lambda^a_n(\alpha) - s^- \lambda^b_n(\alpha),
$$ 
with some adapted processes $\lambda^a(\alpha)$ and $\lambda^b(\alpha)$, such that $\lambda^a_N(\alpha) = p^b_N$ and $\lambda^b_N(\alpha) = p^a_N$.
\end{proposition}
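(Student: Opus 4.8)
The plan is to read off the claimed form of the value function from its \emph{positive $1$-homogeneity} in the inventory variable, rather than from the backward Dynamic Programming recursion (although the latter is what underlies Corollary \ref{cor:piecewiseLin}). Concretely, I would first establish that, for every fixed $m$, $(s,\alpha)\in\mathbb{S}$ and every scalar $\lambda>0$,
$$
V^{\nu}_m(\lambda s,\alpha)=\lambda\,V^{\nu}_m(s,\alpha),\qquad\PP\text{-a.s.},
$$
and then specialize to $s\in\{1,-1\}$ to define $\lambda^a$, $\lambda^b$.

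For the homogeneity, fix an admissible control $(p,q,r)$ for the initial state $(m,s,\alpha)$ and consider the control $(p,\lambda q,r)$ (the same limit-order locations and order-type indicators, with order sizes multiplied by $\lambda$) for the initial state $(m,\lambda s,\alpha)$. Inspecting (\ref{eq.stateProc.def}), the execution events $\{D^{\pm}_{n+1}(p_n)>\nu^{\pm}_n(\cdot)\}$ depend on $(p,\nu,D)$ but not on $q$ or on the state, and, since $\lambda>0$, one has $(\lambda x)^{\pm}=\lambda x^{\pm}$ and $\bone_{\{q_n<0\}}=\bone_{\{\lambda q_n<0\}}$, $\bone_{\{q_n>0\}}=\bone_{\{\lambda q_n>0\}}$; hence $\Delta S^{(p,\lambda q,r)}_{n+1}(m,\lambda s,\nu)=\lambda\,\Delta S^{(p,q,r)}_{n+1}(m,s,\nu)$ and, by induction, $S^{(p,\lambda q,r)}_n(m,\lambda s,\nu)=\lambda\,S^{(p,q,r)}_n(m,s,\nu)$ for all $n\geq m$. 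Substituting this into (\ref{eq.intro.Jm.def}) and using that the price coefficients $p_n,p^a_n,p^b_n,p^b_N,p^a_N$ are untouched, the whole expression inside $\EE^{\alpha}_m$ gets multiplied by $\lambda$; in particular its positive part scales by $\lambda$, so admissibility is preserved, and $J^{(p,\lambda q,r)}(m,\lambda s,\alpha,\nu)=\lambda\,J^{(p,q,r)}(m,s,\alpha,\nu)$. Taking the essential supremum over admissible $(p,q,r)$ gives $V^{\nu}_m(\lambda s,\alpha)\geq\lambda\,V^{\nu}_m(s,\alpha)$; applying the same inequality with $(s,\lambda)$ replaced by $(\lambda s,1/\lambda)$ yields the reverse bound, so equality holds.

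To conclude, set $\lambda^a_n(\alpha):=V^{\nu}_n(1,\alpha)$ and $\lambda^b_n(\alpha):=-V^{\nu}_n(-1,\alpha)$; since $J^{(p,q,r)}(n,s,\alpha,\nu)$ is $\mathcal{F}_n$-measurable, so is its essential supremum $V^{\nu}_n(s,\alpha)$ for each fixed $s$, and hence $\lambda^a(\alpha),\lambda^b(\alpha)$ are adapted. For $s>0$, homogeneity with $\lambda=s$ applied to inventory $1$ gives $V^{\nu}_n(s,\alpha)=s\lambda^a_n(\alpha)=s^+\lambda^a_n(\alpha)-s^-\lambda^b_n(\alpha)$; for $s<0$, homogeneity with $\lambda=|s|$ applied to inventory $-1$ gives $V^{\nu}_n(s,\alpha)=|s|V^{\nu}_n(-1,\alpha)=-s^-\lambda^b_n(\alpha)$; and for $s=0$, $V^{\nu}_n(0,\alpha)=\lambda V^{\nu}_n(0,\alpha)$ for all $\lambda>0$ forces $V^{\nu}_n(0,\alpha)=0$. (If one wants the identity to hold $\PP$-a.s. simultaneously for all $s\in\RR$, one invokes the continuous-in-$s$ modification of $V^{\nu}_n(\cdot,\alpha)$ furnished by Appendix A and extends from rational $\lambda$ by continuity.) Finally, taking $m=N$ in (\ref{eq.intro.Jm.def}) the sum is empty and $S^{(p,q,r)}_N(N,s,\nu)=s$, so $V^{\nu}_N(s,\alpha)=s^+p^b_N-s^-p^a_N$ regardless of the control, which is consistent with $\lambda^a_N(\alpha)=p^b_N$ and $\lambda^b_N(\alpha)=p^a_N$.

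The only delicate point I anticipate is the bookkeeping in the scaling step: one must check carefully that the execution indicators in (\ref{eq.stateProc.def}) do not involve the order size $q$, so that replacing $q$ by $\lambda q$ leaves them intact, and that multiplying $q$ by $\lambda>0$ does not flip the sign indicators $\bone_{\{q_n<0\}},\bone_{\{q_n>0\}}$; once these are in place the scaling of $S^{(p,q,r)}$ and of $J^{(p,q,r)}$, and hence of $V^{\nu}_m$, is immediate. It is worth noting that the hypothesis that an optimal control $(\hat p,\hat q,\hat r)$ exists is not actually needed for this argument — it is only used to invoke the regularity results (continuous modification, Dynamic Programming Principle) of Appendix A, which are convenient but not essential for the stated claim.
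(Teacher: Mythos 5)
Your argument is correct in its main thrust, but it takes a genuinely different route from the paper. The paper obtains this proposition as a byproduct of the full backward analysis in Appendix~A: Lemma \ref{le:DPP} (Dynamic Programming Principle) reduces the problem to a one-step optimization, and the proof of Corollary \ref{cor:piecewiseLin} observes that, for fixed $(p_n,r_n,s)$, the one-step objective is piecewise linear in $q_n$ with kinks only at $q_n=0$ and $q_n=s$, so the optimum is attained at $q_n\in\{0,s\}$ and the piecewise-linear structure propagates backwards by induction. That route is heavier but yields much more: the explicit recursions for $\lambda^a,\lambda^b$, the inequalities $p^a_n\geq\EE^{\alpha}_n\lambda^a_{n+1}$, $p^b_n\leq\EE^{\alpha}_n\lambda^b_{n+1}$, and the characterization of optimal actions, all of which are used repeatedly in Sections \ref{se:examples}--\ref{se:pf.2}. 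Your scaling argument is more elementary and self-contained: the bijection $(p,q,r)\mapsto(p,\lambda q,r)$ on admissible controls, together with the facts that the execution indicators in (\ref{eq.stateProc.def}) do not involve $q$ and that $q\mapsto\lambda q$ ($\lambda>0$) preserves the sign indicators, gives $S^{(p,\lambda q,r)}(m,\lambda s,\nu)=\lambda S^{(p,q,r)}(m,s,\nu)$ and hence $J^{(p,\lambda q,r)}(m,\lambda s,\alpha,\nu)=\lambda J^{(p,q,r)}(m,s,\alpha,\nu)$, so positive $1$-homogeneity of $V^{\nu}_m(\cdot,\alpha)$ on each half-line follows by taking essential suprema. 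This is enough for the stated representation, and it is a clean observation; it just does not deliver the finer structure that the paper needs downstream.

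One point is wrong, however: your closing remark that the existence of an optimal control is not needed. Homogeneity alone only shows that $V^{\nu}_n(\lambda s,\alpha)=\lambda V^{\nu}_n(s,\alpha)$ as extended-valued quantities; in particular $V^{\nu}_n(0,\alpha)=\lambda V^{\nu}_n(0,\alpha)$ forces $V^{\nu}_n(0,\alpha)=0$ only if $V^{\nu}_n(0,\alpha)<\infty$, and the representation $s^+\lambda^a_n-s^-\lambda^b_n$ with genuine (finite, adapted) processes requires $V^{\nu}_n(\pm1,\alpha)$ to be finite. If, say, a profitable roundtrip exists, scaling it up gives $V^{\nu}_n(0,\alpha)=+\infty$ and the representation fails at $s=0$. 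It is precisely the existence of an admissible optimal control (so that $V^{\nu}_n(s,\alpha)=J^{(\hat p,\hat q,\hat r)}(n,s,\alpha,\nu)<\infty$ a.s.), combined with admissibility of $\nu$ (which gives $V^{\nu}_n(s,\alpha)\geq\EE^{\alpha}_n\left(s^+p^b_N-s^-p^a_N\right)>-\infty$ via the do-nothing-then-liquidate strategy), that supplies the finiteness your argument implicitly uses. With that caveat made explicit, the proof stands.
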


The values of $\lambda^a(\alpha)$ and $\lambda^b(\alpha)$ can be interpreted as the \emph{expected execution prices} of the agents with beliefs $\alpha$, who are long and short the asset, respectively.

\begin{definition}\label{def:equil.def}
Consider an empirical distribution process $\mu=(\mu_n)_{n=0}^N$ and a market model, as described in Subsection \ref{se:setup}.
We say that a given LOB process $\nu$ and a control $(p,q,r)$ form an {\bf equilibrium}, if there exists a Borel set $\tilde{\mathbb{A}}\subset \mathbb{A}$, called the {\bf support} of the equilibrium, such that:
\begin{enumerate}
\item $\mu_n\left(\mathbb{\RR}\times\left(\mathbb{A}\setminus \tilde{\mathbb{A}}\right)\right)=0$, $\PP$-a.s., for all $n$,
\item $\nu$ is admissible, and $(p,q,r)$ is an optimal control for $\nu$, on the state space $\tilde{\mathbb{S}}=\RR\times\tilde{\mathbb{A}}$,

\item and, for any $n=0,\ldots,N-1$, we have, $\PP$-a.s.,
\begin{equation}\label{eq.nuplus.fixedpoint.def}
\nu^+_n((-\infty,x]) = \int_{\tilde{\mathbb{S}}} \bone_{\left\{p_n(s,\alpha)\leq x, r_n(s,\alpha)=0\right\}}\, q^+_n(s,\alpha) \mu_n(ds,d\alpha),
\,\,\,\,\,\,\forall\, x\in\RR,
\end{equation}
\begin{equation}\label{eq.numinus.fixedpoint.def}
\nu^-_n((-\infty,x]) = \int_{\tilde{\mathbb{S}}} \bone_{\left\{p_n(s,\alpha)\leq x, r_n(s,\alpha)=0\right\}}\, q^-_n(s,\alpha) \mu_n(ds,d\alpha),
\,\,\,\,\,\,\forall\, x\in\RR.
\end{equation} 
\end{enumerate}
\end{definition}

\begin{remark}
It follows from Proposition \ref{cor:piecewiseLin.new} that, in equilibrium, it is optimal for an agent with zero initial inventory to do nothing. Hence, in equilibrium, roundtrip strategies are impossible.
To allow for roundtrip strategies in equilibrium, one can, e.g., introduce an upper bound on $|q|$ or on the total inventory of an agent (as it is done, e.g., in \cite{MMS.gliq1}). However, we do not believe that such a modification would change the qualitative behavior of market liquidity as a function of trading frequency, which is the main focus of the present paper.
\end{remark}

Notice that, because the optimal controls are required to be time consistent under $\PP$, the above definition, in fact, defines a \emph{sub-game perfect equilibrium}.
It is also worth mentioning that Definition \ref{def:equil.def} defines a \emph{partial equilibrium}, as the empirical distribution process $\mu$ is given exogenously. 
A more traditional version of Nash equilibrium would require $\mu$ to be determined by the initial distribution and the values of the state processes:
\begin{equation}\label{eq.endog.mu}
\mu_n = \mu_0 \circ \left( (s,\alpha)\mapsto \left(S_n^{(p,q,r)}(0,s,\nu),\alpha\right) \right)^{-1},
\end{equation}
which must hold $\PP$-a.s., for all $n=0,\ldots,N$, with $S_n^{(p,q,r)}(0,s,\nu)$ defined via (\ref{eq.stateProc.def}), in addition to the other conditions in Definition \ref{def:equil.def}.
Nevertheless, we choose not to enforce the condition (\ref{eq.endog.mu}) in the definition of equilibrium, in order to allow new agents to enter the game, which, in effect, amounts to modeling $\mu$ exogenously. If one assumes that no new agent arrives to the market, then, the fixed-point condition (\ref{eq.endog.mu}) has to be enforced.
Note also that our interpretation of the demand curve $D_n(\cdot)$ implies that it consists of both the external (i.e., due external investors) and internal (i.e., due to the agents) market orders. Therefore, it may be reasonable to consider an additional consistency condition for an equilibrium. A part of this condition is to ensure that a non-zero mass of agents submit market buy orders only if the fundamental price rises above the ask price (i.e., only if a market buy order is actually executed), and, similarly, a non-zero mass of agents submit market sell orders only if the fundamental price falls below the bid price. We assume that the agents' market orders enter into the demand curve with the highest level of priority: e.g., their market buy orders enter the demand curve at the price level infinitesimally close to, but below, the fundamental price, in order to guarantee that they are the first ones to be executed. Thus, another part of the aforementioned consistency condition is to ensure that the absolute value of the demand curve to the left or to the right of the fundamental price is sufficiently large to account for all internal market orders. Mathematically, such consistency condition can be formulated as follows:
\begin{equation}\label{eq.endog.D.1}
d^b_n:=\mu_n\left( \left\{(s,\alpha)\,:\,q_n(s,\alpha)<0,\,r_n(s,\alpha)=1\right\}\right)>0\,\,
\Rightarrow\,\,p^0_{n+1}>p^a_{n},\,\,\lim_{p\uparrow p^0_{n+1}}D^+_{n+1}(p)\geq d^b_n,
\end{equation}
\begin{equation}\label{eq.endog.D.2}
d^a_n:=\mu_n\left( \left\{(s,\alpha)\,:\,q_n(s,\alpha)>0,\,r_n(s,\alpha)=1\right\}\right)>0\,\,
\Rightarrow\,\,p^0_{n+1}<p^b_{n},\,\,\lim_{p\downarrow p^0_{n+1}}D^-_{n+1}(p)\geq d^a_n.
\end{equation}
The above conditions become redundant if the agents never submit market orders in equilibrium.
Section 8 of the extended version of this paper, \cite{GaydukNadtochiy1}, shows how to construct an equilibrium which satisfies condition (\ref{eq.endog.mu}), and in which the agents never submit market orders (hence, (\ref{eq.endog.D.1}) and (\ref{eq.endog.D.2}) are also satisfied). However, it is important to emphasize that the main results of the present work (cf. Section \ref{se:main}) provide necessary conditions for {\bf all} equilibria: for those satisfying the conditions (\ref{eq.endog.mu}), (\ref{eq.endog.D.1}), (\ref{eq.endog.D.2}) and for the ones that do not.

\begin{remark}
Let us comment on the information structure of the game. In the present setting, all agents observe the same information, given by the filtration $\FF$. 
We consider an open-loop Nash equilibrium, in which the agent's strategy is viewed as an adapted stochastic process (rather than a function of the states and controls of other players), and the definition of optimality is chosen accordingly.
In addition, as $\mu$ is adapted to $\FF$, each agent has complete information about the present and past states of other agents, and their beliefs. 
However, as the agents use different (subjective) measures $\{\PP^{\alpha}\}$, their views on the future values of $\mu$  may be different. Of course, it would be more realistic to assume that the agents do not have complete information about each other's current states, but this would make the problem significantly more complicated. 
In the present setting, the agents also have complete information about the current location of the fundamental price. In our follow-up paper, \cite{GaydukNadtochiy2}, we relax this assumption, which allows us to develop a more realistic model for the ``local" behavior of an individual agent. However, such a relaxation does not seem necessary for the questions analyzed herein.

As all agents use the same information, the present article belongs to the strand of literature that attempts to explain microstructure phenomena without information asymmetry (cf. \cite{MMS.g3}, \cite{MMS.g6}, \cite{MMS.g1}, \cite{MMS.g2}). Nevertheless, it is important to mention that information asymmetry arises ex-post, between the market participants submitting market and limit orders. This asymmetry is not due to superior information a priori available to any of the agents. Instead, it stems from the very nature of limit orders, which are ``passive" by design (cf. the discussion on the last paragraph of Subsection \ref{se:setup}). Similar observation is made in \cite{MMS.g3}.
\end{remark}

Next, we need to add another condition to the notion of equilibrium. Notice that equations (\ref{eq.nuplus.fixedpoint.def})--(\ref{eq.numinus.fixedpoint.def}) should serve as the fixed-point constraints that enable one to obtain the optimal controls $(p,q,r)$, along with the LOB $\nu$. However, these equations only hold for $n=0,\ldots,N-1$: indeed, the agents do not need to choose their controls at time $n=N$, as the game is over and their residual inventory is marked to the bid and ask prices. However, the terminal bid and ask prices are determined by the LOB $\nu_N$, which, in turn, can be chosen arbitrarily. To avoid such ambiguity, we impose an additional constraint on the equilibria studied herein.
First, we introduce the notion of a \emph{fundamental price}.

\begin{definition}\label{def:het.p0}
Assume that $\PP$-a.s., for any $n=1,\ldots,N$, there exists a unique $p^0_n$ satisfying $D_n\left(p^0_n\right)=0$. Then, the adapted process $(p^0_n)_{n=1}^N$ is called the {\bf fundamental price process}.
\end{definition}

Whenever the notion of a fundamental price is invoked, we assume that it is well defined.
The intuition behind $p^0$ is clear: it is a price level at which the immediate demand is balanced. However, it is important to stress that we do not assume that the asset can be traded at the fundamental price level. Rather, $p^0$ is a feature of the abstract current demand curve, whereas all actual trading happens on the exchange, against the current LOB. This aspect of our setting differs from many other approaches in the literature.

\begin{definition}\label{def:het.}
Assume that the fundamental price is well defined and denote $\xi_N = p^0_N - p^0_{N-1}$. Then, an equilibrium with LOB $\nu$ is {\bf linear at terminal crossing (LTC)} if 
\begin{equation}\label{eq.LTC.def}
\nu_N = \nu_{N-1}\circ (x\mapsto x+\xi_N)^{-1},\,\,\,\,\,\,\,\,\PP\text{-a.s.}
\end{equation}
\end{definition}

The above definition assumes that the terminal LOB $\nu_N$ is obtained from $\nu_{N-1}$ by a simple shift, with the size of the shift equal to the increment in the fundamental price.
This definition connects the LOB at the terminal time with the demand process, ruling out many unnatural equilibria. 
In particular, the question of existence of an equilibrium becomes non-trivial.
However, the mere existence of an equilibrium is not the main focus of the present work:
the existence results, established herein, are limited to Section \ref{se:examples}, which constructs an LTC equilibrium in a specific Gaussian random walk model (a slightly more general existence result is given in Section 8 of the extended version of this paper, \cite{GaydukNadtochiy1}).
What is central to the present investigation is the observation that the agents may reach an equilibrium in which one side of the LOB becomes empty (as demonstrated by the example of Section \ref{se:examples}). We call such LOB, and the associated equilibrium, \emph{degenerate}.

\begin{definition}
We say that an equilibrium with LOB $\nu$ is {\bf non-degenerate} if $\nu^{+}_n(\RR)>0$ and $\nu^{-}_n(\RR)>0$, for all $n=0,\ldots,N-1$, $\PP$-a.s..
\end{definition}

Intuitively, the degeneracy of the LOB refers to a situation where, with positive probability, one side of the LOB disappears from the market: i.e., $\nu^+_n(\RR)$ or $\nu^-_n(\RR)$ becomes zero. Clearly, this happens when the agents who are supposed to provide liquidity choose to post market orders (i.e. consume liquidity) or wait (neither provide nor consume liquidity). Such a degeneracy can be interpreted as the \emph{endogenous liquidity crisis} -- the one that arises purely from the interaction between the agents, and cannot be justified by any fundamental economic reasons (e.g., the external demand for the asset may still be high, on both sides). Taking an optimistic point of view, we assume that the agents choose a non-degenerate equilibrium, whenever one is available. However, if a non-degenerate equilibrium does not exist, an endogenous liquidity crisis may occur with positive probability. One of the main goals of this paper is to provide insights into the occurrence of an endogenous liquidity crisis and its relation to trading frequency.

\section{Example: a Gaussian random walk model}
\label{se:examples}

In this section, we consider a specific market model for the external demand $D$, to construct a non-degenerate LTC equilibrium.
More importantly, using this model, we illustrate the liquidity effects of trading frequency. The present example, albeit very simplistic, enables us to identify the important changes in the optimal strategies of the agents (and, hence, to the LOB) as the trading frequency increases.
In particular, we demonstrate how the \emph{adverse selection} effect may be amplified disproportionally by the high trading frequency and may cause a liquidity crisis. Note that the adverse selection phenomenon, in the present setting, is not a consequence of any ex-ante information asymmetry but is due to the mechanics of the exchange (i.e., the nature of limit orders), which is similar to the phenomena documented in \cite{MMS.g3}, \cite{MMS.g2}.
In the rest of the paper, we show that the conclusions of this section are not due to the particular choice of a model made in the present section and, in fact, persist in a much more general setting.

On a complete stochastic basis $(\Omega,\tilde{\FF}=(\tilde{\mathcal{F}}_t)_{t\in[0,T]},\PP)$, we consider a continuous time process $\tilde{p}_0$:
\begin{equation}\label{eq.p0.BM}
\tilde{p}^0_t = p^0_0 + \alpha t + \sigma W_t,\quad p^0_0 \in \RR, \quad t\in[0,T],
\end{equation} 
where $\alpha\in\RR$ and $\sigma>0$ are constants, and $W$ is a Brownian motion. 
We also consider an arbitrary progressively measurable random field $(\tilde{D}_t(p))$, s.t., $\PP$-a.s., the function $\tilde{D}_t(\cdot)-\tilde{D}_s(\cdot)$ is strictly decreasing and vanishing at zero, for any $0\leq s < t\leq T$.
Finally, we introduce the empirical distribution process $(\tilde{\mu}_t)$, with values in the space of finite sigma-additive measures on $\mathbb{S}$.
We partition the time interval $[0,T]$ into $N$ subintervals of size $\Delta t=T/N$.
A discrete time model is obtained by discretizing the continuous time one\footnote{In order to ensure the existence of regular conditional probabilities for the discrete time model, we can, for example, assume that $\tilde{\mathcal{F}}_T$ is generated by a random element with values in a standard Borel space.}
$$
\mathcal{F}_n = \tilde{\mathcal{F}}_{n\Delta t},\quad p^0_n = \tilde{p}^0_{n\Delta t},
\quad D_n(p) = (\tilde{D}_{n\Delta t}-\tilde{D}_{(n-1)\Delta t})(p-p^0_n),\quad \mu_n = \tilde{\mu}_{n\Delta t}.
$$
In this section, for simplicity, we assume that the set of agents' beliefs is a singleton: $\mathbb{A}=\left\{\alpha\right\}$ and $\PP^{\alpha}=\PP$.
We also assume that (at least, from the agents' point of view) there are always some long and short agents present in the market: $\mu_n\left((0,\infty)\times\mathbb{A}\right),\mu_n\left((-\infty,0)\times\mathbb{A}\right)>0$, $\PP$-a.s., for all $n$.
Clearly, $N$ represents the trading frequency, and the continuous time model represents the ``limiting model," which the agents use as a benchmark, in order to make consistent predictions in the markets with different trading frequencies. We assume that the benchmark model is fixed, and $N$ is allowed to vary.
In the remainder of this section, we propose a method for constructing a non-degenerate LTC equilibrium in the above discrete time model. We show that the method succeeds for any $(N,\sigma)$ if $\alpha=0$. However, for $\alpha\neq 0$, we demonstrate numerically that the method fails as $N$ becomes large enough. We show why, precisely, the proposed construction fails, providing an economic interpretation of this phenomenon. Moreover, we analyze the market close to the moment when a non-degenerate equilibrium fails to exist and demonstrate that the agents' behavior at this time follows the pattern typical for an endogenous liquidity crisis.

In view of Proposition \ref{cor:piecewiseLin.new}, in order to construct a non-degenerate LTC equilibrium, we need to find a control $(\hat{p},\hat{q},\hat{r})$, and the expected execution prices $(\hat{\lambda}^a,\hat{\lambda}^b)$, s.t. the value function of an agent with inventory $s$ is given by $V_n(s) = s^+\hat{\lambda}^a_n - s^-\hat{\lambda}^b_n$, and it is attained by the strategy $(\hat{p},\hat{q},\hat{r})$. In addition, we need to  find a non-degenerate LOB $\nu$, s.t.  (\ref{eq.nuplus.fixedpoint.def}), (\ref{eq.numinus.fixedpoint.def}) and (\ref{eq.LTC.def}) hold.
Our ansatz is as follows
$$
\nu_n = \left(h^a_n \delta_{p^a_n}, h^b_n \delta_{p^b_n}\right),
\quad p^a_n  = \hat{p}^a_n + p^0_n,
\quad p^b_n = \hat{p}^b_n + p^0_n,
\quad -\infty < \hat{p}^b_n,\, \hat{p}^a_n<\infty,
$$
$$
\hat{p}_n(s) = p^a_n\bone_{\{s>0\}} + p^b_n\bone_{\{s<0\}},
\quad \hat{q}_n(s) = s,
\quad \hat{r}_n(s) = 0,
\quad \lambda^a_n = \hat{\lambda}^a_n + p^0_n,
\quad \lambda^b_n = \hat{\lambda}^b_n + p^0_n,
$$
where $\delta$ is the Dirac measure, $(\hat{p}^a,\hat{p}^b,\hat{\lambda}^a,\hat{\lambda}^b)$ are deterministic processes, and $h^a_n=\int_0^{\infty}s\mu_n(ds)>0$, $h^b_n=\int_{-\infty}^0|s|\mu_n(ds)>0$.
With such an ansatz, the conditions (\ref{eq.nuplus.fixedpoint.def}), (\ref{eq.numinus.fixedpoint.def}) are satisfied automatically.
Thus, we only need to choose finite deterministic processes $(\hat{p}^a,\hat{p}^b,\hat{\lambda}^a,\hat{\lambda}^b)$ s.t.: $\hat{p}^a_N = \hat{p}^a_{N-1}$, $\hat{p}^b_N = \hat{p}^b_{N-1}$ (so that the equilibrium is LTC) and the associated $(\hat{p},\hat{q},0)$ form an optimal control, producing the value function $V_n(s) = s^+\lambda^a_n - s^-\lambda^b_n$.
Appendix A contains necessary and sufficient conditions for characterizing such families $(p^a,p^b,\lambda^a,\lambda^b)$.
In particular, we deduce from Corollaries \ref{cor:piecewiseLin} and \ref{cor:piecewiseLin.verif} that $(\hat{p}^a_{N-1},\hat{p}^b_{N-1},\hat{\lambda}^a_{N-1},\hat{\lambda}^b_{N-1})$ form a suitable family in a single-period case, $[N-1,N]$, if they solve the following system:
\begin{equation}\label{eq.ex.singleStep.1}
\left\{
\begin{array}{l}
{\hat{p}^a_{N-1} \in \text{arg}\max_{p\in\RR} \EE\left((p - \hat{p}^b_{N-1} - \xi) \bone_{\left\{ \xi > p \right\}}\right),
\quad \hat{p}^b_{N-1}<0,\phantom{\frac{\frac{1}{2}}{2}}}\\
{\hat{p}^b_{N-1} \in \text{arg}\max_{p\in\RR} \EE\left((\hat{p}^a_{N-1} - p + \xi) \bone_{\left\{ \xi < p \right\}}\right),
\quad \hat{p}^a_{N-1}>0,\phantom{\frac{\frac{1}{2}}{2}}}\\
{\hat{\lambda}^a_{N-1} = \hat{p}^b_{N-1} + \alpha\Delta t + \EE\left((\hat{p}^a_{N-1} - \hat{p}^b_{N-1} - \xi) \bone_{\left\{ \xi > \hat{p}^a_{N-1} \right\}}\right), \phantom{\frac{\frac{\frac{1}{2}}{2}}{2}}}\\
{\hat{\lambda}^b_{N-1} = \hat{p}^a_{N-1} + \alpha\Delta t - \EE\left((\hat{p}^a_{N-1} - \hat{p}^b_{N-1} + \xi) \bone_{\left\{ \xi < \hat{p}^b_{N-1} \right\}}\right), \phantom{\frac{\frac{\frac{1}{2}}{2}}{2}}}\\
{\hat{p}^b_{N-1}\leq \hat{\lambda}^a_{N-1},
\quad \hat{\lambda}^b_{N-1} \leq \hat{p}^a_{N-1},
\quad \hat{p}^a_{N-1} \geq \hat{p}^b_{N-1} + |\alpha|\Delta t, \phantom{\frac{\frac{1}{2}}{2}}}
\end{array}
\right.
\end{equation}
where $\xi=\Delta p^0_N\sim \mathcal{N}(\alpha\Delta t, \sigma^2 \Delta t)$. Let us comment on the economic meaning of the equations in (\ref{eq.ex.singleStep.1}).
The expectations in the first two lines represent the \emph{relative expected profit} from executing a limit order at time $N$, at the chosen price level $p+p^0_{N-1}$, versus marking the inventory to market at time $N$, at the best price available on the other side of the book: i.e., $p^b_N = \hat{p}^b_{N-1} + \xi + p^0_{N-1}$ or $p^a_N = \hat{p}^a_{N-1} + \xi + p^0_{N-1}$.
Notice that a limit order is executed if and only if the fundamental price at time $N$ is above or below the chosen limit order: i.e., if $p^0_{N-1} + \xi > p + p^0_{N-1}$ or $p^0_{N-1} + \xi < p + p^0_{N-1}$.\footnote{The execution of limit orders simplifies in the chosen ansatz, because the agents on each side of the book (i.e., long or short) post orders at the same prices.}
Clearly, it is only optimal for an agent to post a limit order if the relative expected profit is nonnegative, which is the case if and only if $\hat{p}^b_{N-1}<0<\hat{p}^a_{N-1}$. 
The third and fourth lines in (\ref{eq.ex.singleStep.1}) represent the expected execution prices of the agents at time $N-1$, assuming they use the controls given by $(\hat{p}^a_{N-1},\hat{p}^b_{N-1})$. Each of the right hand sides is a sum of two components: the relative expected profit from posting a limit order and the expected value of marking to market at time $N$, measured relative to $p^0_{N-1}$.
Let us analyze the inequalities in the last line of (\ref{eq.ex.singleStep.1}).
If the bid price at time $N-1$ exceeds the expected execution price of a long agent, i.e., $\hat{p}^b_{N-1} + p^0_{N-1}> \hat{\lambda}^a_{N-1}+ p^0_{N-1}$, then every agent with positive inventory prefers to submit a market order, rather than a limit order, at time $N-1$, which causes the ask side of the LOB to degenerate. Similarly, we establish $\hat{\lambda}^b_{N-1} \leq \hat{p}^a_{N-1}$.
Finally, if $\alpha>0$ and $\hat{p}^a_{N-1} < \hat{p}^b_{N-1} + \alpha\Delta t$, an agent may buy the asset using a market order at time $N-1$, at the price $\hat{p}^a_{N-1}+p^0_{N-1}$, and sell it at time $N$, at the expected price $\hat{p}^b_{N-1} + p^0_{N-1} + \alpha \Delta t > \hat{p}^a_{N-1}+p^0_{N-1}$ (a reverse strategy works for $\alpha<0$). This strategy can be scaled to generate infinite expected profit and, hence, is excluded by the last inequality in the last line of (\ref{eq.ex.singleStep.1}).

We construct a solution to (\ref{eq.ex.singleStep.1}) by solving a fixed-point problem given by the first two lines of (\ref{eq.ex.singleStep.1}) and verifying that the desired inequalities hold.\footnote{In fact, it is not difficult to prove rigorously that, for any $(\alpha,\sigma)$, there exists a unique solution to such a system, provided $\Delta t$ is small enough. We omit this result for the sake of brevity.}
We implement this computation in MatLab, and the results can be seen as the right-most points on the graphs in Figure \ref{fig:2}.
From the numerical solution, we see that, whenever $\Delta t$ is small enough, the conditions $\hat{p}^b_{N-1}\leq \hat{\lambda}^a_{N-1}$ and $\hat{\lambda}^b_{N-1} \leq \hat{p}^a_{N-1}$ are satisfied (cf. the right part of Figure \ref{fig:2}).\footnote{This is easy to explain intuitively, as the optimal objective values in the first two lines of (\ref{eq.ex.singleStep.1}) are of the form $C\sqrt{\Delta t} + \alpha \underline{\underline{O}}(\Delta t)$.}
In addition, for $\alpha\geq0$, we have
$$
0 < \EE\left(\hat{p}^a_{N-1} - \hat{p}^b_{N-1} - \xi\,\vert\,\xi > \hat{p}^a_{N-1} \right)
= \hat{p}^a_{N-1} - \hat{p}^b_{N-1} - \EE\left(\xi\,\vert\,\xi > \hat{p}^a_{N-1} \right)
\leq \hat{p}^a_{N-1} - \hat{p}^b_{N-1} - \alpha\Delta t,
$$
which yields the last inequality in (\ref{eq.ex.singleStep.1}). The case of $\alpha<0$ is treated similarly.
Notice that $\hat{\lambda}^a_N = \hat{p}^b_N = \hat{p}^b_{N-1}$ and $\hat{p}^a_{N-1} = \hat{p}^a_N = \hat{\lambda}^b_N$. Thus, the single-period equilibrium we have constructed satisfies:
\begin{equation}\label{eq.example.1period.ineq.1}
\hat{p}^b_{n}\leq \hat{\lambda}^a_{n},\quad \hat{\lambda}^b_{n} \leq \hat{p}^a_{n},
\quad \hat{\lambda}^a_{n+1} < 0,\quad \hat{\lambda}^b_{n+1}>0,
\end{equation}
for $n=N-1$.
If one of the first two inequalities in (\ref{eq.example.1period.ineq.1}) fails, the agents choose to submit market orders, as opposed to limit orders, which leads to \emph{degeneracy} of the LOB -- one side of it disappears. 
If one of the last two inequalities fails, the execution of a limit order, at any price level, yields a negative relative expected profit for the agents on one side of the book (given by the expectation in the first or second line of (\ref{eq.ex.singleStep.1})). As a result, it becomes optimal for all such agents to stop posting any limit orders, and the LOB degenerates. The latter is interpreted as the \emph{adverse selection} effect. For example, if the third inequality in (\ref{eq.example.1period.ineq.1}) fails, then, every long agent believes that, no matter the price at which her limit order is posted, if it is executed in the next time period, her expected execution price at the next time step will be higher than the price at which the limit order is executed. Hence, it suboptimal to post a limit order at all.

In a single period $[N-1,N]$, by choosing small enough $\Delta t$, we can ensure that the inequalities in (\ref{eq.example.1period.ineq.1}) are satisfied. However, it turns out that, as we progress recursively, constructing an equilibrium, we may encounter a time step at which one of the inequalities in (\ref{eq.example.1period.ineq.1}) fails, implying that a non-degenerate LTC equilibrium cannot be constructed for the given time period (at least, using the proposed method).
To see this, consider the recursive equations for $(\hat{p}^a,\hat{\lambda}^a)$ (which are chosen to satisfy the conditions of Corollary \ref{cor:piecewiseLin}, in Appendix A, given our ansatz):
\begin{equation}\label{eq.RW.pan}
\left\{
\begin{array}{l}
{\hat{p}^a_n \in \text{arg}\max_{p\in\RR} \EE \left(\left(p-\hat{\lambda}^a_{n+1} - \xi\right) \bone_{\left\{ \xi> p \right\}}\right),
\phantom{\frac{\frac{1}{2}}{2}}}\\
{\hat{\lambda}^a_n = \hat{\lambda}^a_{n+1} + \alpha\Delta t + \EE \left( \left(\hat{p}^a_n - \hat{\lambda}^a_{n+1} - \xi\right) \bone_{\left\{\xi> \hat{p}^a_n ) \right\}} \right) <0,\phantom{\frac{\frac{1}{2}}{2}}}
\end{array}
\right.
\end{equation}
and similarly for $(\hat{p}^b,\hat{\lambda}^b)$.
Using the properties of the Gaussian distribution, it is easy to see that, if $\hat{\lambda}^a_{n+1}<0$, we have $\hat{p}^a_n>0$.
Similar conclusion holds for $(\hat{\lambda}^b,\hat{p}^b)$.
Thus, if $\hat{\lambda}^a_k< 0 < \hat{\lambda}^b_k$, for $k=n+1,\ldots,N$, our method allows us to construct a non-degenerate LTC equilibrium on the time interval $[n,N]$, with $\hat{p}^b<0<\hat{p}^a$.
Such a construction always succeeds if the agents are market-neutral: i.e., $\alpha=0$. Indeed, in this case, assuming $\hat{\lambda}^a_{n+1} < 0 < \hat{\lambda}^b_{n+1}$, we have $\hat{p}^b_n < 0 < \hat{p}^a_n$ and
$$
\hat{\lambda}^a_{n+1} + \left(\EE \left( \left(\hat{p}^a_n - \hat{\lambda}^a_{n+1} - \xi\right) \bone_{\left\{\xi> \hat{p}^a_n ) \right\}} \right)\right)^+
= \EE\left( \hat{\lambda}^a_{n+1} \bone_{\left\{\xi> \hat{p}^a_n ) \right\}} \right)
+ \EE \left( \left(\hat{p}^a_n - \xi\right) \bone_{\left\{\xi> \hat{p}^a_n ) \right\}} \right)
< 0.
$$
Hence, $\hat{\lambda}^a_n < 0$, and, similarly, we deduce that $\hat{\lambda}^b_n>0$. By induction, we obtain a non-degenerate LTC equilibrium on $[0,N]$, for any $(N,\sigma)$, as long as $\alpha=0$.
Corollary \ref{prop:main.smallspread} shows that, as $N\rightarrow\infty$, the processes $(\hat{\lambda}^a,\hat{\lambda}^b)$ converge to zero, which means that the expected execution prices converge to the fundamental price. The latter is interpreted as \emph{market efficiency} in the high-frequency trading regime: any market participant expects to buy or sell the asset at the fundamental price. The left hand side of Figure \ref{fig:3} shows that the bid and ask prices also converge to the fundamental price if $\alpha=0$. This can be interpreted as a \emph{positive liquidity effect} of increasing the trading frequency.

However, the situation is quite different if $\alpha\neq 0$. Assume, for example, that $\alpha>0$. Then, the second line of (\ref{eq.RW.pan}) implies that $\hat{\lambda}^a$ increases by, at least, $\alpha\Delta t$ at each step of the (backward) recursion. Recall that the number of steps is $N=T/\Delta t$, hence, $\hat{\lambda}^a_0 \geq \hat{\lambda}^a_N + \alpha T$. If $|\hat{\lambda}^a_N|$ is small (which is typically the case if $N$ is large), then, we may obtain $\hat{\lambda}^a_{n+1}\geq 0$, at some time $n$, which violates the third inequality in (\ref{eq.example.1period.ineq.1}), or, equivalently, implies that the objective in the first line of (\ref{eq.RW.pan}) is strictly negative for all $p$. The latter implies that it is suboptimal for the agents with positive inventory to post limit orders, and the proposed method fails to produce a non-degenerate LTC equilibrium in the interval $[n,N]$. Figure \ref{fig:2} shows that this does, indeed, occur. Figures \ref{fig:2} and \ref{fig:3} also show that, for a given (finite) frequency $N$, if $|\alpha|$ is small enough, a non-degenerate equilibrium may still be constructed. Nevertheless, for any $|\alpha|\neq0$, however small it is, there exists a large enough $N$, s.t. the non-degenerate LTC equilibrium fails to exist (at least, within the class defined by the proposed method). This is illustrated in Figure \ref{fig:3}.

It is important to provide an economic interpretation of why such degeneracy occurs. A careful examination of Figure \ref{fig:2} reveals that, around the time when $\hat{\lambda}^a$ becomes nonnegative, the ask price $\hat{p}^a$ explodes. This means that the agents who want to sell the asset are only willing to sell it at a very high price. Notice also that this price is several magnitudes larger than the expected change in the fundamental price (represented by the black dashed line in the left hand side of Figure \ref{fig:2}). Hence, such a behavior cannot be justified by the behavior of the fundamental. Indeed, this is precisely what is called an \emph{endogenous liquidity crisis}. So, what causes such a liquidity crisis? Recall that there are two potential reasons for the market to degenerate: agents may choose to submit market orders (if $\hat{p}^b_n>\hat{\lambda}^a_n$ or $\hat{p}^a_n<\hat{\lambda}^b_n$), or they may choose to wait and do nothing (if $\hat{\lambda}^a_{n+1}\geq 0$ or $\hat{\lambda}^b_{n+1}\leq 0$). The right hand side of Figure \ref{fig:2} shows that the degeneracy is caused by the second scenario. This means that the naive explanation of an endogenous liquidity crisis, based on the claim that, in a bullish market, those who need to buy the asset will submit market orders wiping out liquidity on the sell side of the book, is wrong. Instead, if the agents on the sell side of the book have the same beliefs, they will increase the ask price so that it is no longer profitable for the agents who want to buy the asset to submit market buy orders. In fact, the ask price may increase disproportionally to the expected change in the fundamental price (i.e., the signal), and this is what causes an endogenous liquidity crisis. The size of the resulting change in the bid or ask price depends not only on the signal, but also on the trading frequency, which demonstrates the \emph{negative liquidity effect} of increasing the trading frequency: it fragilizes the market with respect to deviations of the agents from market-neutrality.
The latter, in turn, is explained by the fact that higher trading frequency exacerbates the \emph{adverse selection} effect. To see this, consider, e.g., an agent who is trying to sell one share of the asset. Increasing the trading frequency increases the expected execution value of this agent, bringing it closer to the fundamental price: this corresponds to $\hat{\lambda}^a$ approaching zero (from below). 
Assume that the agent posts a limit sell order at a price level $p$. If this order is executed in the next period, then, the agent receives $p$, but, for this to happen, the fundamental price value at the next time step, $p^0_{n+1}$, has to be above $p$. On the other hand, the expected execution price of the agent at the next time step is $p^0_{n+1} + \hat{\lambda}^a_{n+1}$.
Thus, the expected relative profit, given the execution of her limit order, is $\EE_n (p - p^0_{n+1} - \hat{\lambda}^a_{n+1} \, |\,p^0_{n+1}>p )$. The latter expression cannot be positive, unless $\hat{\lambda}^a_{n+1}<0$ and $|\hat{\lambda}^a_{n+1}|$ is sufficiently large.
Therefore, if $|\hat{\lambda}^a_{n+1}|$ is small relative to $\EE_n (p^0_{n+1} - p\, |\,p^0_{n+1}>p)$, the
agent is reluctant to post a limit order at the price level $p$. Hence, $p$ needs to be sufficiently large, to ensure that $\EE_n (p^0_{n+1} - p\, |\,p^0_{n+1}>p )$ is smaller than $|\hat{\lambda}^a_{n+1}|$ (in the Gaussian model of this section, the latter expectation vanishes as $p\rightarrow\infty$) -- and the extent to which $p$ needs to increase determines the effect of adverse selection.
It turns out that, if the agents are market-neutral (i.e. $\alpha=0$), as the frequency $N$ increases, the quantity $\EE_n (p^0_{n+1} - p\, |\,p^0_{n+1}>p )$, for any fixed $p$, converges to zero at the same rate as $|\hat{\lambda}^a_{n+1}|$, hence, the above adverse selection effect does not get amplified. On the contrary, if the agents are not market-neutral, $\hat{\lambda}^a_{n+1}$ reaches zero at some high enough (but finite) frequency, while $\EE_n (p^0_{n+1} - p\, |\,p^0_{n+1}>p )$ remains strictly positive, for any finite $p$, which produces an ``infinite" adverse selection effect and causes the market to degenerate.
Of course, so far, these conclusions are based on a very specific example and on a particular method of constructing an equilibrium. The next section shows that they remain valid in any model (with, possibly, heterogeneous beliefs) in which the fundamental price is given by an It{\^o} process. 

It is worth mentioning that a similar adverse selection effect arises in \cite{MMS.g3}, and it is referred to as the ``winner's curse" in \cite{MMS.g2}. However, the latter papers do not investigate the nature of this phenomenon and focus on other questions instead. In the literature on double auctions (cf. \cite{DA.DuZhu}, \cite{DA.Vayanos}), a similar effect arises when the auction participants choose to decrease their trading activity in a given auction, because they expect many more opportunities to trade in the future. The latter is similar to the agents choosing to forgo limit orders and wait, in the present example.

\section{Main results}
\label{se:main}

In this section, we generalize the previous conclusions, so that they hold in a general model and for any equilibrium. As before, we begin with the ``limiting" continuous time model.
Consider a terminal time horizon $T>0$ and a complete stochastic basis $(\Omega,\tilde{\FF}=(\tilde{\mathcal{F}}_t)_{t\in[0,T]},\PP)$, with a Brownian motion $W$ on it.\footnote{In order to ensure the existence of regular conditional probabilities for the discrete time model, we can, for example, assume that $\tilde{\mathcal{F}}_T$ is generated by a random element with values in a standard Borel space.} 
We define the adapted process $\tilde{p}^0$ as a continuous modification of
\begin{equation}\label{eq.p0.cont}
\tilde{p}^0_t = p^0_0 + \int_0^t \sigma_s dW_s,\,\,\,\,\,\,\,\,\,\,\,p^0_0 \in \RR,
\end{equation}
where $\sigma$ is a progressively measurable locally square integrable process. 

\begin{ass}\label{ass:sigma}
There exists a constant $C>1$, such that, $1/C\leq \sigma_t\leq C$, for all $t\in[0,T]$, $\PP$-a.s..
\end{ass}

Consider a Borel set of beliefs $\mathbb{A}$ and the associated family of measures $\left\{\PP^{\alpha}\right\}_{\alpha\in\mathbb{A}}$ on $(\Omega,\tilde{\mathcal{F}}_T)$, absolutely continuous with respect to $\PP$.
Then, for any $\alpha\in\mathbb{A}$, we have
\begin{equation*}\label{eq.p0.cont.a}
\tilde{p}^0_t = p^0_0 + A^{\alpha}_t + \int_0^t \sigma_s dW^{\alpha}_s,\quad p^0_0 \in \RR,
\quad \PP^{\alpha}\text{-a.s.},\,\,\forall t\in[0,T],
\end{equation*}
where $W^{\alpha}$ is a Brownian motion under $\PP^{\alpha}$, and $A^{\alpha}$ is a process of finite variation.  
We assume that $A^{\alpha}$ is absolutely continuous: i.e., for any $\alpha\in\mathbb{A}$, there exists a locally integrable process $\mu^{\alpha}$, such that
$$
A^{\alpha}_t = \int_0^t \mu^{\alpha}_s ds,\quad \PP^{\alpha}\text{-a.s.},\,\,\forall t\in[0,T].
$$

\begin{ass}\label{ass:A.alpha}
For any $\alpha\in\mathbb{A}$, the process $\mu^{\alpha}$ is $\PP$-a.s. right-continuous, and there exists a constant $C>0$, such that $|\mu^{\alpha}_t| \leq C$, for all $t\in[0,T]$, $\PP$-a.s..
\end{ass}

Thus, we can rewrite the dynamics of $\tilde{p}^0$, under each $\PP^{\alpha}$, as follows: $\PP^{\alpha}$-a.s., the following holds for all $t\in[0,T]$
\begin{equation}\label{eq.p0.cont.a.alpha}
\tilde{p}^0_t = p^0_0 + \int_0^t \mu^{\alpha}_s ds + \int_0^t \sigma_s dW^{\alpha}_s,\quad p^0_0 \in \RR.
\end{equation}
In addition, we modify the above stochastic integral on a set of $\PP^{\alpha}$-measure zero, so that (\ref{eq.p0.cont.a.alpha}) holds for \emph{all} $(t,\omega)$.
In what follows, we often need to analyze the future dynamics of $\tilde{p}^0$ under $\PP^{\alpha}$, conditional on $\tilde{\mathcal{F}}_t$, for various $(t,\alpha)$ simultaneously. This is why we need the following joint regularity assumption.

\begin{ass}\label{ass:joint.cond.reg}
There exists a modification of regular conditional probabilities 
$$
\left\{\tilde{\PP}^{\alpha}_t=\PP^{\alpha}\left(\cdot\,|\,\tilde{\mathcal{F}}_t\right)
\right\}_{t\in[0,T],\,\alpha\in\mathbb{A},}
$$
such that it satisfies the tower property with respect to $\PP$ (as described in Section \ref{se:setup}).
\end{ass}

Assumption \ref{ass:joint.cond.reg} is satisfied, for example, if $\PP^{\alpha}\sim \PP$, for all $\alpha\in\mathbb{A}$, or if the set $\mathbb{A}$ is countable.
Throughout the rest of the paper, $\tilde{\PP}^{\alpha}_t$ refers to a member of the family appearing in Assumption \ref{ass:joint.cond.reg}. All conditional expectations $\tilde{\EE}^{\alpha}_t$ are taken under such $\tilde{\PP}^{\alpha}_t$.

The main results of this section require additional continuity assumptions on $\sigma$ and $\mu^{\alpha}$. The following assumption can be viewed as a stronger version of $\mathbb{L}^2$-continuity of $\sigma$.

\begin{ass}\label{ass:main.L2.strong}
There exists a function $\varepsilon(\cdot)\geq0$, such that $\varepsilon(\Delta t)\rightarrow0$, as $\Delta t\rightarrow0$, and, $\PP$-a.s., 
$$
\tilde{\PP}^{\alpha}_{t} \left(\EE^{\alpha}\left(\left(\sigma_{s\vee\tau} - \sigma_{\tau} \right)^2 \,|\,\mathcal{F}_{\tau} \right) \leq \varepsilon(\Delta t) \right) = 1
$$
holds for all $t\in[0,T-\Delta t]$, all $s\in[t,t+\Delta t]$, all stopping times $t\leq\tau\leq s$, and all $\alpha\in\mathbb{A}$.
\end{ass}

The above assumption is satisfied, for example, if $\sigma$ is an It\^{o} process with bounded drift and diffusion coefficients.
Next, we state a continuity assumption on the drift, which can be interpreted as a uniform right-continuity in probability of the martingale $\tilde{\EE}^{\alpha}_{t} \mu^{\alpha}_s$.

\begin{ass}\label{ass:main.mu.cont.strong}
For any $\alpha\in\mathbb{A}$ and any $t\in[0,T)$, there exists a deterministic function $\varepsilon(\cdot)\geq0$, such that $\varepsilon(\Delta t)\rightarrow0$, as $\Delta t\rightarrow0$, and, $\PP^{\alpha}$-a.s.,
$$
\tilde{\PP}^{\alpha}_{t'} \left( \left| \int_{t}^T\left(\tilde{\EE}^{\alpha}_{t''}  \mu^{\alpha}_s - \tilde{\EE}^{\alpha}_{t'} \mu^{\alpha}_s\right) ds\right| \geq \varepsilon(\Delta t)\right) \leq \varepsilon(\Delta t)
$$
holds for all $t\leq t' \leq t'' \leq t+\Delta t\leq T$.
\end{ass}

Notice that Assumptions \ref{ass:joint.cond.reg}, \ref{ass:main.L2.strong}, and \ref{ass:main.mu.cont.strong} are not quite standard. Therefore, below, we describe a more specific (although, still, rather general) diffusion-based framework, in which the Assumptions \ref{ass:sigma}--\ref{ass:main.mu.cont.strong} reduce to standard regularity conditions on the diffusion coefficients, and are easily verified.
To this end, consider a model in which $\mu^{\alpha}_t = \bar{\mu}^{\alpha}(t,Y_t)$, $\sigma_t = \bar{\sigma}(t,Y_t)$, and, under $\PP$, the process $Y$ is a diffusion taking values in $\RR^d$
$$
dY_t = \Gamma(t,Y_t)dt + \Sigma(t,Y_t) d\bar{B}_t,
$$
where $\Gamma:[0,T]\times\RR^d\rightarrow\RR^d$, $\Sigma=(\Sigma^{i,j})$ is a mapping on $[0,T]\times\RR^d$ with values in the space of $d\times m$ matrices, and $\bar{B}$ is $m$-dimensional Brownian motion under $\PP$ (on the original stochastic basis). We assume that $\Gamma$ and $\Sigma$ possess enough regularity to conclude that $Y$ is a strongly Markov process.
Notice that Assumptions \ref{ass:sigma} and \ref{ass:A.alpha} reduce to the upper and lower bounds on the functions $\bar{\mu}^{\alpha}$ and $\bar{\sigma}$.
Assumption \ref{ass:joint.cond.reg} is satisfied if we assume that $\PP^{\alpha}\sim\PP$, for all $\alpha\in\mathbb{A}$.
Let us further assume that the Radon-Nikodym derivative of each measure is in Girsanov form:
$$
\frac{d\PP^{\alpha}}{d\PP} = \exp\left(-\frac{1}{2} \int_0^t \|\gamma^{\alpha}(s,Y_s)\|^2 ds + \int_0^t \gamma^{\alpha}(s,Y_s) d\bar{B}_s \right),
$$
with an $\RR^d$-valued function $\gamma^{\alpha}$, for each $\alpha\in\mathbb{A}$.
Let us assume that all entries of $\Gamma$, $\gamma^{\alpha}$ and $\Sigma$ are absolutely bounded by a constant (uniformly over $\alpha\in\mathbb{A}$).
Assuming, in addition, that $\bar{\sigma}$ is globally Lipschitz, we easily verify Assumption \ref{ass:main.L2.strong}.
In order to verify Assumption \ref{ass:main.mu.cont.strong}, we assume that the quadratic form generated by $A(t,y):=\Sigma(t,y) \Sigma^T(t,y)$ is bounded away from zero, uniformly over all $(t,y)$, and that the entries of $\Gamma$, $\gamma^{\alpha}$ and $\Sigma$ are continuously differentiable with absolutely bounded derivatives (uniformly over $\alpha\in\mathbb{A}$).
Then, the Feynman-Kac formula implies that, for any $t\leq s$,
$$
\tilde{\EE}^{\alpha}_t \mu^{\alpha}_s = u^{s,\alpha}(t,Y_t),
$$
where $u^{s,\alpha}$ is the unique solution to the associated partial differential equation (PDE)
$$
\partial_t u^{s,\alpha} + \sum_{i=1}^d \Gamma^{\alpha,i} \partial_{y_i} u^{s,\alpha} + \frac{1}{2}\sum_{i,j=1}^d A^{i,j} \partial^2_{y_i y_j} u^{s,\alpha} = 0,\,\,\,\,(t,y)\in (0,s)\times\RR^d,
\quad u^{s,\alpha}(s,y)=\bar{\mu}^{\alpha}(s,y),
$$
and $\Gamma^{\alpha}=\Gamma + \Sigma \gamma^{\alpha}$.
Assume that, for each $s\in[0,T]$, the function $\bar{\mu}^{\alpha}(s,\cdot)$ is continuously differentiable with absolutely bounded derivatives, uniformly over all $(s,\alpha)$. Then, the standard Gaussian estimates for derivatives of the fundamental solution to the above PDE (cf. Theorem 9.4.2 in \cite{Friedman.book}) imply that every $\partial_{y_i}u^{s,\alpha}$ is absolutely bounded, uniformly over all $(s,\alpha)$.
Then, It\^o's formula and It\^o's isometry yield, for all $t'\leq t''$ and $s\geq t'$:
$$
\tilde{\EE}^{\alpha}_{t'}\left(\tilde{\EE}^{\alpha}_{t''}  \mu^{\alpha}_s - \tilde{\EE}^{\alpha}_{t'} \mu^{\alpha}_s\right)^2
= \sum_{j=1}^m \int_{t'}^{t''\wedge s} \tilde{\EE}^{\alpha}_{t'}\left(\sum_{i=1}^d\partial_{y_i} u^{s,\alpha}(v,Y_v)\Sigma^{i,j}(v,Y_v) \right)^2 dv \leq C_1 (t''\wedge s\,-\,t'),
$$
with some constant $C_1>0$.
The above estimate and Jensen's inequality imply the statement of Assumption \ref{ass:main.mu.cont.strong} and complete the description of the diffusion-based setting.

As in Section \ref{se:examples}, we also consider a progressively measurable random field $\tilde{D}$, s.t. $\PP$-a.s. the function $\tilde{D}_t(\cdot)-\tilde{D}_s(\cdot)$ is strictly decreasing and vanishing at zero, for any $0\leq s < t \leq T$. We assume that the demand curve, $\tilde{D}_t(\cdot)-\tilde{D}_s(\cdot)$, cannot be ``too flat".

\begin{ass}\label{ass:main.demandInv.unif}
There exists $\varepsilon>0$, s.t., for any $0\leq t - \varepsilon \leq s < t \leq T$, there exists a $\tilde{\mathcal{F}}_{s}\otimes \mathcal{B}(\RR)$-measurable random function $\kappa_s(\cdot)$, s.t., $\PP$-a.s., $\kappa_{s}(\cdot)$ is strictly decreasing and $\left|\tilde{D}_t(p)-\tilde{D}_s(p)\right| \geq \left| \kappa_{s}(p) \right|$, for all $p\in\RR$.
\end{ass}

Finally, we introduce the empirical distribution process $(\tilde{\mu}_t)$, with values in the space of finite sigma-additive measures on $\mathbb{S}$.
The next assumption states that every $\tilde{\mu}_t$ is dominated by a deterministic measure. 

\begin{ass}\label{ass:dom.mu}
For any $t\in[0,T]$, there exists a finite sigma-additive measure $\mu^{0}_t$ on $\left(\mathbb{S}, \mathcal{B}\left(\mathbb{S} \right)\right)$, s.t., $\PP$-a.s., $\tilde{\mu}_t$ is absolutely continuous w.r.t. $\mu^0_t$.
\end{ass}

We partition the time interval $[0,T]$ into $N$ subintervals of size $\Delta t=T/N$.
A discrete time model is obtained by discretizing the continuous time one 
$$
\mathcal{F}_n = \tilde{\mathcal{F}}_{n\Delta t},\quad p^0_n = \tilde{p}^0_{n\Delta t},\quad D_n(p) = (\tilde{D}_{n\Delta t}-\tilde{D}_{(n-1)\Delta t})(p-p^0_n),\quad \mu_n = \tilde{\mu}_{n\Delta t}.
$$
Before we present the main results, let us comment on the above assumptions. These assumptions are important from a technical point of view, however, some of them have economic interpretation that may provide (partial) intuitive explanations of the results that follow. In particular, Assumption \ref{ass:sigma} ensures that the fundamental price remains ``noisy," which implies that an agent can execute a limit order very quickly by posting it close to the present value of $p^0$, if there are no other orders posted there. In combination with Assumption \ref{ass:main.demandInv.unif}, the latter implies that, when the frequency, $N$, is high, an agent has a lot of opportunities to execute her limit order at a price close to the fundamental price (at least, if no other orders are posted too close to the fundamental price). Intuitively, this means that the agent's execution value should improve as the frequency increases.
Assumption \ref{ass:main.mu.cont.strong} ensures that, if an agent has a signal about the direction of the fundamental price, this signal is persistent -- i.e., it is continuous in the appropriate sense. When the trading frequency $N$ is large, such persistency means that an agent has a large number of opportunities to exploit the signal, implying that she is in no rush to have her order executed immediately. The main results of this work, presented below, along with their proofs, confirm that these heuristic conclusions are, indeed, correct. 

As mentioned in the preceding sections, our main goal is to analyze the liquidity effects of increasing the trading frequency. Therefore, we fix a limiting continuous time model, and consider a sequence of discrete time models, obtained from the limiting one as described above, for $N\rightarrow\infty$. This can be interpreted as observing the same population of agents, each of whom has a fixed continuous time model for future demand, in various exchanges that allow for different trading frequencies.
We begin with the following theorem, which shows that, if every market model in a given sequence admits a non-degenerate equilibrium, then, the terminal bid and ask prices converge to the fundamental price, as the trading frequency goes to infinity.

\begin{theorem}\label{le:main.zeroTermSpread}
Let Assumptions \ref{ass:sigma}, \ref{ass:A.alpha}, \ref{ass:joint.cond.reg}, \ref{ass:main.L2.strong}, \ref{ass:main.demandInv.unif}, \ref{ass:dom.mu} hold. 
Consider a family of uniform partitions of a given time interval $[0,T]$, with diameters $\left\{\Delta t=T/N>0\right\}$ and with the associated family of discrete time models, and denote the associated fundamental price process by $p^{0,\Delta t}$. 
Assume that every such model admits a non-degenerate LTC equilibrium, and denote the associated bid and ask prices by $p^{b,\Delta t}$ and $p^{a,\Delta t}$ respectively.
Then, there exists a deterministic function $\varepsilon(\cdot)$, s.t. $\varepsilon(\Delta t)\rightarrow0$, as $\Delta t\rightarrow0$, and, for all small enough $\Delta t>0$, the following holds $\PP$-a.s.:
$$
\left|p^{a,\Delta t}_{N} - p^{0,\Delta t}_{N}\right| + \left|p^{b,\Delta t}_{N} - p^{0,\Delta t}_{N}\right| \leq \varepsilon(\Delta t)
$$
\end{theorem}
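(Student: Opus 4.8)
\emph{Proof proposal.} The plan is to reduce the statement to a one-period estimate on $[N-1,N]$ and then extract that estimate from the optimality conditions of the two agents who post at the best bid and the best ask.

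\textbf{Step 1 (reduction via LTC).} By \reff{eq.LTC.def}, $\supp\nu^{\Delta t}_N=\supp\nu^{\Delta t}_{N-1}+\xi$ with $\xi:=p^{0,\Delta t}_N-p^{0,\Delta t}_{N-1}$, hence $p^{a,\Delta t}_N-p^{0,\Delta t}_N=p^{a,\Delta t}_{N-1}-p^{0,\Delta t}_{N-1}=:a$ and $p^{b,\Delta t}_N-p^{0,\Delta t}_N=p^{b,\Delta t}_{N-1}-p^{0,\Delta t}_{N-1}=:b$, where $a,b$ are $\mathcal F_{N-1}$-measurable and finite by non-degeneracy; moreover $p^{b,\Delta t}_N=p^{0,\Delta t}_N+b$ and $p^{a,\Delta t}_N=p^{0,\Delta t}_N+a$. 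Thus it suffices to prove $|a|+|b|\le\varepsilon(\Delta t)$ $\PP$-a.s.\ with a deterministic $\varepsilon(\Delta t)\to0$. Note that by $\sigma\ge1/C$ the conditional law of $\xi$ under $\PP^{\alpha}_{N-1}$ has a smooth, strictly positive density $f^{\alpha}$, and $|\EE^{\alpha}_{N-1}\xi|\le C\Delta t$.

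\textbf{Step 2 (the marginal agents).} Since $\nu^+_{N-1}(\RR)>0$, a positive $\mu_{N-1}$-mass of long agents post limit sell orders, whose price levels have infimum $p^{a,\Delta t}_{N-1}$; using the Dynamic Programming characterisation of Appendix A (whose specialisation to $[N-1,N]$ under \reff{eq.LTC.def} mirrors \reff{eq.ex.singleStep.1}), continuity of the one-period payoff in the posted price, and Assumption~\ref{ass:dom.mu}, one obtains a (possibly random) belief $\alpha$ such that $\lambda^a_{N-1}(\alpha)$ is attained by a limit sell order posted at exactly $p^{a,\Delta t}_{N-1}$. As no limit sell order lies \emph{strictly} below the best ask, that order's execution set is $\{D^+_N(p^{a,\Delta t}_{N-1})>0\}$, which by the strict monotonicity of the demand increment (vanishing at the fundamental price) equals $\{p^{0,\Delta t}_N>p^{a,\Delta t}_{N-1}\}=\{\xi>a\}$. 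Writing the one-period value with $\lambda^a_N(\alpha)=p^{b,\Delta t}_N=p^{0,\Delta t}_N+b$ gives $\lambda^a_{N-1}(\alpha)=p^{0,\Delta t}_{N-1}+\EE^{\alpha}_{N-1}\bigl[a\bone_{\{\xi>a\}}+(\xi+b)\bone_{\{\xi\le a\}}\bigr]=:\Psi^{\alpha}(a)$, and optimality of this agent forces (i) $\Psi^{\alpha}(a)\ge\EE^{\alpha}_{N-1}[p^{b,\Delta t}_N]$ (posting beats waiting) and (ii) $v\mapsto\Psi^{\alpha}(v)$ is maximised over posting prices $\le p^{a,\Delta t}_{N-1}$ at $p^{a,\Delta t}_{N-1}$ (deviating to a lower, still ``clean'' price is admissible). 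The mirror statements hold for a belief $\beta$ governing the best bid, with $\lambda^b_N(\beta)=p^{a,\Delta t}_N=p^{0,\Delta t}_N+a$: (i$'$) $\EE^{\beta}_{N-1}[(p^{b,\Delta t}_{N-1}-p^{a,\Delta t}_N)\bone_{\{\xi<b\}}]\le0$, and (ii$'$) the cost of a limit buy is minimised over posting prices $\ge p^{b,\Delta t}_{N-1}$ at $p^{b,\Delta t}_{N-1}$.

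\textbf{Step 3 (signs, then the quantitative bound).} On $\{\xi>a\}$ one has $p^{a,\Delta t}_{N-1}-p^{b,\Delta t}_N<p^{0,\Delta t}_N-p^{b,\Delta t}_N=-b$, so (i) forces $0\le\EE^{\alpha}_{N-1}[(p^{a,\Delta t}_{N-1}-p^{b,\Delta t}_N)\bone_{\{\xi>a\}}]<(-b)\,\PP^{\alpha}_{N-1}(\xi>a)$, and $\PP^{\alpha}_{N-1}(\xi>a)>0$ since $a<\infty$ and $f^{\alpha}>0$; hence $b<0$, and symmetrically (via (i$'$)) $a>0$. Differentiating $\Psi^{\alpha}$, condition (ii) together with $b<0$ reads $\PP^{\alpha}_{N-1}(\xi>a)\ge(-b)\,f^{\alpha}(a)$, and (ii$'$) with $a>0$ reads $\PP^{\beta}_{N-1}(\xi<b)\ge a\,f^{\beta}(b)$. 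The crux is now a \emph{uniform Mills-ratio estimate}: combining Assumptions~\ref{ass:sigma} and \ref{ass:A.alpha}, the $\mathbb L^2$-freezing of $\sigma$ over $[(N-1)\Delta t,N\Delta t]$ afforded by Assumption~\ref{ass:main.L2.strong}, and the uniform exponential tail bound of Lemma~\ref{le:necessary.marginal.maximum}, one shows there is a deterministic $\varepsilon(\cdot)$ with $\varepsilon(\Delta t)\to0$ (indeed of order $\sqrt{\Delta t}$) such that, $\PP$-a.s., for every $\alpha$: $\PP^{\alpha}_{N-1}(\xi>x)\le\varepsilon(\Delta t)\,f^{\alpha}(x)$ for all $x\ge-\sqrt{\Delta t}$ and $\PP^{\alpha}_{N-1}(\xi<x)\le\varepsilon(\Delta t)\,f^{\alpha}(x)$ for all $x\le\sqrt{\Delta t}$ (the Gaussian template being $\bar\Phi(y)\le\sqrt{\pi/2}\,\phi(y)$ for $y\ge0$, the non-Gaussian discrepancy being controlled by the tail bound). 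Plugging $x=a>0$ into the ask-side inequality gives $|b|=-b\le\PP^{\alpha}_{N-1}(\xi>a)/f^{\alpha}(a)\le\varepsilon(\Delta t)$, and $x=b<0$ into the bid-side inequality gives $a\le\PP^{\beta}_{N-1}(\xi<b)/f^{\beta}(b)\le\varepsilon(\Delta t)$; with Step~1 this is the claim.

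\textbf{Main obstacle.} The genuinely hard point is the uniform Mills-ratio (equivalently, tail/density) control in Step~3: it must hold for the \emph{random} thresholds $a,b$, uniformly over all beliefs $\alpha$, and uniformly in $\omega$ (the conclusion being $\PP$-a.s.\ with a deterministic $\varepsilon$). Converting the Gaussian heuristic into such a statement for the true conditional law of an It\^o increment with merely bounded coefficients is exactly what Section~\ref{se:tails} (and Lemma~\ref{le:necessary.marginal.maximum}) is built to supply, and is where Assumptions~\ref{ass:joint.cond.reg} and \ref{ass:main.L2.strong} enter. A secondary, bookkeeping-type difficulty is making Step~2 rigorous when $\mathbb A$ is an abstract Borel space and the infimum/supremum defining the best ask/bid need not be attained — handled by a limiting argument along prices in $\supp\nu^{\pm}_{N-1}$ together with continuity of the one-period objective and the DPP of Appendix A — and verifying that (i)--(ii) (and (i$'$)--(ii$'$)) are precisely the single-period specialisation of that DPP under the LTC constraint; Assumptions~\ref{ass:main.demandInv.unif} and \ref{ass:dom.mu} serve this part.
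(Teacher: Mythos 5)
Your overall skeleton matches the paper's: reduce to the last period via the LTC condition, show $p^{b}-p^{0}<0<p^{a}-p^{0}$ from the optimality of posting versus marking to market (your Step 3 sign argument is essentially the paper's Lemma \ref{bidasksigns}, and your acknowledgement that the "marginal agent" need not exist — so that one must argue via $\nu^{+}([p^a,\bar p])=0$ using Assumption \ref{ass:dom.mu}, as in Lemma \ref{le:simp.to.true.val} — is exactly the right caveat). The divergence, and the genuine gap, is in the quantitative step. You derive the bound from a pointwise first-order condition $\PP^{\alpha}_{N-1}(\xi>a)\ge(-b)f^{\alpha}(a)$ together with a claimed uniform Mills-ratio estimate $\PP^{\alpha}_{N-1}(\xi>x)\le\varepsilon(\Delta t)f^{\alpha}(x)$, which you propose to extract from Lemma \ref{le:necessary.marginal.maximum}. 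That lemma does not supply this: the bound $\hat\PP(X_1>x+z\mid X_1>x)\le C_1e^{-c_1z}$ controls the ratio $\hat\EE[(X_1-x)^{+}]/\hat\PP(X_1>x)$, i.e.\ an \emph{integrated} tail against the tail, not the tail against a \emph{density}. For a general It\^o increment with merely bounded, progressively measurable coefficients, the conditional law under $\PP^{\alpha}_{N-1}$ is not shown to have a density admitting a pointwise lower bound at the random thresholds $a,b$, uniformly over $\alpha\in\tilde{\mathbb A}$ and $\PP$-a.s.\ in $\omega$ — and both the tail and the putative density can be exponentially small at large arguments, so the additive Gaussian-approximation error of Lemma \ref{gapproxapplied} cannot rescue the ratio either. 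As written, the crux of your Step 3 rests on an estimate that is neither proved in the paper nor a consequence of the cited lemma.

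The paper avoids this entirely by never differentiating: Lemma \ref{gap} compares \emph{values} of the normalized simplified objective $\bar A^{\alpha}(p;x)=\EE^{\alpha}_{N-1}((p-x-\bar\xi)\bone_{\{\bar\xi>p\}})$ at a fixed negative price $-\delta$ against its supremum over $p\ge0$, shows that for the deterministic Gaussian objective $A_{\sigma}(\cdot;-C_0)$ this gap is at least $3\epsilon$ uniformly over $\sigma\in[1/C,C]$ (using monotonicity in $x$ and elementary properties of the Gaussian maximizer $p_\sigma(x)$), and then transfers the inequality to $\bar A^{\alpha}$ with an error $\varepsilon_2(\Delta t)<\epsilon$ via the distribution-function and truncated-first-moment approximations of Lemma \ref{gapproxapplied}. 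This yields $|p^{b}|\le C_0\sqrt{\Delta t}$ with no density estimates, and indeed Lemma \ref{le:necessary.marginal.maximum} is not needed for this theorem at all (it is reserved for Theorem \ref{thm:main.necessary}). I would suggest replacing your first-order/Mills-ratio step with a value-comparison of this type; alternatively, if you insist on the derivative route, you must first prove existence of, and a uniform lower bound on, the conditional density of $\xi$ near the random points $a$ and $b$, which is a substantially harder (and here unnecessary) task.
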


The above theorem has a useful corollary, which can be interpreted as follows: \emph{if the market does not degenerate as the frequency increases, then, such an increase improves market efficiency}. Here, we understand the ``improving efficiency" in the sense that the expected execution price (i.e., the price per share that an agent expects to receive or pay by the end of the game) of every agent converges to the fundamental price.

\begin{cor}\label{prop:main.smallspread}
Under the assumptions of Theorem \ref{le:main.zeroTermSpread}, denote the support of every equilibrium by $\tilde{\mathbb{A}}^{\Delta t}$ and the associated expected execution prices by $\lambda^{a,\Delta t}$ and $\lambda^{b,\Delta t}$. Then, there exists a deterministic function $\varepsilon(\cdot)$, such that $\varepsilon(\Delta t)\rightarrow0$, as $\Delta t\rightarrow0$, and, $\PP$-a.s.,
$$
\sup_{n=0,\ldots,N,\,\alpha\in\tilde{\mathbb{A}}^{\Delta t}}\left(\left|\lambda^{a,\Delta t}_n(\alpha) - p^{0,\Delta t}_n\right| + \left|\lambda^{b,\Delta t}_n(\alpha) - p^{0,\Delta t}_n\right|\right) \leq \varepsilon(\Delta t),
$$
for all small enough $\Delta t>0$.
\end{cor}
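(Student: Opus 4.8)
The plan is to establish the estimate by backward induction on $n$, from $n=N$ down to $0$, controlling, uniformly in $\alpha$, the quantities $\psi^a_n(\alpha):=\lambda^{a,\Delta t}_n(\alpha)-p^{0,\Delta t}_n$ and $\psi^b_n(\alpha):=\lambda^{b,\Delta t}_n(\alpha)-p^{0,\Delta t}_n$. At $n=N$ the desired bound is exactly Theorem~\ref{le:main.zeroTermSpread}, since $\lambda^a_N(\alpha)=p^b_N$ and $\lambda^b_N(\alpha)=p^a_N$ by Proposition~\ref{cor:piecewiseLin.new}. For the induction step I would use the single-step Dynamic Programming recursion of Appendix~A (Corollary~\ref{cor:piecewiseLin}), which in the present notation takes the form $\lambda^a_n(\alpha)=\max\big(p^b_n,\ \EE^\alpha_n\lambda^a_{n+1}(\alpha)+R^a_n(\alpha)\big)$, where $R^a_n(\alpha):=\sup_p\EE^\alpha_n\big[\bone_{A_p}\big(p-\lambda^a_{n+1}(\alpha)\big)\big]$ is the optimal relative expected profit of a long agent from posting a limit sell ($A_p$ being the event that the limit order at $p$ is executed), and symmetrically for $\lambda^b$. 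Because ``do nothing'' is always available, $R^a_n(\alpha)\ge0$; and because the equilibrium is non-degenerate, the long agents do post limit sells on the equilibrium support, so there this maximum is attained by the limit order, yielding the exact identity $\psi^a_n(\alpha)=\EE^\alpha_n\psi^a_{n+1}(\alpha)+\EE^\alpha_n\big(p^{0,\Delta t}_{n+1}-p^{0,\Delta t}_n\big)+R^a_n(\alpha)$. Iterating this down to $N$ gives $\psi^a_n(\alpha)=\EE^\alpha_n\psi^a_N(\alpha)+\EE^\alpha_n\big[\int_{n\Delta t}^{T}\mu^\alpha_s\,ds\big]+\EE^\alpha_n\big[\sum_{k=n}^{N-1}R^a_k(\alpha)\big]$, with the analogous identity for $\psi^b$, in which the relative-profit sum enters with the opposite sign.

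Two quantitative inputs feed into this. First, a \emph{lower bound on $\lambda^a$}: a long agent can follow the suboptimal strategy of posting, at every date until execution, a limit sell just below the current value of $p^{0,\Delta t}$ (which lies strictly below the current ask and therefore faces no competing orders), and, by Assumptions~\ref{ass:sigma} and \ref{ass:main.demandInv.unif}, such an order is executed at the very next date with conditional probability close to one, uniformly in $(n,\alpha,\omega)$; the precise control of the residual non-execution probability is a one-step instance of the conditional-tail estimate of Lemma~\ref{le:necessary.marginal.maximum}. This gives $\psi^a_n(\alpha)\ge-\varepsilon_1(\Delta t)$, and symmetrically $\psi^b_n(\alpha)\le\varepsilon_1(\Delta t)$, with $\varepsilon_1(\Delta t)\to0$. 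Second, an \emph{upper bound on the relative profit}: a limit sell at $p$ is executed only when $p^{0,\Delta t}_{n+1}>p$, so on the execution event $p-\lambda^a_{n+1}(\alpha)=\big(p-p^{0,\Delta t}_{n+1}\big)+\big(p^{0,\Delta t}_{n+1}-\lambda^a_{n+1}(\alpha)\big)\le-\psi^a_{n+1}(\alpha)$, whence $R^a_n(\alpha)\le\EE^\alpha_n\big[(-\psi^a_{n+1}(\alpha))^+\big]$; and the positive part is carried only by the event $\{p^{0,\Delta t}_{n+1}\in(p,\,p-\psi^a_{n+1}(\alpha))\}$, an interval of length at most $\sup_\alpha|\psi^a_{n+1}(\alpha)|$, so Lemma~\ref{le:necessary.marginal.maximum} supplies the needed uniform bound on $\PP^\alpha_n\big(p^{0,\Delta t}_{n+1}\in(p,p+\eta)\big)$ for the relevant $p$ and $\eta$, and hence a bound on $R^a_n(\alpha)$ that is much smaller than $\Delta t$ -- indeed faster than any power of $\Delta t$ once $\sup_\alpha|\psi^a_{n+1}(\alpha)|$ is small relative to $\sqrt{\Delta t}$ -- mirroring the explicit Gaussian computation behind (\ref{eq.RW.pan}) in Section~\ref{se:examples}.

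The hard part is to combine these into an estimate that survives summation over $N=T/\Delta t$ steps. Two obstacles appear. (i) The drift term $\EE^\alpha_n[\int_{n\Delta t}^{T}\mu^\alpha_s\,ds]$ is only $O(1)$ a priori, accumulating $O(\Delta t)$ per step, and cannot simply be dropped. This is where the hypothesis that a non-degenerate LTC equilibrium exists for \emph{every} member of the family is used essentially: feeding the exact iterated identity (with $R^a_k\ge0$), the lower bound $\psi^a_n(\alpha)\ge-\varepsilon_1(\Delta t)$, the matching estimates for $\psi^b$, and $\psi^a_N,\psi^b_N=O(\varepsilon(\Delta t))$, into each other forces $|\EE^\alpha_n[\int_{n\Delta t}^{T}\mu^\alpha_s\,ds]|=O\big(\varepsilon(\Delta t)+\varepsilon_1(\Delta t)\big)$ uniformly in $(n,\alpha)$ -- this is exactly the quantitative ``near-market-neutrality'' that the hypothesis enforces, and it is what excludes the bullish regime of Section~\ref{se:examples}. (ii) With the drift thus absorbed, one is left with $\sup_\alpha|\psi^a_n(\alpha)|\le\varepsilon'(\Delta t)+\sum_{k=n}^{N-1}R^a_k$ and $R^a_k\le F_{\Delta t}\big(\sup_\alpha|\psi^a_{k+1}(\alpha)|\big)$, where $F_{\Delta t}$ is the super-polynomially small function above; the delicacy is that $F_{\Delta t}$ is \emph{self-limiting} -- as $\sup_\alpha|\psi^a|$ shrinks, $R^a$ shrinks faster than linearly -- so a Gr\"onwall/comparison argument against the associated ``discrete ODE'' keeps $\sup_\alpha|\psi^a_n(\alpha)|$ at a level that is $o(\sqrt{\Delta t})$ (of order $\sqrt{\Delta t/\log(1/\Delta t)}$ in the Gaussian model of Section~\ref{se:examples}) for all $n$, whence $\sum_k R^a_k=o(1)$. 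Making this comparison rigorous, with all constants uniform over $n$, over $\alpha\in\mathbb{A}$, and over the random $\mathcal{F}_n$-measurable location of the targeted interval, is the technical heart, and it is precisely what the uniform form of Lemma~\ref{le:necessary.marginal.maximum} is for; the bootstrap is closed by first using the crude a priori bound from the lower-bound step and then iterating the sharp one.
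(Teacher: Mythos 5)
There is a genuine gap at the heart of your argument, and it sits exactly where you put the phrase ``forces $|\EE^\alpha_n[\int_{n\Delta t}^{T}\mu^\alpha_s\,ds]|=O(\varepsilon(\Delta t)+\varepsilon_1(\Delta t))$''. First, the ``exact identity'' $\psi^a_n(\alpha)=\EE^\alpha_n\psi^a_{n+1}(\alpha)+\EE^\alpha_n(p^0_{n+1}-p^0_n)+R^a_n(\alpha)$ does not follow from non-degeneracy: non-degeneracy is a statement about the aggregate LOB ($\nu^\pm_n(\RR)>0$), not about every belief type, so for some $\alpha\in\tilde{\mathbb{A}}^{\Delta t}$ the recursion of Corollary \ref{cor:piecewiseLin} may read $\lambda^a_n(\alpha)=p^b_n$ (market order) or $\lambda^a_n(\alpha)=\EE^\alpha_n\lambda^a_{n+1}(\alpha)$ (wait), and your iterated identity fails precisely for the $\alpha$'s over which the corollary demands uniformity. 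Second, even granting the identity, the ingredients you list (the lower bound $\psi^a\ge-\varepsilon_1$, the upper bound $\psi^b\le\varepsilon_1$, $R^{a},R^{b}\ge0$, small terminal values) do not pin the drift in either direction: a large positive drift-to-go is perfectly compatible with $\psi^a_n\ge-\varepsilon_1$, and in the $\psi^b$-identity it can be offset by a large $\sum_k R^b_k$, which you can only exclude after you already control $\psi^b$ from below --- that is circular. Controlling the drift is the hard content here; in the paper it is Theorem \ref{thm:main.necessary}, it needs the extra Assumption \ref{ass:main.mu.cont.strong} (not available under the corollary's hypotheses) and a genuine adverse-selection contradiction with the definition of $p^a_n$, and it cannot be recovered by the bookkeeping you describe. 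Your auxiliary steps also have flaws: a limit sell posted just below the current $p^0_n$ is executed next period with conditional probability close to $1/2$, not ``close to one''; the claim that it faces no competing orders presumes $p^a_n>p^0_n$ at every $n$, which the paper establishes only at $N-1$ (Lemma \ref{bidasksigns}); and on the non-execution branch the conditioning biases $p^0$ downward, so the error analysis is not the one-step application of Lemma \ref{le:necessary.marginal.maximum} you suggest.

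The idea you are missing makes the whole induction/Gr\"onwall apparatus unnecessary. By Corollary \ref{cor:piecewiseLin} and the LTC property, $\lambda^{a,\Delta t}(\alpha)$ is a $\PP^\alpha$-supermartingale with $\lambda^{a,\Delta t}_N(\alpha)=p^{b,\Delta t}_N$, and $\lambda^{b,\Delta t}(\alpha)$ is a submartingale with $\lambda^{b,\Delta t}_N(\alpha)=p^{a,\Delta t}_N$, giving $\lambda^{a,\Delta t}_n(\alpha)\ge\EE^\alpha_n p^{b,\Delta t}_N$ and $\lambda^{b,\Delta t}_n(\alpha)\le\EE^\alpha_n p^{a,\Delta t}_N$. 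The reverse inequalities $\lambda^{a,\Delta t}_n(\alpha)\le\EE^\alpha_n p^{a,\Delta t}_N$ and $\lambda^{b,\Delta t}_n(\alpha)\ge\EE^\alpha_n p^{b,\Delta t}_N$ come from a no-arbitrage argument with a \emph{zero-inventory} agent: on any $\mathcal{F}_n$-event where, say, $\lambda^{a,\Delta t}_n(\alpha)>\EE^\alpha_n p^{a,\Delta t}_N$, an agent in state $(0,\alpha)$ can mimic from time $n$ on the optimal strategy of an agent in state $(1,\alpha)$ and obtain strictly positive expected value, contradicting $V_n^\nu(0,\alpha)=0$. Hence both expected execution prices are sandwiched in $\left[\EE^\alpha_n p^{b,\Delta t}_N,\ \EE^\alpha_n p^{a,\Delta t}_N\right]$, and the deterministic $O(\sqrt{\Delta t})$ bounds on $p^{a,\Delta t}_N-p^{0,\Delta t}_N$ and $p^{b,\Delta t}_N-p^{0,\Delta t}_N$ from the proof of Theorem \ref{le:main.zeroTermSpread} conclude --- no per-step relative-profit estimates, no self-limiting recursion, and no use of Lemma \ref{le:necessary.marginal.maximum} are needed.
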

\begin{proof}
Denote $\EE^{\alpha}_n = \tilde{\EE}^{\alpha}_{n\Delta t}$.
It follows from Corollary \ref{cor:piecewiseLin}, in Appendix A, and the definition of LTC equilibrium that $\lambda^{a,\Delta t}_{N}(\alpha) = p^{b,\Delta t}_N$ and $\lambda^{b,\Delta t}_{N}(\alpha)=p^{a,\Delta t}_N$. It also follows from Corollary \ref{cor:piecewiseLin} (or, more generally, from the definition of a value function) that $\lambda^{a,\Delta t}(\alpha)$ is a supermartingale, and $\lambda^{b,\Delta t}(\alpha)$ is a submartingale, under $\PP^{\alpha}$. Thus, we have: $\lambda^{a,\Delta t}_{n}(\alpha) \geq \EE^{\alpha}_{n} p^{b,\Delta t}_{N}$ and $\lambda^{b,\Delta t}_{n}(\alpha) \leq \EE^{\alpha}_{n} p^{a,\Delta t}_{N}$.
On the other hand, notice that we must have: $\lambda^{a,\Delta t}_{n}(\alpha) \leq \EE^{\alpha}_{n} p^{a,\Delta t}_{N}$ and $\lambda^{b,\Delta t}_{n}(\alpha) \geq \EE^{\alpha}_{n} p^{b,\Delta t}_{N}$. Assume, for example, that $\lambda^{a,\Delta t}_{n}(\alpha) > \EE^{\alpha}_{n} p^{a,\Delta t}_{N}$ on the event $\Omega'$ of positive $\PP^{\alpha}$-probability. 
Consider an agent at state $(0,\alpha)$, who follows the optimal strategy of an agent at state $(1,\alpha)$, starting from time $n$ and onward, on the event $\Omega'$ (otherwise, she does not do anything). It is easy to see that the objective value of this strategy is 
$$
\EE^{\alpha}\left( \bone_{\Omega'} \left( \lambda^{a,\Delta t}_{n}(\alpha) - \EE^{\alpha}_{n} p^{a,\Delta t}_{N} \right)\right) > 0,
$$  
which contradicts Corollary \ref{cor:piecewiseLin}. The second inequality is shown similarly. Thus, we conclude that, for any $n=0,\ldots,N-1$, both $\lambda^{a,\Delta}_{n}(\alpha)$ and $\lambda^{b,\Delta}_{n}(\alpha)$ belong to the interval 
$$
\left[\EE^{\alpha}_{n} p^{b,\Delta t}_{N},\,\EE^{\alpha}_{n} p^{a,\Delta t}_{N}\right],
$$
which, in turn, converges to zero, as $\Delta t\rightarrow0$, due to the deterministic bounds obtained in the proof of Proposition \ref{le:main.zeroTermSpread}.
\qed
\end{proof}

The results of Theorem \ref{le:main.zeroTermSpread} and Corollary \ref{prop:main.smallspread} can be viewed as a specific case of a more general observation: markets become more efficient as the frictions become smaller. In the present setting, the limited trading frequency is viewed as friction, and the market efficiency is measured by the difference between the bid and ask prices, or between the expected execution prices. Many more instances of analogous results can be found in the literature, depending on the choice of a friction type. For example, the markets become efficient in \cite{MMS.gmm1} and \cite{MMS.gmm2} as the number of insiders vanishes. Similarly, the markets become efficient in \cite{DA.DuZhu} as the trading frequency increases and the size of private signals vanishes. It is also mentioned in \cite{MMS.gliq1} that the market would become efficient if there was no restriction on the size of agents' inventories therein.

The above results demonstrate the positive role of high trading frequency. However, they are based on the assumption that the market does not degenerate as frequency increases. In the context of Section \ref{se:examples}, we saw that the markets do not degenerate only if the agents are market-neutral (i.e. $\alpha=0$). If this condition is violated and the frequency $N$ is sufficiently high, the market does not admit any non-degenerate equilibrium (i.e., there exists no safe regime, in which the liquidity crisis would never occur). It turns out that this conclusion still holds in the general setting considered herein.

\begin{theorem}\label{thm:main.necessary}
Let Assumptions \ref{ass:sigma}, \ref{ass:A.alpha}, \ref{ass:joint.cond.reg}, \ref{ass:main.L2.strong}, \ref{ass:main.mu.cont.strong}, \ref{ass:main.demandInv.unif}, \ref{ass:dom.mu} hold.
Consider a family of uniform partitions of a given time interval $[0,T]$, with diameters $\left\{\Delta t=T/N>0\right\}$, containing arbitrarily small $\Delta t$, and with the associated family of discrete time models. 
Assume that every such model admits a non-degenerate LTC equilibrium, with the same support $\tilde{\mathbb{A}}$. Then, for all $\alpha\in\tilde{\mathbb{A}}$, we have: $\tilde{p}^{0}$ is a {\bf martingale} under $\PP^{\alpha}$.
\end{theorem}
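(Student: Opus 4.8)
The plan is to derive the martingale property by combining the ``market efficiency'' estimate of Corollary~\ref{prop:main.smallspread} with the (super/sub)martingale structure of the expected execution prices, and then to pass to the continuous-time limit $\Delta t\to0$.

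I would first fix $\alpha\in\tilde{\mathbb{A}}$ and note that, since all the hypotheses of Theorem~\ref{le:main.zeroTermSpread} are in force, Corollary~\ref{prop:main.smallspread} yields a deterministic $\varepsilon(\cdot)$ with $\varepsilon(\Delta t)\to0$ such that, $\PP$-a.s.,
$$
\big|\lambda^{a,\Delta t}_n(\alpha)-\tilde{p}^{0}_{n\Delta t}\big|+\big|\lambda^{b,\Delta t}_n(\alpha)-\tilde{p}^{0}_{n\Delta t}\big|\le\varepsilon(\Delta t),\qquad n=0,\dots,N,
$$
for all small $\Delta t=T/N$. Next I would invoke (as in the proof of Corollary~\ref{prop:main.smallspread}, via Corollary~\ref{cor:piecewiseLin} of Appendix~A) that $\lambda^{a,\Delta t}(\alpha)$ is a supermartingale and $\lambda^{b,\Delta t}(\alpha)$ a submartingale under $\PP^{\alpha}$. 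Writing $\tilde{\EE}^{\alpha}_{t}$ for the conditional expectation under the regular version from Assumption~\ref{ass:joint.cond.reg}, this gives, for any grid indices $0\le n\le m\le N$,
$$
\tilde{p}^{0}_{n\Delta t}+\varepsilon(\Delta t)\ \ge\ \lambda^{a,\Delta t}_n(\alpha)\ \ge\ \tilde{\EE}^{\alpha}_{n\Delta t}\lambda^{a,\Delta t}_m(\alpha)\ \ge\ \tilde{\EE}^{\alpha}_{n\Delta t}\tilde{p}^{0}_{m\Delta t}-\varepsilon(\Delta t),
$$
and, using the submartingale property of $\lambda^{b,\Delta t}(\alpha)$ together with the same estimate, $\tilde{p}^{0}_{n\Delta t}-\varepsilon(\Delta t)\le\tilde{\EE}^{\alpha}_{n\Delta t}\tilde{p}^{0}_{m\Delta t}+\varepsilon(\Delta t)$. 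Hence, $\PP^{\alpha}$-a.s., for all $0\le n\le m\le N$,
$$
\big|\tilde{\EE}^{\alpha}_{n\Delta t}\,\tilde{p}^{0}_{m\Delta t}-\tilde{p}^{0}_{n\Delta t}\big|\ \le\ 2\,\varepsilon(\Delta t).
$$

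It then remains to let $\Delta t\to0$. I would fix $0\le s<t\le T$ and, for each $\Delta t=T/N$ in the given family (which contains arbitrarily small values), set $n=\lceil s/\Delta t\rceil$, $m=\lceil t/\Delta t\rceil$, so that $s\le n\Delta t\downarrow s$, $t\le m\Delta t\downarrow t$, and $0\le n\le m\le N$ once $\Delta t$ is small. By Assumptions~\ref{ass:sigma} and~\ref{ass:A.alpha}, under $\PP^{\alpha}$ the process $\tilde{p}^{0}_u=p^0_0+\int_0^u\mu^{\alpha}_r\,dr+\int_0^u\sigma_r\,dW^{\alpha}_r$ is continuous with $|\mu^{\alpha}|\le C$ and $1/C\le\sigma\le C$, hence $\mathbb{L}^2(\PP^{\alpha})$-bounded on $[0,T]$, so that $\tilde{p}^{0}_{m\Delta t}\to\tilde{p}^{0}_t$ in $\mathbb{L}^1(\PP^{\alpha})$ and $\tilde{p}^{0}_{n\Delta t}\to\tilde{p}^{0}_s$ $\PP$-a.s. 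Using the tower property from Assumption~\ref{ass:joint.cond.reg}, the right-continuity of the (complete) filtration $\tilde{\FF}$, and Assumption~\ref{ass:main.mu.cont.strong} to drive $\int_s^t\big(\tilde{\EE}^{\alpha}_{n\Delta t}\mu^{\alpha}_r-\tilde{\EE}^{\alpha}_{s}\mu^{\alpha}_r\big)\,dr\to0$ as $n\Delta t\downarrow s$, I would conclude $\tilde{\EE}^{\alpha}_{n\Delta t}\tilde{p}^{0}_{m\Delta t}\to\tilde{\EE}^{\alpha}_{s}\tilde{p}^{0}_t$. Since the right-hand side of the last display tends to $0$, this gives $\tilde{\EE}^{\alpha}_{s}\tilde{p}^{0}_t=\tilde{p}^{0}_s$, $\PP^{\alpha}$-a.s.; as $0\le s<t\le T$ are arbitrary and $\tilde{p}^{0}$ is $\PP^{\alpha}$-integrable, $\tilde{p}^{0}$ is a $\PP^{\alpha}$-martingale, and, $\alpha\in\tilde{\mathbb{A}}$ being arbitrary, the theorem follows.

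The hard part will be this last passage to the limit. The partitions for distinct $N$ are not nested, so $s$ and $t$ need not lie on any grid, and one must show that $\tilde{\EE}^{\alpha}_{n\Delta t}\int_{n\Delta t}^{m\Delta t}\mu^{\alpha}_r\,dr$ converges to $\tilde{\EE}^{\alpha}_{s}\int_{s}^{t}\mu^{\alpha}_r\,dr$ as the endpoints slide down to $s$ and $t$. The sliding of the integration limits is harmless since the integrand is bounded; the genuine issue is the sliding of the conditioning time, and this is precisely what Assumption~\ref{ass:main.mu.cont.strong} (uniform right-continuity in probability of the $\PP^{\alpha}$-martingales $u\mapsto\tilde{\EE}^{\alpha}_{u}\mu^{\alpha}_r$), together with the consistency of conditional probabilities in Assumption~\ref{ass:joint.cond.reg}, is tailored to control. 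The other two ingredients — the super/sub-martingale property of $\lambda^{a,\Delta t}(\alpha),\lambda^{b,\Delta t}(\alpha)$ and the estimate of Corollary~\ref{prop:main.smallspread} — are used as quoted from earlier in the paper.
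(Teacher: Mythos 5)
Your argument is circular. Everything rests on the uniform-in-$n$ estimate $\sup_n|\lambda^{a,\Delta t}_n(\alpha)-p^{0,\Delta t}_n|\le\varepsilon(\Delta t)$ quoted from Corollary \ref{prop:main.smallspread}, but that estimate is not available independently of Theorem \ref{thm:main.necessary}. What the corollary's proof actually establishes (from the supermartingale property of $\lambda^{a,\Delta t}(\alpha)$, the terminal condition $\lambda^{a,\Delta t}_N(\alpha)=p^{b,\Delta t}_N$, and the scaling argument) is only
$$
\lambda^{a,\Delta t}_n(\alpha),\ \lambda^{b,\Delta t}_n(\alpha)\ \in\ \left[\EE^{\alpha}_n p^{b,\Delta t}_N,\ \EE^{\alpha}_n p^{a,\Delta t}_N\right],
$$
an interval whose \emph{width} is $\le\varepsilon(\Delta t)$ by Theorem \ref{le:main.zeroTermSpread}, but whose \emph{center} is $\EE^{\alpha}_n p^{0,\Delta t}_N=p^{0,\Delta t}_n+\EE^{\alpha}_n\int_{t_n}^T\mu^{\alpha}_u\,du$, not $p^{0,\Delta t}_n$. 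The discrepancy $\EE^{\alpha}_n\int_{t_n}^T\mu^{\alpha}_u\,du$ is only bounded by $CT$ under Assumption \ref{ass:A.alpha}, and its vanishing is precisely the content of the theorem. If you substitute the honestly available bound into your sandwich, the chain collapses to a tautology: $\lambda^{a,\Delta t}_n(\alpha)\ge\EE^{\alpha}_n\lambda^{a,\Delta t}_m(\alpha)\ge\EE^{\alpha}_n\EE^{\alpha}_m p^{b,\Delta t}_N=\EE^{\alpha}_n p^{b,\Delta t}_N$ is just the tower property and gives no control on $\EE^{\alpha}_n p^{0,\Delta t}_m-p^{0,\Delta t}_n$. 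In short, invoking the corollary's stated conclusion at times $n$ far from $N$ is assuming what you are asked to prove; the corollary's displayed conclusion is only justified a posteriori, once the martingale property is known.

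The missing input must come from the equilibrium conditions at the \emph{intermediate} times, not just from the terminal marking-to-market, and this is what the paper's proof supplies. Assuming a drift of size $\delta>0$ persists on some event (Lemma \ref{le:necessary.1}, via Assumption \ref{ass:main.mu.cont.strong}), the supermartingale bound pushes $\lambda^a_{n+2}(\alpha_0)$ at least $\delta/4$ \emph{above} the fundamental price there; the equilibrium condition $p^b_{n+1}\ge\EE^{\alpha_0}_{n+1}\left(\lambda^a_{n+2}(\alpha_0)+\xi_{n+2}\,\middle|\,\xi_{n+2}<p^b_{n+1}\right)$ (Corollary \ref{cor:piecewiseLin}) then forces the bid above the fundamental price, because $\EE^{\alpha_0}_{n+1}(\xi_{n+2}\mid\xi_{n+2}<p)$ is within $O(\sqrt{\Delta t})$ of $p$ uniformly in $p\le 0$ --- and this last step is exactly where the uniform exponential conditional-tail bound of Lemma \ref{le:necessary.marginal.maximum} (through Lemma \ref{le:necessary.2}) is indispensable. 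A bid above the fundamental price makes posting limit sell orders unprofitable for every $\alpha\in\tilde{\mathbb{A}}$, contradicting non-degeneracy of the ask side. None of this machinery can be bypassed by citing Corollary \ref{prop:main.smallspread}; you would need to reprove the theorem to legitimize the very estimate you start from.
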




The above theorem shows that the market degenerates even if the signal $\mu^{\alpha}$ is very small (but non-zero), provided the trading frequency $N$ is large enough.
Therefore, as discussed at the end of Section \ref{se:examples}, such degeneracy cannot be attributed to any fundamental reasons, and we refer to it as the \emph{endogenous liquidity crisis}.
Let us provide an intuitive (heuristic) argument for why the statement of Theorem \ref{thm:main.necessary} holds. 
Assume, first, that all long agents (i.e., those having positive inventory) are bullish about the asset (i.e., have a positive drift $\mu^{\alpha}$).
Then, similar to Section \ref{se:examples}, the higher trading frequency amplifies the \emph{adverse selection effect}, forcing the long agents to withdraw liquidity from the market (i.e., they prefer to do nothing and wait for a higher fundamental price level).
Note that, in the present setting, the agents may have different beliefs, the LOB may have a complicated shape and dynamics, and the expected execution prices are no longer deterministic. All this makes it difficult to provide a simple description of how the high frequency amplifies the adverse selection. Nevertheless, the general analysis of this case is still based on the idea discussed at the end of Section \ref{se:examples}: it has to do with how fast $\tilde{\EE}^{\alpha}_{n\Delta t} (p^0_{n+1} - p\, |\,p^0_{n+1}>p )$ vanishes (as the frequency increases), relative to the rate at which the expected execution prices approach the fundamental price.
Thus, to avoid market degeneracy, there must be a non-zero mass of long agents who are market-neutral or bearish.
As the trading frequency grows, these agents will post their limit orders at lower levels.
Next, assume that there exists a bullish agent (long, short, or with zero inventory). Then, at a sufficiently high trading frequency, the agent's expected value of a long position in a single share of the asset will exceed the ask prices posted by the market-neutral and bearish long agents.
In this case, the bullish agent prefers to buy more shares at the posted ask price, in order to sell them later. As the agents are small and their objectives are linear, the bullish agent can scale up her strategy to generate infinite expected profits. This contradicts the definition of optimality and implies that an equilibrium fails to exist.
Thus, all agents have to be either market-neutral or bearish. Applying a symmetric argument, we conclude that all agents must be market-neutral.\footnote{This argument, along with the fact that Definition \ref{def:optControl} requires an optimal control to be optimal for \emph{all} $\alpha$, explains why the statement of Theorem \ref{thm:main.necessary} holds for \emph{all}, as opposed to $\mu_n$-a.e., $\alpha\in\tilde{\mathbb{A}}$.}
A rigorous formulation of the above arguments, which constitutes the proof of Theorem \ref{thm:main.necessary}, is given in Section \ref{se:pf.2}.

It is worth mentioning that the possible degeneracy of the LOB is also documented in \cite{MMS.gmm1}, and is referred to as a ``market shut down". The setting used in the latter paper is very different: it analyzes a quote-driven exchange (i.e., the one with a designated market maker) and assumes the existence of insiders with superior information. Nevertheless, it is possible to draw a parallel with the LOB degeneracy in the present setting. Namely, the degeneracy in \cite{MMS.gmm1} occurs when the number of insiders increases, which implies that the signal, generated by the insiders' trading, becomes sufficiently large. The latter is similar to the deviation from martingality of the fundamental price in the present setting. However, an increase in the number of insiders in \cite{MMS.gmm1} also implies an increase in frictions (since the insiders can be interpreted as friction in \cite{MMS.gmm1}). Theorem \ref{thm:main.necessary}, on the other hand, states that a market degeneracy will occur when the frictions are sufficiently small. Perhaps, this dual role of the number of insiders did not allow for a detailed analysis of market shut downs in \cite{MMS.gmm1}. Many other models of market microstructure (cf. \cite{MMS.g3}, \cite{MMS.g6}, \cite{MMS.g1}, \cite{MMS.g2}, \cite{DA.DuZhu}) are not well suited for the analysis of market degeneracy, either because the agents in these models pursue ``one-shot" strategies (i.e., they cannot choose to wait and post a limit order later) or because the fundamental price (or its analogue) is restricted to be a martingale.

\section{Conditional tails of the marginal distributions of It\^{o} processes}
\label{se:tails}

As follows from the discussion in the preceding sections, in order to prove the main results of the paper, we need to investigate the properties of marginal distributions of the fundamental price $\tilde{p}^0$ (more precisely, the distributions of its increments).
In order to prove Theorem \ref{le:main.zeroTermSpread}, we need to show that the difference between the fundamental price and the bid or ask prices converges to zero, as the frequency $N$ increases to infinity. It turns out that, for this purpose, it suffices to show that the distribution of a normalized increment of $\tilde{p}^0$ converges to the standard normal distribution.
The following lemma summarizes these results. It is rather simple, but technical, hence, its proof is postponed to Appendix B.
In order to formulate the result (and to facilitate the derivations in subsequent sections), we introduce addiitonal notation.
For notational convenience, we drop the superscript $\Delta t$ for some variables (we only emphasize this dependence when it is important).
For any market model on the time interval $[0,T]$, associated with a uniform partition with diameter $\Delta t=T/N>0$, and having a fundamental price process $p^0$, we define
\begin{equation}\label{eq.xi.not}
\xi_n = p^{0}_n - p^{0}_{n-1} = \tilde{p}^0_{t_n} - \tilde{p}^0_{t_{n-1}},
\quad \EE^{\alpha}_n = \tilde{\EE}^{\alpha}_{t_n},
\quad \PP^{\alpha}_n = \tilde{\PP}^{\alpha}_{t_n},
\quad t_n = n \Delta t,
\quad n=1,\ldots,NT/\Delta t.
\end{equation}
We denote by $\eta_0$ a standard normal random variable (on a, possibly, extended probability space), which is independent of $\mathcal{F}_N$ under every $\PP^{\alpha}$.

\begin{lemma}\label{gapproxapplied}
Let Assumptions \ref{ass:sigma}, \ref{ass:A.alpha}, \ref{ass:joint.cond.reg}, \ref{ass:main.L2.strong} hold.
Then, there exists a function $\varepsilon(\cdot)\ge0$, s.t. $\varepsilon(\Delta t)\to0$, as $\Delta t\to0$, and the following holds $\PP$-a.s., for all $p\in\RR$, all $\alpha\in\mathbb{A}$, and all $n=1,\ldots,N$,
\begin{itemize}

\item[(i)] $(|p|\vee 1)\left|\PP^{\alpha}_{n-1}\left(\frac{\xi_n}{\sqrt{\Delta t}} >p\right) 
- \PP^{\alpha}_{n-1}\left(\sigma_{t_{n-1}}\eta_0>p\right)\right|
\le\varepsilon(\Delta t)$,

\item[(ii)] $\left|\EE^{\alpha}_{n-1}\left( \frac{\xi_n}{\sqrt{\Delta t}}\bone_{\left\{\xi_n/\sqrt{\Delta t}>p\right\}}\right) 
- \EE^{\alpha}_{n-1}\left(\sigma_{t_{n-1}}\eta_0 \bone_{\left\{\sigma_{t_{n-1}}\eta_0>p\right\}}\right) \right|\le\varepsilon(\Delta t)$.
\end{itemize}
In addition, the above estimates hold if we replace $(\xi_n,\eta_0,p)$ by $(-\xi_n,-\eta_0,-p)$.
\end{lemma}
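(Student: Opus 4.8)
The plan is to establish both (i) and (ac{ii}) by comparing the conditional law of the normalized increment $\xi_n/\sqrt{\Delta t}$ under $\PP^{\alpha}_{n-1}$ with the law of a Gaussian variable with the ``frozen" volatility $\sigma_{t_{n-1}}$, using a Lindeberg-type / Gaussian-approximation estimate. The starting point is the representation \eqref{eq.p0.cont.a.alpha}, which gives, under $\PP^{\alpha}$,
$$
\xi_n = \int_{t_{n-1}}^{t_n} \mu^{\alpha}_s\,ds + \int_{t_{n-1}}^{t_n} \sigma_s\,dW^{\alpha}_s.
$$
First I would dispose of the drift term: by Assumption \ref{ass:A.alpha}, $\left|\int_{t_{n-1}}^{t_n}\mu^{\alpha}_s\,ds\right|\le C\Delta t = O(\sqrt{\Delta t})\cdot\sqrt{\Delta t}$, so after normalizing by $\sqrt{\Delta t}$ it contributes a term of order $\sqrt{\Delta t}$ to the location of the distribution, which is absorbed into $\varepsilon(\Delta t)$ (shifting a threshold $p$ by $O(\sqrt{\Delta t})$ changes the probability in (i) and the truncated mean in (ac{ii}) by at most $O(\sqrt{\Delta t})$ times a bounded density/tail factor, using that the comparison Gaussian has variance bounded below and above by Assumption \ref{ass:sigma}). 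So the real content is the martingale part $M_n:=\int_{t_{n-1}}^{t_n}\sigma_s\,dW^{\alpha}_s$, and I need to show its conditional law, normalized, is close to $N(0,\sigma_{t_{n-1}}^2)$.

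The core estimate is that $M_n/\sqrt{\Delta t}$ is close in distribution (conditionally, under $\PP^{\alpha}_{n-1}$) to $\sigma_{t_{n-1}}\eta_0$. I would obtain this by a stopping-time/martingale CLT argument applied on the single small interval $[t_{n-1},t_n]$: subdivide $[t_{n-1},t_n]$ further, write $M_n/\sqrt{\Delta t}$ as a sum of small martingale increments, and apply a quantitative martingale central limit theorem (e.g. a Berry--Esseen-type bound for martingales, or a direct characteristic-function/Lindeberg argument). The key inputs are (a) the conditional quadratic variation $\int_{t_{n-1}}^{s}\sigma_u^2\,du$ is close to $\sigma_{t_{n-1}}^2(s-t_{n-1})$ because of Assumption \ref{ass:main.L2.strong} — precisely, $\EE^{\alpha}\big((\sigma_{s\vee\tau}-\sigma_\tau)^2\,|\,\mathcal{F}_\tau\big)\le\varepsilon(\Delta t)$ uniformly for stopping times $\tau$ in the block, which controls $\EE^{\alpha}_{n-1}\int_{t_{n-1}}^{t_n}(\sigma_s^2-\sigma_{t_{n-1}}^2)\,ds$ and also the relevant conditional variance of the bracket; and (b) uniform ellipticity, Assumption \ref{ass:sigma}, which keeps the limiting Gaussian non-degenerate and lets a Kolmogorov-distance bound between $M_n/\sqrt{\Delta t}$ and $\sigma_{t_{n-1}}\eta_0$ be converted into the weighted bound in (i) (the factor $|p|\vee1$ is handled because a Gaussian with variance $\ge 1/C^2$ has tails $\PP(\sigma_{t_{n-1}}\eta_0>p)\le e^{-cp^2}$, so a bound on $|F(p)-G(p)|$ that is uniform and an additional control on how fast both tails decay give the extra factor $|p|\vee 1$ — alternatively one proves directly a bound on $\sup_p(|p|\vee1)|F(p)-G(p)|$ by comparing densities via Assumption \ref{ass:joint.cond.reg} ensuring the conditional laws are genuinely defined). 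For part (ac{ii}), once (i) is in hand, I would write the truncated mean as $\EE^{\alpha}_{n-1}\big(Z\bone_{\{Z>p\}}\big)=pG(p)+\int_p^\infty G(x)\,dx$ (with $G$ the relevant conditional tail, using integration by parts / Fubini, valid because the weighted tail bound from (i) guarantees integrability), and the same for the Gaussian; the difference is then controlled by $\int_{\RR}\big|\PP^{\alpha}_{n-1}(Z>x)-\PP^{\alpha}_{n-1}(\sigma_{t_{n-1}}\eta_0>x)\big|\,dx$, which is finite and $O(\varepsilon(\Delta t))$ precisely because of the weighted bound in (i) — the weight $|x|\vee1$ makes the integral over $|x|\ge1$ converge.

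The main obstacle I anticipate is making the Gaussian approximation \emph{conditional} and \emph{uniform} — uniform over $p\in\RR$ (with the $|p|\vee1$ weight), over $\alpha\in\mathbb{A}$, and over $n=1,\dots,N$ and $\Delta t$ — with a single deterministic modulus $\varepsilon(\Delta t)$. The uniformity over $n$ and the simultaneous handling of many $\alpha$ is exactly what Assumptions \ref{ass:joint.cond.reg} and \ref{ass:main.L2.strong} are designed for (they give a single $\varepsilon(\Delta t)$ valid across stopping times and across $\alpha$), so the argument should reduce to: (1) a deterministic Berry--Esseen/Lindeberg bound for a martingale on one block, with constants depending only on the ellipticity constant $C$; (2) replacing the random bracket by $\sigma_{t_{n-1}}^2\Delta t$ at cost $\varepsilon(\Delta t)$ via Assumption \ref{ass:main.L2.strong}; (3) absorbing the drift; (4) upgrading uniform Kolmogorov closeness to the weighted and truncated-mean statements using the Gaussian tail decay from Assumption \ref{ass:sigma}. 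The symmetry statement (replacing $(\xi_n,\eta_0,p)$ by $(-\xi_n,-\eta_0,-p)$) is immediate by applying the same argument to $-\tilde p^0$, whose dynamics have drift $-\mu^{\alpha}$ and the same $\sigma$, so all assumptions are preserved. Since the lemma is flagged as ``simple but technical'' with the proof deferred to Appendix B, I would present this as the skeleton and relegate the Berry--Esseen bookkeeping to the appendix.
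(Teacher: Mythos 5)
Your overall strategy (absorb the drift, freeze the volatility at $t_{n-1}$, compare to a Gaussian, and use the ellipticity bounds to get uniformity) is the right one, but there are two problems. The first is a genuine error: in your derivation of (ii) from (i) you claim that the difference of truncated means is controlled by $\int_{\RR}\bigl|\PP^{\alpha}_{n-1}(Z>x)-\PP^{\alpha}_{n-1}(\sigma_{t_{n-1}}\eta_0>x)\bigr|\,dx$ and that ``the weight $|x|\vee 1$ makes the integral over $|x|\ge 1$ converge.'' It does not: the bound $(|x|\vee1)\,|F(x)-G(x)|\le\varepsilon$ only gives $|F(x)-G(x)|\le\varepsilon/|x|$, and $\int_1^{\infty}\varepsilon\,x^{-1}dx$ diverges. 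The step is repairable (interpolate $\min\{\varepsilon/|x|,\,F(x)+G(x)\}$ against the Chebyshev/Gaussian tail bound $F(x)+G(x)\lesssim x^{-2}$ to get an $O(\sqrt{\varepsilon})$ integral), but as written the justification fails. Note that the paper does not derive (ii) from (i) at all: it bounds $\bigl|\EE(\xi\bone_{\{\xi>p\}})-\EE(\eta\bone_{\{\eta>p\}})\bigr|$ directly by Cauchy--Schwarz, splitting off $\EE\bigl((\xi-\eta)\bone_{\{\xi>p\}}\bigr)$ and $\EE\bigl(\eta(\bone_{\{\xi>p\}}-\bone_{\{\eta>p\}})\bigr)$ and estimating the probability of the symmetric difference $\{\xi>p\}\triangle\{\eta>p\}$ by $O(\varepsilon^{1/3})$; this sidesteps the integrability issue entirely.

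The second issue is that your core step --- a quantitative martingale CLT / Berry--Esseen bound for $\Delta t^{-1/2}\int_{t_{n-1}}^{t_n}\sigma_s\,dW^{\alpha}_s$, applied conditionally and uniformly over $\alpha$, $n$, and stopping times --- is far heavier than what is needed, and you leave it entirely unspecified. No distributional limit theorem is required here, because the comparison Gaussian can be realized on the \emph{same} probability space: $\sigma_{t_{n-1}}(W^{\alpha}_{t_n}-W^{\alpha}_{t_{n-1}})/\sqrt{\Delta t}$ is $\mathcal{F}_{t_{n-1}}$-conditionally exactly $\mathcal{N}(0,\sigma^2_{t_{n-1}})$ under $\PP^{\alpha}_{n-1}$, and
$$
\frac{\xi_n}{\sqrt{\Delta t}}-\frac{\sigma_{t_{n-1}}(W^{\alpha}_{t_n}-W^{\alpha}_{t_{n-1}})}{\sqrt{\Delta t}}
=\frac{1}{\sqrt{\Delta t}}\int_{t_{n-1}}^{t_n}\mu^{\alpha}_s\,ds
+\frac{1}{\sqrt{\Delta t}}\int_{t_{n-1}}^{t_n}(\sigma_s-\sigma_{t_{n-1}})\,dW^{\alpha}_s
$$
has conditional $\mathbb{L}^2$ norm at most $\epsilon(\Delta t)$ by Assumptions \ref{ass:A.alpha} and \ref{ass:main.L2.strong} and It\^o's isometry. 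A purely deterministic elementary lemma (the paper's Lemma \ref{gapprox}) then converts this $\mathbb{L}^2$ coupling into the weighted tail and truncated-mean estimates, with a modulus depending only on the ellipticity constant $C$ --- which is exactly where the uniformity in $\alpha$, $n$, and $p$ comes from. I recommend replacing the CLT skeleton with this coupling argument; your treatment of the drift and of the symmetrized statement is fine.
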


In order to prove Theorem \ref{thm:main.necessary} we need to compare the rates at which the conditional expectations $\EE^{\alpha}_n (p^0_{n+1} - p\, |\,p^0_{n+1}>p )$ vanish (as the frequency $N$ goes to infinity) to the rate at which the expected execution prices converge to the fundamental price. This requires a more delicate analysis -- in particular, the mere proximity of the distribution of a (normalized) fundamental price increment to the Gaussian distribution is no longer sufficient. In fact, what we need is a precise uniform estimate of the conditional tail of the distribution of a fundamental price increment. The desired property is formulated in the following lemma, which, we believe, is valuable in its own right. 
This result enables us to estimate the tails of the conditional marginal distribution of an It{\^o} process $X$ uniformly by an exponential. To the best of our knowledge, this result is new. The main difficulties in establishing the desired estimates are: (a) the fact that we estimate the \emph{conditional}, as opposed to the regular, tail, and (b) the fact that the estimates need to be uniform over the values of the argument. Note that, even in the case of a diffusion process $X$, the classical Gaussian-type bounds for the tails of the marginal distributions of $X$ are not sufficient to establish the desired estimates. The reason is that, in general, the Gaussian estimates of the regular tails from above and from below have different orders of decay, for the large values of the argument, which makes them useless for estimating the conditional tail (which is a ratio of two regular tails).


\begin{lemma}\label{le:necessary.marginal.maximum}
Consider the following continuous semimartingale on a stochastic basis $(\hat{\Omega},(\hat{\mathcal{F}}_t)_{t\in[0,1]},\hat{\PP})$:
$$
X_t = \int_0^t \hat{\mu}_u du + \int_{0}^t \hat{\sigma}_u dB_u,\,\,\,\,\,\,\,\,\,\,\,\,t\in[0,1],
$$
where $B$ is a Brownian motion (with respect to the given stochastic basis), $\hat{\mu}$ and $\hat{\sigma}$ are progressively measurable processes, such that the above integrals are well defined.
Assume that, for any stopping time $\tau$ with values in $[0,1]$, $c\leq |\hat{\sigma}_{\tau}| \leq C$ holds a.s. with some constants $c,C>0$. 
Then, there exists $\varepsilon>0$, depending only on $(c,C)$, s.t., if
$$
\hat{\mu}^2_{\tau}\leq \varepsilon,\quad \hat{\EE}\left( (\hat{\sigma}_{s\vee\tau} - \hat{\sigma}_{\tau})^2 \,|\,\hat{\mathcal{F}}_{\tau} \right) \leq \varepsilon\,\,\,\, \text{a.s.},
$$
for all $s\in[0,1]$ and all stopping time $\tau$, with values in $[0,1]$, then, for any $c_1>0$, there exists $C_1>0$, depending only on $(c,C,\varepsilon,c_1)$, s.t. the following holds:
$$
\hat{\PP}(X_1 > x+z\,\vert\, X_1>x) \leq C_1 e^{-c_1 z},\quad\forall x,z\geq0.
$$
\end{lemma}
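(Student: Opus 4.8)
The plan is to reduce the conditional tail bound to a comparison between the true process $X$ and a suitable Gaussian (or Brownian) benchmark, exploiting that on the relevant time scale $X_1$ is close in distribution to $\hat\sigma_0 B_1$ plus a small drift, and that for a true Gaussian the conditional tail decays \emph{exponentially} — not faster, not slower. The key structural difficulty, as the authors emphasize, is that two-sided Gaussian tail estimates have mismatched polynomial prefactors, so one cannot simply write $\hat\PP(X_1>x+z\mid X_1>x)=\hat\PP(X_1>x+z)/\hat\PP(X_1>x)$ and bound numerator and denominator separately by Gaussian estimates of different sharpness. Instead I would work directly with the conditional law, using a change-of-measure / Girsanov argument together with the Markov-type structure of the increments.

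Here is the order of steps I would carry out. \textbf{Step 1 (reduction to the tail increment).} Fix $x\geq0$. Condition on the first hitting time $\tau_x=\inf\{t:X_t=x\}\wedge 1$; on $\{X_1>x\}$ we have $\tau_x<1$ and, by the strong Markov property applied at $\tau_x$, the conditional law of $X_1-x$ given $\hat{\mathcal F}_{\tau_x}$ and $\{X_1>x\}$ is the law of the increment $X_1-X_{\tau_x}$ of an Itô process started at $0$ over a time interval of (random) length $1-\tau_x\in(0,1]$, conditioned to be positive, with coefficients $\hat\mu_{\tau_x+\cdot}$, $\hat\sigma_{\tau_x+\cdot}$ satisfying the same bounds and the same smallness hypotheses. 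Thus it suffices to prove the estimate for $x=0$, i.e. to bound $\hat\PP(X_1>z\mid X_1>0)$ uniformly in $z\geq0$ and uniformly over all admissible coefficient processes and all start-time restrictions — the price is that the time horizon is now an arbitrary value in $(0,1]$, which I would absorb by rescaling time (noting the smallness constants $\varepsilon$ only get better under shrinking the horizon, since $\hat\EE((\hat\sigma_{s\vee\tau}-\hat\sigma_\tau)^2\mid\hat{\mathcal F}_\tau)$ is monotone and $\hat\mu^2_\tau\le\varepsilon$ is horizon-free).

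\textbf{Step 2 (Gaussian comparison from below for the denominator).} Bound $\hat\PP(X_1>0)$ from below by a constant $c_0>0$ depending only on $(c,C,\varepsilon)$. Write $X_1=\hat\sigma_0 B_1 + R_1$ where $R_1=\int_0^1\hat\mu_u\,du+\int_0^1(\hat\sigma_u-\hat\sigma_0)\,dB_u$; by the drift bound and Itô isometry together with the smallness hypothesis, $\hat\EE R_1^2\leq \varepsilon + \varepsilon = 2\varepsilon$ (roughly), so $\hat\PP(|R_1|>c/2)$ is small by Chebyshev, while $\hat\PP(\hat\sigma_0 B_1 > c/2)\geq\hat\PP(c B_1>c/2)$ is a fixed positive constant; hence $\hat\PP(X_1>0)\geq c_0$ for $\varepsilon$ small enough. \textbf{Step 3 (exponential tail for the numerator).} For $\hat\PP(X_1>z)$ I would use the Dambis–Dubins–Schwarz representation, writing the martingale part of $X$ as a time-changed Brownian motion $\beta_{\langle X\rangle_1}$ with $c^2\le\langle X\rangle_1\le C^2$, plus the drift $\int_0^1\hat\mu_u\,du$ which is bounded in absolute value by $\sqrt\varepsilon$. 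Then $\hat\PP(X_1>z)\leq \hat\PP(\sup_{t\le C^2}\beta_t > z-\sqrt\varepsilon)$, and the reflection principle gives a Gaussian — in particular $\le 2\exp(-(z-\sqrt\varepsilon)^2/(2C^2))$ for $z>\sqrt\varepsilon$. Combining Steps 2 and 3, $\hat\PP(X_1>z\mid X_1>0)\leq (2/c_0)\exp(-(z-\sqrt\varepsilon)^2/(2C^2))$, and for $z\ge 1$ this is $\le C_1 e^{-c_1 z}$ for \emph{any} prescribed $c_1$ by absorbing the faster-than-exponential Gaussian decay into the constant $C_1=C_1(c,C,\varepsilon,c_1)$; for $0\le z\le 1$ the bound is trivial by enlarging $C_1$. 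Feeding this back through the conditioning of Step 1 — where the bound is uniform over the admissible shifted coefficients and horizons — yields $\hat\PP(X_1>x+z\mid X_1>x)\le C_1 e^{-c_1 z}$ for all $x,z\ge0$.

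\textbf{Main obstacle.} The delicate point is Step 1: making the strong-Markov/time-shift reduction rigorous when $X$ is a \emph{general} Itô process rather than a diffusion, so that there is no genuine Markov property to invoke. The fix is that we never actually need the Markov property of $X$ — we only need that, conditionally on $\hat{\mathcal F}_{\tau_x}$, the post-$\tau_x$ increment is again an Itô process whose drift and diffusion processes satisfy the \emph{same quantitative hypotheses}; this is exactly what the phrasing "for all stopping times $\tau$" in the hypotheses is designed to guarantee, and the regular conditional probability given $\hat{\mathcal F}_{\tau_x}$ exists on the (standard Borel) path space. The remaining care is purely bookkeeping: verifying that the smallness constant $\varepsilon$ and the bounds $c,C$ are preserved under (i) restarting at $\tau_x$ and (ii) the time-rescaling that normalizes the horizon $1-\tau_x$ back to $1$, and that the resulting $C_1$ depends only on $(c,C,\varepsilon,c_1)$ and not on $x$ or the particular coefficient processes.
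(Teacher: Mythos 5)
Your proposal is correct and follows essentially the same route as the paper's proof: both condition at the first passage time $\tau_x$ of the level $x$, bound the numerator by an exponential factor (you via Dambis--Dubins--Schwarz and the reflection principle, yielding a Gaussian tail; the paper via the exponential supermartingale $\exp(c_1Z_s-\tfrac12 c_1^2\langle Z\rangle_s)$ and Doob's inequality) times $\hat{\PP}(\tau_x<1)$, and bound the denominator from below by showing that the post-$\tau_x$ increment is positive with conditional probability close to $1/2$, so that $\hat{\PP}(X_1>x)\geq c_0\,\hat{\PP}(\tau_x<1)$. The one bookkeeping point to watch in your Step 1 is the normalization of the residual horizon $1-\tau_x$: rescaling time alone multiplies $\hat{\sigma}$ by $\sqrt{1-\tau_x}$ and destroys the lower bound $c$, so you must also rescale space diffusively (which is harmless, since the threshold $z$ is then replaced by $z/\sqrt{1-\tau_x}\geq z$ and the target bound only gets stronger) --- or, as the paper does, dispense with the rescaling and run the two estimates directly on $[\tau_x,1]$ under $\hat{\PP}_{\tau_x}$.
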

\begin{proof}
In the course of this proof, we will use the shorthand notation, $\hat{\EE}_{\tau}$ and $\hat{\PP}_{\tau}$, to denote the conditional expectation and the conditional probability w.r.t $\hat{\mathcal{F}}_{\tau}$.
We also denote
$$
A_t = \int_0^t \hat{\mu}_u du,
\,\,\,\,\,\,\,\,\,\,\,\,\,\,\,\,G_t = \int_{0}^t \hat{\sigma}_u dB_u.
$$
For any $x\geq0$, let us introduce $\tau_x = 1\wedge\inf\left\{t\in[0,1]\,:\, X_t = x \right\}$.
Then
$$
\hat{\PP}(X_1 > x+z) 
\leq \hat{\PP}(\sup_{t\in[0,1]} X_t > x+z)
= \hat{\EE} \left( \bone_{\left\{ \tau_x < 1 \right\}} \hat{\PP}_{\tau_x} \left(\sup_{s\in[\tau_x,1]} (X_s - x) > z \right) \right)
$$
Notice that, on $\left\{\tau_x\leq s \right\}$, we have: $X_s - x = A_{s\vee\tau_x} - A_{\tau_x} + G_{s\vee\tau_x} - G_{\tau_x}$.
In addition, the process $(Y)_{s\in[0,1]}$, with $Y_s = A_{s\vee\tau_x} - A_{\tau_x}$, is adapted to the filtration $(\hat{\mathcal{F}}_{\tau_x\vee s})$, while the process $(Z)_{s\in[0,1]}$, with $Z_s = G_{s\vee\tau_x} - G_{\tau_x}$, is a martingale with respect to it.
Next, on $\left\{\tau_x < 1 \right\}$, we have:
$$
\hat{\PP}_{\tau_x} \left(\sup_{s\in[\tau_x,1]} (X_s - x) > z \right)
= \hat{\PP}_{\tau_x} \left(\sup_{s\in[0,1]} (Y_s + Z_s) > z\right)
$$
$$
\leq \hat{\PP}_{\tau_x} \left(\sup_{s\in[0,1]} \exp\left(c_1Z_s - \frac{1}{2}c_1^2\langle Z\rangle_s \right) > \exp\left( c_1z - c_1\sqrt{\varepsilon} - \frac{1}{2} c_1^2C^2\right)\right),
$$
where we used the fact that $\langle Z\rangle_s \leq \langle X\rangle_1 \leq C^2$, for all $s\in[0,1]$. Using the Novikov's condition, it is easy to check that
$$
M_s = \exp\left(c_1Z_s - \frac{1}{2}c_1^2\langle Z\rangle_s \right),\,\,\,\,\,\,\,\,\,s\in[0,1],
$$
is a true martingale, and, hence, we can apply the Doob's martingale inequality to obtain, on $\left\{\tau_x < 1 \right\}$:
$$
\hat{\PP}_{\tau_x} \left(\sup_{s\in[0,1]} \exp\left(c_1Z_s - \frac{1}{2}c_1^2\langle Z\rangle_s \right) > \exp\left( c_1z - c_1\sqrt{\varepsilon} - \frac{1}{2}c_1^2 C^2\right)\right)
\leq \exp\left( -c_1z + c_1\sqrt{\varepsilon} + \frac{1}{2} c_1^2 C^2\right).
$$
Collecting the above inequalities, we obtain
\begin{equation}\label{eq.necessary.biglemma.step1.res}
\hat{\PP}(X_1 > x+z) \leq \hat{\PP}(\sup_{t\in[0,1]} X_t > x+z) \leq C_2(\varepsilon) e^{-c_1z} \hat{\PP}(\tau_x < 1) = C_2(\varepsilon) e^{-c_1z} \hat{\PP}(\sup_{t\in[0,1]} X_t > x).
\end{equation}
The next step is to estimate the distribution tails of a running maximum via the tails of the distribution of $X_1$.
To do this, we proceed as before:
\begin{equation}\label{eq.necessary.biglemma.step2.1}
\hat{\PP}(X_1 > x) 
= \hat{\EE} \left( \bone_{\left\{ \tau_x < 1 \right\}} \hat{\PP}_{\tau_x}\left(Y_1 + Z_1 > 0\right) \right),
\end{equation}
with $Y$ and $Z$ defined above.
Notice that, on $\left\{\tau_x < 1 \right\}$,
$$
\hat{\PP}_{\tau_x}\left(Y_1 + Z_1 > 0\right) 
= \hat{\PP}_{\tau_x}\left(\hat{\sigma}_{\tau_x} \frac{B_1-B_{\tau_x}}{\sqrt{1-\tau_x}} 
+ \frac{1}{\sqrt{1-\tau_x}} \int_{\tau_x}^{1} \hat{\mu}_{u} du
+ \frac{1}{\sqrt{1-\tau_x}} \int_{0}^{1} (\hat{\sigma}_{u\vee\tau_x} - \hat{\sigma}_{\tau_x}) dB^x_u > 0\right),
$$
where $B^x_s = B_{s\vee\tau_x}$ is a continuous square-integrable martingale with respect to $(\hat{\mathcal{F}}_{s\vee\tau_x})$.
Denote
$$
R_s = \int_{0}^{s} (\hat{\sigma}_{u\vee\tau_x} - \hat{\sigma}_{\tau_x}) dB^x_u,
\quad s\in[0,1],
$$
and notice that it is a square-integrable martingale with respect to $(\hat{\mathcal{F}}_{s\vee\tau_x})$. 
Then, on $\left\{\tau_x<1\right\}$ (possibly, without a set of measure zero), we have:
$$
\hat{\EE}_{\tau_x} \left(\frac{1}{\sqrt{1-\tau_x}} R_1 \right)^2
=\frac{1}{1-\tau_x} \hat{\EE}_{\tau_x} R^2_1
\leq \frac{1}{1-\tau_x}\int_{\tau_x}^{1} \hat{\EE}_{\tau_x}(\hat{\sigma}_{u\vee\tau_x} - \hat{\sigma}_{\tau_x})^2 du
\leq \varepsilon.
$$
In addition,
$$
\hat{\EE}_{\tau_x} \left( \frac{1}{\sqrt{1-\tau_x}} \int_{\tau_x}^{1} \hat{\mu}_{u} du \right)^2
\leq \varepsilon.
$$
Collecting the above and using Chebyshev's inequality, we obtain, on $\left\{\tau_x < 1 \right\}$:
$$
\left|\hat{\PP}_{\tau_x}\left(Y_1 + Z_1 > 0\right) 
- \hat{\PP}_{\tau_x}\left(\hat{\sigma}_{\tau_x} \frac{B_1-B_{\tau_x}}{\sqrt{1-\tau_x}} \leq -\varepsilon^{1/3} \right)\right|
\leq 2\varepsilon^{1/6}.
$$
On the other hand, due to the strong Markov property of Brownian motion, on $\left\{\tau_x<1\right\}$, we have, a.s.:
$$
\hat{\PP}_{\tau_x}\left(\hat{\sigma}_{\tau_x} \frac{B_1-B_{\tau_x}}{\sqrt{1-\tau_x}} \leq -\varepsilon^{1/3} \right)
 = \left.\hat{\PP} \left(\xi \leq -\frac{\varepsilon^{1/3}}{\sigma} \right)\right|_{\sigma=\hat{\sigma}_{\tau_x}},
$$
where $\xi$ is a standard normal. As $\hat{\sigma}_{\tau_x}\in[c,C]$, we conclude that the right hand side of the above converges to $1/2$, as $\varepsilon\rightarrow0$, uniformly over almost all random outcomes in $\left\{\tau_x<1\right\}$. In particular, for all small enough $\varepsilon>0$, we have:
$$
\bone_{\left\{\tau_x<1 \right\}} \left|\hat{\PP}_{\tau_x}\left(Y_1 + Z_1 \leq 0\right) - \hat{\PP}_{\tau_x}\left(Y_1 + Z_1 > 0\right) \right| \leq \bone_{\left\{\tau_x<1 \right\}} \delta(\varepsilon)<1,
$$
and, in view of (\ref{eq.necessary.biglemma.step2.1}),
$$
\hat{\PP}(X_1>x) \geq \hat{\EE} \left( \bone_{\left\{ \tau_x < 1 \right\}} \hat{\PP}_{\tau_x}\left(Y_1 + Z_1 \leq 0\right) \right) - \delta(\varepsilon) \hat{\PP}(\tau_x<1)
$$
Summing up the above inequality and (\ref{eq.necessary.biglemma.step2.1}), we obtain
$$
2\hat{\PP}(X_1>x) \geq (1-\delta(\varepsilon))\hat{\PP}(\tau_x<1) = (1-\delta(\varepsilon))\hat{\PP}(\sup_{t\in[0,1]}X_t > x),
$$
which, along with (\ref{eq.necessary.biglemma.step1.res}), yields the statement of the lemma.
\qed
\end{proof}

\section{Proof of Theorem \ref{le:main.zeroTermSpread}}
\label{se:pf.1}

Within the scope of this proof, we adopt the notation introduced in (\ref{eq.xi.not}) and use the following convention.
\begin{notation}\label{not:shift}
The LOB, the bid and ask prices, the expected execution prices, and the demand, are all measured relative to $p^0$. Namely, we use $\nu_n$ to denote $\nu_n\circ (x\mapsto x+p^0_n)^{-1}$, $p^a_n$ to denote $p^a_n-p^0_n$, $p^b_n$ to denote $p^b_n-p^0_n$, $\lambda^a_n$ to denote $\lambda^a_n-p^0_n$, $\lambda^b_n$ to denote $\lambda^b_n-p^0_n$, and $D_n(p)$ to denote $D_n(p^0_n+p)$.
\end{notation}

Herein, we are only concerned with what happens in the last trading period -- at time $(N-1)$, where $N=T/\Delta t$. Hence, we omit the subscript $N-1$ whenever it is clear from the context. In particular, we write $p^a$ and $p^b$ for $p^a_{N-1}$ and $p^b_{N-1}$, $\nu$ for $\nu_{N-1}$, and $\xi$ for $\xi_N$. Note also that, in an LTC equilibrium, we have: $p^a=p^a_N=p^a_{N-1}$, with similar equalities for $p^b$ and $\nu$. For convenience, we also drop the superscript $\Delta t$ in the LOB and the associated bid and ask prices. Finally, we denote by $\tilde{\mathbb{A}}$ the support of a given equilibrium.
As the roles of $p^a$ and $p^b$ in our model are symmetric, we will only prove the statement of the proposition for $p^b$. 
We are going to show that, under the assumptions of the theorem, there exists a constant $C_0>0$, depending only on the constant $C$ in Assumptions \ref{ass:sigma} and \ref{ass:A.alpha}, such that, for all small enough $\Delta t$, we have, $\PP$-a.s.: 
\begin{equation}\label{eq.prop1.target}
-C_0\leq p^b/\sqrt{\Delta t} < 0
\end{equation}
First, we introduce $\hat{A}^\alpha(p;x)$, which we refer to as the simplified objective: 
\begin{equation}\label{eq.simp.obj.def}
\hat{A}^\alpha(p;x)=\EE^\alpha_{N-1}\left((p-x-\xi)\bone_{\{\xi>p\}}\right).
\end{equation}
Recall that the expected relative profit from posting a limit sell order at price level $p$, in the last time period,\footnote{Recall that everything is measured relative to the  fundamental price, according to the Notational Convention \ref{not:shift}} is given by $A^\alpha(p;p^b_{N})$, where
\begin{equation}\label{eq.true.obj.def}
A^\alpha(p;x)=\EE^\alpha_{N-1}\left((p-x-\xi)\bone_{\{D^+_N(p-\xi)>\nu^+((-\infty,p))\}}\right).
\end{equation} 
The simplified objective is similar to $A^\alpha$, but it assumes that there are no orders posted at better prices than the one posted by the agent.
In particular, $\hat{A}^\alpha(p;x)=A^\alpha(p;x)$ for $p\le p^a$.
Corollary \ref{cor:piecewiseLin}, in Appendix A, states that, in equilibrium, $\PP$-a.s., if the agents in the state $(s,\alpha)$ post limit sell orders, then they post them at a price level $p$ that maximizes the true objective $A^\alpha(p;p^b)$. 
The following lemma shows that the value of the modified objective becomes close to the value of the true objective, for the agents posting limit sell orders close to the ask price.

\begin{lemma}\label{le:simp.to.true.val}
$\PP$-a.s., either $\nu^+(\{p^a\})>0$ or we have:
$$
\left\vert A^{\alpha}(p;p^b) - \hat{A}^{\alpha}(p^a;p^b)\right\vert\to0,
$$
as $p\downarrow p^a$, uniformly over all $\alpha\in\tilde{\mathbb{A}}$.
\end{lemma}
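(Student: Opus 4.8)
The plan is to reduce the claim to the right-continuity at the ask price $p^a$ of the map $p\mapsto A^{\alpha}(p;p^b)$, uniformly over $\alpha$, and to isolate the single genuine difficulty, which is an anti-concentration bound for the conditional law of the fundamental price increment $\xi=\xi_N$. If $\nu^{+}(\{p^a\})>0$ the first alternative holds, so throughout I assume $\nu^{+}(\{p^a\})=0$; since $\nu^{+}$ is a finite measure and $p^a=\inf\supp(\nu^{+})$, continuity from above then gives $\nu^{+}((-\infty,p))=\nu^{+}((p^a,p))\downarrow 0$ as $p\downarrow p^a$. I will also use that $\hat A^{\alpha}(p^a;p^b)=A^{\alpha}(p^a;p^b)$ (because $\nu^{+}((-\infty,p^a))=0$), so the claim is exactly that $A^{\alpha}(p;p^b)\to A^{\alpha}(p^a;p^b)$ as $p\downarrow p^a$, uniformly in $\alpha$.

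The first step is to put $A^{\alpha}$ in closed form. For $p>p^a$, since $D_N$ is strictly decreasing with $D_N(0)=0$ (Notational Convention \ref{not:shift}) and $\nu^{+}((-\infty,p))\ge0$, the execution event $\{D^{+}_N(p-\xi)>\nu^{+}((-\infty,p))\}$ agrees, up to a set that is $\PP^{\alpha}_{N-1}$-null once $\xi$ is known to carry no conditional atoms (see the last step), with $\{\xi>q_p\}$, where $q_p:=p-\ell_p$ and $\ell_p:=\sup\{q\le0:\ D_N(q)>\nu^{+}((-\infty,p))\}\le0$; moreover $\ell_p\uparrow 0$, hence $q_p\downarrow p^a$, as $p\downarrow p^a$. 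Consequently
$$
A^{\alpha}(p;p^b)=\EE^{\alpha}_{N-1}\big[(p-p^b-\xi)\bone_{\{\xi>q_p\}}\big]=\hat A^{\alpha}(q_p;p^b)+\ell_p\,\PP^{\alpha}_{N-1}(\xi>q_p),
$$
and the remainder term is $O(|\ell_p|)\to 0$ uniformly in $\alpha$. Hence it suffices to show $\hat A^{\alpha}(q;p^b)\to\hat A^{\alpha}(p^a;p^b)$ as $q\downarrow p^a$, uniformly in $\alpha$.

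Expanding directly, for $q>p^a$,
$$
\hat A^{\alpha}(q;p^b)-\hat A^{\alpha}(p^a;p^b)=(q-p^a)\,\PP^{\alpha}_{N-1}(\xi>q)+\EE^{\alpha}_{N-1}\big[(\xi-p^a+p^b)\,\bone_{\{p^a<\xi\le q\}}\big],
$$
so, bounding the second integrand by $|p^b|+(q-p^a)$ on $\{p^a<\xi\le q\}$,
$$
\big|\hat A^{\alpha}(q;p^b)-\hat A^{\alpha}(p^a;p^b)\big|\le(q-p^a)+\big(|p^b|+(q-p^a)\big)\,\PP^{\alpha}_{N-1}(p^a<\xi\le q).
$$
The hard part is the only ingredient that is not bookkeeping: a uniform anti-concentration estimate $\sup_{\alpha\in\tilde{\mathbb A}}\PP^{\alpha}_{N-1}(a<\xi\le b)\le\tilde C\,(b-a)/\sqrt{\Delta t}$, $\PP$-a.s. for all $a<b$, with $\tilde C$ depending only on the constant $C$ in Assumptions \ref{ass:sigma} and \ref{ass:A.alpha}. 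I would derive this from the fact that, under $\PP^{\alpha}_{N-1}$, $\xi$ is distributed as $\int_0^{\Delta t}\tilde\mu_u\,du+\int_0^{\Delta t}\tilde\sigma_u\,d\tilde W_u$ for some Brownian motion $\tilde W$ with $1/C\le\tilde\sigma\le C$ and $|\tilde\mu|\le C$, combined with the standard bounded-density estimate for one-dimensional It\^{o} processes with bounded, uniformly elliptic coefficients, whose constant depends only on those bounds and hence not on $\alpha$ (the ellipticity being precisely Assumption \ref{ass:sigma}, and, if needed to make the density argument rigorous, the $L^2$-regularity of $\sigma$ from Assumption \ref{ass:main.L2.strong} can be invoked along the lines of Appendix B); in particular $\xi$ has no conditional atoms, which retroactively validates the endpoint remark in the previous step. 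Plugging this bound into the last display yields $|\hat A^{\alpha}(q;p^b)-\hat A^{\alpha}(p^a;p^b)|\le(q-p^a)+(|p^b|+(q-p^a))\tilde C(q-p^a)/\sqrt{\Delta t}\to 0$ as $q\downarrow p^a$, uniformly over $\alpha$, since $|p^b|$ and $\Delta t$ are fixed and $\mathcal F_{N-1}$-measurable; combined with the previous step this proves the lemma.
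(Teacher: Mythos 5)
Your reduction (pass to the execution threshold, then control the conditional mass of a shrinking interval to the right of $p^a$) is the right shape, and your algebraic decomposition of $\hat{A}^{\alpha}(q;p^b)-\hat{A}^{\alpha}(p^a;p^b)$ is fine. But there are two genuine gaps. First, a measurability slip: $D_N$ is $\mathcal{F}_N$-measurable (it aggregates orders over $[N-1,N)$), so your $\ell_p$ and $q_p$ are \emph{not} known at time $N-1$; they are random under $\PP^{\alpha}_{N-1}$ and correlated with $\xi$, and the identity $A^{\alpha}(p;p^b)=\hat{A}^{\alpha}(q_p;p^b)+\ell_p\,\PP^{\alpha}_{N-1}(\xi>q_p)$ does not make sense as written (you cannot pull $\ell_p$ out of the conditional expectation, nor evaluate $\hat A^{\alpha}(\cdot;p^b)$ at a random point). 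This is exactly why the paper invokes Assumption \ref{ass:main.demandInv.unif}: it supplies an $\mathcal{F}_{N-1}$-measurable minorant $\kappa$ so that the symmetric difference between the execution event and $\{\xi>p\}$ is contained in $\{p<\xi\le p-\kappa^{-1}(\nu^+((-\infty,p)))\}$ with a time-$(N-1)$-measurable right endpoint. This gap is repairable, but the repair leads directly into the second, more serious one.

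The second gap is your anti-concentration bound $\sup_{\alpha}\PP^{\alpha}_{N-1}(a<\xi\le b)\le\tilde C(b-a)/\sqrt{\Delta t}$. Under $\PP^{\alpha}_{N-1}$ the increment $\xi$ is an It\^o increment with merely \emph{progressively measurable} bounded coefficients, not a Markovian diffusion, so the ``standard bounded-density estimate'' you appeal to does not apply; in fact the law of $\int_0^{\Delta t}\hat\sigma_u\,dW_u$ with $1/C\le\hat\sigma\le C$ adapted can even have atoms (take $\hat\sigma$ to run the Dambis--Dubins--Schwarz clock fast until the time-changed Brownian motion hits a level, then slow), so no such bound, and not even absolute continuity, follows from Assumptions \ref{ass:sigma} and \ref{ass:A.alpha}. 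Your fallback via Assumption \ref{ass:main.L2.strong} and the Appendix B machinery (Lemma \ref{gapprox}) only gives $\PP^{\alpha}_{N-1}(a<\xi\le b)\le M(b-a)/\sqrt{\Delta t}+2\gamma(\epsilon(\Delta t))$, and for the \emph{fixed} $\Delta t$ at which the lemma is stated the additive error $\gamma(\epsilon(\Delta t))>0$ does not vanish as $q\downarrow p^a$; Lemma \ref{le:necessary.marginal.maximum} controls conditional tails, not local concentration, so it does not help either. The paper sidesteps all of this: it only needs $\PP^{\alpha}_{N-1}\bigl(\xi\in(p^a,\,p-\kappa^{-1}(\nu^+((-\infty,p)))]\bigr)\to0$ uniformly in $\alpha$, and it gets this from uniform tightness of the family $\{\PP^{\alpha}_{N-1}\circ\xi^{-1}\}_{\alpha}$ plus Prokhorov's theorem, exploiting that the intervals are half-open at $p^a$ and shrink to the empty set — no density or anti-concentration estimate is required. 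You should replace your key estimate with an argument of that type (or add a hypothesis strong enough to guarantee a uniform conditional density, which the paper deliberately avoids).
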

\begin{proof}
If $\nu^+(\{p^a\})=0$, then $\nu^+$ is continuous at $p^a$, and $\nu^+((-\infty,p])\rightarrow0$, as $p\downarrow p^a$. Then, we have
$$
\left| A^{\alpha}(p;p^b) - \hat{A}^{\alpha}(p^a;p^b)\right|
$$
$$
=\left| \EE^{\alpha}_{N-1}\left((p - p^b-\xi)\bone_{\{D^+_N(p-\xi)>\nu^+((-\infty,p))\}}\right) 
- \EE^{\alpha}_{N-1}\left((p^a-p^b-\xi)\bone_{\{\xi>p^a\}}\right)\right|
$$
$$
\le|p-p^a|+\left\Vert p^a-p^b-\xi\right\Vert_{\mathbb{L}^2\left(\PP^{\alpha}_{N-1}\right)}\PP^{\alpha}_{N-1}\left(\xi>p^a,\,D^+_N(p-\xi)\le\nu^+((-\infty,p))\right)
$$
Thus, it suffices to show that: (i) $\left\Vert p^a-p^b-\xi \right\Vert_{\mathbb{L}^2(\PP^{\alpha}_{N-1})}$ is bounded by a finite random variable independent of $\alpha$, and (ii) 
$$
\PP^{\alpha}_{N-1}\left(\xi_N>p^a,\, D^+_N(p-\xi)\le\nu^+((-\infty, p))\right) \to0,
\quad \PP\text{-a.s.},
$$
as $p\downarrow p^a$, uniformly over $\alpha$.
For (i), we have:
$$
\left\Vert p^a-p^b-\xi \right\Vert_{\mathbb{L}^2(\PP^{\alpha}_{N-1})}\le |p^a-p^b|+\left\Vert\xi\right\Vert_{\mathbb{L}^2(\PP^{\alpha}_{N-1})} \leq |p^a-p^b| + 2C \sqrt{\Delta t},
$$
where the constant $C$ appears in Assumptions \ref{ass:sigma} and \ref{ass:A.alpha}.
For (ii), we note that 
$$
\{\xi_N>p^a,\, D^+_N(p-\xi) \le \nu^+((-\infty,p))\}
= \{\xi_N>p^a,\, \xi \le p - D^{-1}_N\left(\nu^+((-\infty,p))\right)\},
$$
as $D_N(\cdot)$ is strictly decreasing, with $D_N(0)=0$. 
Assumption \ref{ass:main.demandInv.unif} implies that 
$$
\kappa^{-1}(\nu^+((-\infty,p)))\leq D^{-1}_N(\nu^+((-\infty,p))) < 0,
$$ 
where $\kappa$ is known at time $N-1$.
Therefore, 
$$
\PP^{\alpha}_{N-1}\left(\xi>p^a,\, D^+_N(p-\xi)\le\nu^+((-\infty, p))\right)
\leq \PP^{\alpha}_{N-1} \left( \xi \in \left(p^a, p - \kappa^{-1}(\nu^+((-\infty,p))) \right] \right).
$$
It remains to show that, $\PP$-a.s., the right hand side of the above converges to zero, uniformly over all $\alpha$. \
Assume that it does not hold. Then, with positive probability $\PP$, there exists $\varepsilon>0$ and a sequence of $(p_k,\alpha_k)$, such that $p_k\downarrow p^a$ and 
$$
\PP^{\alpha_k}_{N-1} \left( \xi \in (p^a, p_k - \kappa^{-1}(\nu^+((-\infty,p_k))) ] \right) \geq \varepsilon.
$$
Notice that, $\PP$-a.s., the family of measures $\left\{ \hat{\mu}_k = \PP^{\alpha_k}_{N-1}\circ \xi^{-1} \right\}_k$ is tight. The latter follows, for example, from the fact that, $\PP$-a.s., the conditional second moments of $\xi$ are bounded uniformly over all $\alpha$ (which, in turn, is a standard exercise in stochastic calculus). Prokhorov's theorem, then, implies that there is a subsequence of these measures that converges weakly to some measure $\hat{\mu}$ on $\RR$. 
Next, notice that, for any fixed $k$ in the chosen subsequence, there exists a large enough $k'$, such that
$$
\left|\hat{\mu} \left( \left(p^a, p_k - \kappa^{-1}(\nu^+((-\infty,p_k))) \right] \right) - \mu_{k'}\left( \left(p^a, p_k - \kappa^{-1}(\nu^+((-\infty,p_k))) \right] \right)\right| \leq \varepsilon/2.
$$
Thus, for any $k$ in the subsequence, we have
$$
\hat{\mu} \left( \left(p^a, p_k - \kappa^{-1}(\nu^+((-\infty,p_k))) \right] \right) \geq \varepsilon/2.
$$
The above is a contradiction, as the intersection of the corresponding intervals, $(p^a, p_k - \kappa^{-1}(\nu^+((-\infty,p_k))) ]$, over all $k$ is empty.
\qed
\end{proof}

Now we are ready to prove the upper bound in (\ref{eq.prop1.target}).

\begin{lemma}\label{bidasksigns}
In any non-degenerate LTC equilibrium, $p^b<0<p^a$, $\PP$-a.s..
\end{lemma}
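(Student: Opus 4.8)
The plan is to argue by contradiction and to exploit the reflection symmetry of the model: the map $x\mapsto -x$ combined with interchanging the buy and sell sides carries a non-degenerate LTC equilibrium into another one (admissibility, non-degeneracy, and the LTC condition being preserved), under which the ask price $p^a$ becomes minus the bid price of the reflected equilibrium; hence it suffices to prove $p^b<0$ $\PP$-a.s. and then apply that statement to the reflected equilibrium to obtain $p^a>0$. As a preliminary I would record that $p^a$ and $p^b$ are finite $\PP$-a.s., since admissibility of $\nu$ (Definition \ref{def:admis.LOB}) together with $\EE^{\alpha}_{N-1}|\xi|<\infty$ forces $|p^a|,|p^b|<\infty$. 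So suppose, towards a contradiction, that $p^b\ge 0$ on an $\mathcal{F}_{N-1}$-measurable event of positive $\PP$-probability, and work on that event.

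There, non-degeneracy gives $\nu^+(\RR)>0$, so by the fixed-point relation (\ref{eq.nuplus.fixedpoint.def}) a positive $\mu_{N-1}$-mass of states $(s,\alpha)$ have optimal controls that post a \emph{limit sell order} ($r_{N-1}=0$, $q^+_{N-1}>0$) at a finite price; since $p^a=\inf\text{supp}(\nu^+)$, such orders occur at prices arbitrarily close to $p^a$. Distinguishing whether $\nu^+$ has an atom at $p^a$ or not (the dichotomy of Lemma \ref{le:simp.to.true.val}), I would select one such state whose posted price $p:=p_{N-1}(s,\alpha)\ge p^a$ has $\nu^+((-\infty,p))$ as small as we like: orders posted exactly at $p^a$ (atom case) have $\nu^+((-\infty,p^a))=0$, while orders posted at $p\downarrow p^a$ (no-atom case) have $\nu^+((-\infty,p))\to 0$. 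In particular I can ensure $\nu^+((-\infty,p))<\sup_y\kappa_{N-1}(y)$, where $\kappa_{N-1}$ is the $\mathcal{F}_{N-1}$-measurable, strictly decreasing lower bound for the (shifted) demand curve from Assumption \ref{ass:main.demandInv.unif} (taken at $s=t_{N-1}$), normalized so that $\kappa_{N-1}(0)=0$ and $D^+_N(y)\ge\kappa_{N-1}(y)\bone_{\{y<0\}}$.

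For this agent, optimality of posting a limit sell at $p$ — in particular its being at least as good as doing nothing — together with Corollary \ref{cor:piecewiseLin} yields $A^{\alpha}(p;p^b)=\max_{p'}A^{\alpha}(p';p^b)\ge 0$, where $A^{\alpha}(p;p^b)=\EE^{\alpha}_{N-1}[(p-p^b-\xi)\bone_{\{D^+_N(p-\xi)>\nu^+((-\infty,p))\}}]$. On the other hand, the execution event contains $\{\xi>p-\kappa_{N-1}^{-1}(\nu^+((-\infty,p)))\}$, which has strictly positive $\PP^{\alpha}_{N-1}$-probability because the conditional law of $\xi$ under $\PP^{\alpha}_{N-1}$ has full support $\RR$ (Assumption \ref{ass:sigma}, or Lemma \ref{gapproxapplied}), and on the execution event $D^+_N(p-\xi)>0$ forces $p-\xi<0$ (the shifted $D_N$ is strictly decreasing and vanishes at $0$), so that $p-p^b-\xi<-p^b\le 0$. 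Hence $A^{\alpha}(p;p^b)<0$, a contradiction; thus $p^b<0$ $\PP$-a.s., and the reflection argument gives $p^a>0$ $\PP$-a.s. I expect the delicate step to be precisely the verification that limit orders sitting at, or just above, the best ask execute with positive probability under the posting agent's subjective measure: this needs Assumption \ref{ass:main.demandInv.unif} (so a modest demand reaches the order), Assumption \ref{ass:sigma} (full support of the fundamental-price increment), and the observation that the competing mass $\nu^+((-\infty,p))$ strictly below $p$ can be made small by choosing $p$ near $p^a$ (which is where the atom/no-atom split of Lemma \ref{le:simp.to.true.val} is used) — without all three, an equilibrium that parks all sell limit orders so far out that they never execute would not be excluded by the sign argument alone.
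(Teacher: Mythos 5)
Your core argument is the same as the paper's: assume $p^b\ge0$ on a positive-probability $\mathcal{F}_{N-1}$-event, note that the relative profit of a limit sell order at $p$ is the expectation of $(p-p^b-\xi)$ over the execution event, that execution forces $\xi>p$ (so the integrand is $<-p^b\le0$), and that the execution probability is positive for orders at or just above the ask (full support of $\xi$, Assumption \ref{ass:main.demandInv.unif}, and smallness of the competing mass $\nu^+((-\infty,p))$ near $p^a$); conclude that posting near the ask is strictly suboptimal, contradicting non-degeneracy. Your pointwise sign observation is in fact a slightly more direct route to the strict negativity of $A^{\alpha}(p;p^b)$ than the paper's, which instead passes through the uniform convergence $A^{\alpha}(p;p^b)\to\hat{A}^{\alpha}(p^a;p^b)<0$ of Lemma \ref{le:simp.to.true.val}. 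The reflection reduction of $p^a>0$ to $p^b<0$ is acceptable (the paper simply notes the argument is symmetric), as is the finiteness remark.

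The genuine gap is the step ``I would select one such state.'' For each fixed $\omega$ in your event, the fixed-point relation (\ref{eq.nuplus.fixedpoint.def}) and $p^a=\inf\text{supp}(\nu^+)$ do give a positive $\mu_{N-1}(\omega)$-mass of states posting near $p^a(\omega)$ with small competing mass below; but this set of states depends on $\omega$, and the state space is uncountable, so it does not follow that a \emph{single} $(s,\alpha)$ posts near the ask on a set of $\omega$ of positive $\PP$-probability. You need a single such state because optimality (Definition \ref{def:optControl}, hence Corollary \ref{cor:piecewiseLin}) is only a $\PP$-a.s. statement for each fixed $(s,\alpha)$, with the exceptional null set allowed to vary with the state; a contradiction extracted at individual $\omega$'s proves nothing. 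The missing ingredient is the Fubini argument with the deterministic dominating measure $\mu^0_{N-1}$ of Assumption \ref{ass:dom.mu} (which you never invoke): the set of triples $(\omega,s,\alpha)$ at which the agent posts near the ask with small competing mass is jointly measurable and has $\omega$-sections of positive $\mu^0_{N-1}$-measure by absolute continuity of $\mu_{N-1}$, hence positive $\PP\otimes\mu^0_{N-1}$-measure, hence some section over a fixed $(s,\alpha)$ has positive $\PP$-measure, and on that section optimality still holds a.s. This is exactly the point the paper flags (``an agent's strategy only needs to be optimal up to a set of $\PP$-measure zero, and these sets can be different for different $(s,\alpha)$'') and resolves in the second half of its proof, only in the contrapositive direction: it shows that $\mu^0_{N-1}$-a.e.\ state a.s.\ does not post on $[p^a,\bar{p}]$, hence $\nu^+([p^a,\bar{p}])=0$, contradicting the definition of $p^a$. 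Either direction works, but some version of this measure-theoretic step is indispensable and is absent from your write-up.
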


\begin{proof}
We only show that $p^b<0$ holds, the other inequality being very similar.
Assume that $p^b\ge0$ on some positive $\PP$-probability set $\Omega'\in\mathcal{F}_{N-1}$. We are going to show that this results in a contradiction. 
First, Corollary \ref{cor:piecewiseLin}, in Appendix A, implies that, $\PP$-a.s., if the agents in state $(s,\alpha)$ post a limit sell order, then we must have: $\sup\limits_{p\in\RR} A^{\alpha}(p;p^b) \geq0$. 
In addition, on $\Omega'$, we have: $\hat{A}^{\alpha}(p^a;p^b)<0$ for all $\alpha\in\tilde{\mathbb{A}}$, as $\xi$ has full support in $\RR$ under every $\PP^{\alpha}_{N-1}$ (which, in turn, follows from the fact that $\sigma$ is bounded uniformly away from zero). Then, Lemma \ref{le:simp.to.true.val} implies that there exists a $\mathcal{F}_{N-1}$-measurable $\bar{p}\geq p^a$, such that, on $\Omega'$, the following holds a.s.: if $\nu^+(\{p^a\})=0$ then $\bar{p}>p^a$, and, in all cases,
\begin{equation}\label{eq.bidasksigns.ubopt}
A^{\alpha}(p;p^b) < 0,\,\,\,\,\,\,\,\,\,\forall p\in[p^a, \bar{p}], \,\,\,\,\forall \alpha\in\tilde{\mathbb{A}}
\end{equation}
Clearly, it is suboptimal for an agent to post a limit sell order below $\bar{p}$.
However, an agent's strategy only needs to be optimal up to a set of $\PP$-measure zero, and these sets can be different for different $(s,\alpha)$. Therefore, a little more work is required to obtain the desired contradiction.
Consider the set $B\subset \Omega'\times\mathbb{\RR}\times\tilde{\mathbb{A}}$:
$$
B = \left\{(\omega,s,\alpha)\,|\, \hat{q}(s,\alpha)>0,\,\,\hat{p}(s,\alpha)\leq \bar{p} \right\}.
$$
This set is measurable with respect to $\mathcal{F}_{N-1}\otimes \mathcal{B}\left(\mathbb{\RR}\times\tilde{\mathbb{A}}\right)$, due to the measurability properties of $\hat{q}$ and $\hat{p}$. Notice that, due to the above discussion and the optimality of agents' actions (cf. Corollary \ref{cor:piecewiseLin}, in Appendix A), for any $(s,\alpha)\in\mathbb{\RR}\times\tilde{\mathbb{A}}$, we have:
$$
\PP(\left\{\omega\,|\, (\omega,s,\alpha)\in B \right\}) = 0,
$$
and hence
$$
\EE_{N-1} \int_{\mathbb{\RR}\times\tilde{\mathbb{A}}} \bone_{B}(\omega,s,\alpha) \mu_{N-1}(ds,d\alpha)
= \int_{\mathbb{\RR}\times\tilde{\mathbb{A}}} \EE_{N-1}\left(\bone_{B}(\omega,s,\alpha) \rho_{N-1}(\omega,s,\alpha)\right) \mu^0_{N-1}(ds,d\alpha)
= 0,
$$
where $\rho_{N-1}$ is the Radon-Nikodym density of $\mu_{N-1}$ w.r.t. to the deterministic measure $\mu^0_{N-1}$ (cf. Assumption \ref{ass:dom.mu}).
The above implies that, $\PP_{N-1}$-a.s., $\bone_{B}(\omega,s,\alpha)\rho_{N-1}(\omega,s,\alpha)=0$, for $\mu^0_{N-1}$-a.e. $(s,\alpha)$.
Notice also that, for all $(\omega,s,\alpha)\in \Omega'\times\mathbb{\RR}\times\tilde{\mathbb{A}}$,
$$
\bone_{\left\{\hat{p}(s,\alpha)\leq \bar{p} \right\}} \hat{q}^+(s,\alpha) \bone_{B^c} = 0.
$$
From the above observations and the condition (\ref{eq.nuplus.fixedpoint.def}) in the definition of equilibrium (cf. Definition \ref{def:equil.def}), we conclude that, on $\Omega'$, the following holds a.s.:
$$
\nu^+([p^a,\bar{p}])=0,
$$ 
where $\bar{p}\geq p^a$, and, if $\nu^+(\{p^a\})=0$, then $\bar{p}> p^a$.
This contradicts the definition of $p^a$ (recall that $p^a$ is $\PP$-a.s. finite, due to non-degeneracy of the LOB).
\qed
\end{proof}

It only remains to prove the lower bound on $p^b$ in (\ref{eq.prop1.target}). 
Assume that it does not hold. That is, assume that there exists a family of equilibria, with arbitrary small $\Delta t$, and positive $\PP$-probability $\mathcal{F}_{N-1}$-measurable sets $\Omega^{\Delta t}$, such that $p^b<-C_0\sqrt{\Delta t}$ on $\Omega^{\Delta t}$. We are going to show that this leads to a contradiction with $p^a>0$.
To this end, assume that the agents maximize the simplified objective function, $\hat{A}^{\alpha}$, instead of the true one, $A^{\alpha}$. Then, if $p^b$ is negative enough, the optimal price levels become negative for all $\alpha$. The precise formulation of this is given by the following lemma.

\begin{lemma}\label{gap}
There exists a constant $C_0>0$, s.t., for any small enough $\Delta t$, there exist constants $\epsilon,\delta>0$, s.t., $\PP$-a.s., we have
$$
\hat{A}^{\alpha}(-\delta;x)\ge\epsilon+\sup\limits_{y\ge0}\hat{A}^\alpha(y;x),
$$
for all $\alpha\in\tilde{\mathbb{A}}$ and all $x\le-C_0\sqrt{\Delta t}$.
\end{lemma}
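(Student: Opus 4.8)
The plan is to rescale all quantities by $\sqrt{\Delta t}$, reduce the assertion to an explicit deterministic estimate for a Gaussian ``reference'' functional, and then transfer it back to the actual conditional distributions using Lemma \ref{gapproxapplied}. Write $\tilde\xi=\xi/\sqrt{\Delta t}$ (recall $\xi=\xi_N$), $\tilde p=p/\sqrt{\Delta t}$, $\tilde x=x/\sqrt{\Delta t}$, and, for $\sigma\in[1/C,C]$ (with $C$ from Assumptions \ref{ass:sigma} and \ref{ass:A.alpha}), define
$$
g(\tilde p,\tilde x;\sigma)=\EE\left[(\tilde p-\tilde x-\sigma\eta_0)\bone_{\{\sigma\eta_0>\tilde p\}}\right]=(\tilde p-\tilde x)\bar\Phi_\sigma(\tilde p)-\int_{\tilde p}^{\infty}z\,\phi_\sigma(z)\,dz,
$$
where $\phi_\sigma,\bar\Phi_\sigma$ denote the density and the survival function of $\mathcal N(0,\sigma^2)$, and $\eta_0$ is the auxiliary standard normal of Lemma \ref{gapproxapplied}. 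Since $\hat A^{\alpha}(p;x)=\sqrt{\Delta t}\,\EE^{\alpha}_{N-1}\big[(\tilde p-\tilde x-\tilde\xi)\bone_{\{\tilde\xi>\tilde p\}}\big]$, the inner expectation equals $(\tilde p-\tilde x)\PP^{\alpha}_{N-1}(\tilde\xi>\tilde p)-\EE^{\alpha}_{N-1}(\tilde\xi\bone_{\{\tilde\xi>\tilde p\}})$, and conditionally on $\mathcal F_{N-1}$ the variable $\sigma_{t_{N-1}}\eta_0$ is $\mathcal N(0,\sigma_{t_{N-1}}^2)$, Lemma \ref{gapproxapplied} yields, $\PP$-a.s.\ for all $\alpha\in\mathbb A$ and all $\tilde x$: for every $\tilde p\ge0$, and for $\tilde p=-\tilde\delta$ with any $\tilde\delta\in(0,1]$,
$$
\left|\,\EE^{\alpha}_{N-1}\big[(\tilde p-\tilde x-\tilde\xi)\bone_{\{\tilde\xi>\tilde p\}}\big]-g(\tilde p,\tilde x;\sigma_{t_{N-1}})\,\right|\le(2+|\tilde x|)\,\varepsilon(\Delta t),
$$
since $|\tilde p-\tilde x|/(|\tilde p|\vee1)\le1+|\tilde x|$ in this range, which absorbs the $(|p|\vee1)$-weight in part (i), while part (ii) controls the truncated-mean term.

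Next I would carry out a deterministic analysis of $g$. A direct differentiation gives $\partial_{\tilde p}g(\tilde p,\tilde x;\sigma)=\bar\Phi_\sigma(\tilde p)+\tilde x\,\phi_\sigma(\tilde p)$. As the Mills ratio $\bar\Phi_\sigma/\phi_\sigma$ is strictly decreasing and equals $\sigma\sqrt{\pi/2}$ at the origin, choosing $C_0\ge C\sqrt{\pi/2}$ forces $\partial_{\tilde p}g<0$ on all of $[0,\infty)$ whenever $\tilde x\le-C_0$, so $\sup_{\tilde p\ge0}g(\tilde p,\tilde x;\sigma)=g(0,\tilde x;\sigma)$. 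Moreover $\phi_\sigma(\tilde p)\ge c_1:=\frac{1}{C\sqrt{2\pi}}e^{-C^2/2}$ for $|\tilde p|\le1$ and $\sigma\in[1/C,C]$, hence $\partial_{\tilde p}g(\tilde p,\tilde x;\sigma)\le1-|\tilde x|c_1$ for $\tilde p\in[-\tilde\delta,0]$, $\tilde\delta\le1$, and integrating,
$$
g(-\tilde\delta,\tilde x;\sigma)-\sup_{\tilde p\ge0}g(\tilde p,\tilde x;\sigma)=-\int_{-\tilde\delta}^{0}\partial_{\tilde p}g(\tilde p,\tilde x;\sigma)\,d\tilde p\ge\tilde\delta\,(|\tilde x|c_1-1).
$$
Fixing $C_0:=\max\{C\sqrt{\pi/2},\,4/c_1\}$, a constant depending only on $C$, this ``gain'' is at least $\tfrac{3}{4}|\tilde x|c_1\tilde\delta$ for every $\tilde x\le-C_0$ and $\sigma\in[1/C,C]$.

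Finally I would combine the two steps. Set $\tilde\delta=\tilde\delta(\Delta t):=\sqrt{\varepsilon(\Delta t)}$ (so $\tilde\delta\le1$ for small $\Delta t$) and $\delta:=\tilde\delta\sqrt{\Delta t}$. Applying the approximation bound at $\tilde p=-\tilde\delta$ and at the maximizer over $\tilde p\ge0$, we get, $\PP$-a.s., for all $\alpha\in\tilde{\mathbb A}$ and all $x\le-C_0\sqrt{\Delta t}$,
$$
\frac{\hat A^{\alpha}(-\delta;x)-\sup_{y\ge0}\hat A^{\alpha}(y;x)}{\sqrt{\Delta t}}\ge g(-\tilde\delta,\tilde x;\sigma_{t_{N-1}})-\sup_{\tilde p\ge0}g(\tilde p,\tilde x;\sigma_{t_{N-1}})-2(2+|\tilde x|)\varepsilon(\Delta t)\ge\tfrac{3}{4}|\tilde x|c_1\tilde\delta-2(2+|\tilde x|)\varepsilon(\Delta t).
$$
With $\tilde\delta=\sqrt{\varepsilon(\Delta t)}$ and $|\tilde x|\ge C_0\ge2$, the first term is of order $|\tilde x|\sqrt{\varepsilon(\Delta t)}$ and the second of order $|\tilde x|\,\varepsilon(\Delta t)$, so for all small enough $\Delta t$ the right-hand side is bounded below by $\sqrt{\varepsilon(\Delta t)}>0$, uniformly over $\tilde x\le-C_0$. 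Multiplying by $\sqrt{\Delta t}$ gives the claim with $\epsilon:=\sqrt{\Delta t\,\varepsilon(\Delta t)}>0$.

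The step I expect to be the main obstacle is making the comparison uniform in the \emph{unbounded} parameter $\tilde x=x/\sqrt{\Delta t}$, which ranges over $(-\infty,-C_0]$; a mere compactness/continuity argument on a fixed bounded set does not suffice. One has to extract the \emph{quantitative} lower bound on the gain $g(-\tilde\delta,\tilde x;\cdot)-\sup_{\tilde p\ge0}g(\tilde p,\tilde x;\cdot)$ — which grows linearly in $|\tilde x|$ — and play it against the approximation error from Lemma \ref{gapproxapplied}, which also grows linearly in $|\tilde x|$ (the factor $\tilde p-\tilde x$ multiplies the tail estimate). This forces the calibration $\tilde\delta\asymp\sqrt{\varepsilon(\Delta t)}$: $\tilde\delta$ must vanish with $\Delta t$ (to keep it $\le1$ and $\delta$ small) but strictly slower than $\varepsilon(\Delta t)$, so the gain dominates the error. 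Careful tracking of the $(|p|\vee1)$-weight in Lemma \ref{gapproxapplied}(i) is also needed to ensure the error stays linear, rather than superlinear, in $|\tilde x|$.
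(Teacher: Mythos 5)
Your proof is correct, but it takes a genuinely different route from the paper's at the one point where the argument is delicate --- the uniformity in the unbounded parameter $x\le -C_0\sqrt{\Delta t}$. The paper disposes of this issue structurally: it observes that $\bar{A}^{\alpha}(-\delta;x)-\sup_{y\ge0}\bar{A}^{\alpha}(y;x)$ is non-increasing in $x$ (the coefficient of $x$ in $\bar{A}^{\alpha}(-\delta;x)-\bar{A}^{\alpha}(y;x)$ is $-\PP^{\alpha}_{N-1}(-\delta<\bar{\xi}\le y)\le0$), so it suffices to prove the gap at the single value $x=-C_0$; there the Gaussian approximation error is uniformly $O(\varepsilon(\Delta t))$ with a constant depending only on $C_0$, and the deterministic analysis proceeds via the explicit unique maximizer $p_\sigma(x)=\sigma\left((1-F)/f\right)^{-1}(-x/\sigma)$ of $A_\sigma(\cdot;x)$, with compactness in $\sigma\in[1/C,C]$ supplying fixed constants $\delta,\epsilon$ that (in rescaled units) do not depend on $\Delta t$. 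You instead keep all $\tilde x\le -C_0$ in play, extract a gain $g(-\tilde\delta,\tilde x;\sigma)-\sup_{\tilde p\ge0}g\ \gtrsim\ |\tilde x|c_1\tilde\delta$ from the Mills-ratio/derivative bound, and beat the approximation error $(2+|\tilde x|)\varepsilon(\Delta t)$ --- which also grows linearly in $|\tilde x|$ --- by calibrating $\tilde\delta=\sqrt{\varepsilon(\Delta t)}$. Both arguments are sound and both deliver exactly what the lemma asserts (the statement permits $\epsilon,\delta$ to depend on $\Delta t$). What the paper's monotonicity trick buys is stronger constants: after rescaling, $\delta$ and the gap are absolute, so $\delta\asymp\sqrt{\Delta t}$ rather than your $\sqrt{\Delta t\,\varepsilon(\Delta t)}$; what your version buys is a fully quantitative, self-contained estimate that never needs the monotonicity observation or the compactness argument. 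Two cosmetic points: your choice $\tilde\delta=\sqrt{\varepsilon(\Delta t)}$ degenerates if $\varepsilon(\Delta t)=0$ for some $\Delta t$ (trivially repaired by taking $\tilde\delta=\max\{\sqrt{\varepsilon(\Delta t)},\tilde\delta_0\}$ for a small fixed $\tilde\delta_0$, since in that case there is no error to beat), and the supremum over $\tilde p\ge0$ of the true conditional objective need not be attained, but your uniform-in-$\tilde p$ error bound already covers this.
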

\begin{proof}
Denote $\bar{\xi}=\xi/\sqrt{\Delta t}$ and consider the random function
$$
\bar{A}^\alpha(p;x)=\een\left((p-x-\bar{\xi})\bone_{\{\bar{\xi}>p\}}\right).
$$ 
Notice that 
$$
\hat{A}^\alpha(p;x)=\sqrt{\Delta t}\bar{A}^\alpha\left(p/\sqrt{\Delta t}; x/\sqrt{\Delta t}\right),
$$ 
and, hence, we can reformulate the statement of the lemma as follows: there exists a constant $C_0>0$, s.t., for any small enough $\Delta t$, there exist constants $\epsilon,\delta>0$, s.t., $\PP$-a.s., we have
$$
\bar{A}^{\alpha}(-\delta;x)\ge\epsilon+\sup\limits_{y\ge0} \bar{A}^\alpha(y;x),
$$
for all $\alpha\in\tilde{\mathbb{A}}$ and all $x\le-C_0$.
Notice that
\begin{multline*}
\bar{A}^\alpha(-\delta;x)-\bar{A}^\alpha(y;x)
= -x\een\left(\bone_{\{-\delta<\bar{\xi}\le y\}}\right) - \een\left(\xi\bone_{\{-\delta<\bar{\xi}\le y\}}\right)
- \delta\een\left(\bone_{\{\bar{\xi}>-\delta\}}\right) - y\een\left(\bone_{\{\bar{\xi}>y\}}\right)
\end{multline*}
is non-increasing in $x$, and, hence, such is $\bar{A}^{\alpha}(-\delta;x)-\sup\limits_{y\ge0}\bar{A}^\alpha(y;x)$. Hence, it suffices to prove the above statement for $x=-C_0$.
Next, consider the deterministic function $A_\sigma(p;x)$, defined via 
\begin{equation}\label{eq.Asigma.def}
A_\sigma(p;x)=\hat{\EE}\left((p-x-\sigma\eta_0)\bone_{\{\sigma\eta_0>p\}}\right),
\end{equation}
where $\eta_0$ is a standard normal random variable on some auxiliary probability space $(\hat{\Omega},\hat{\PP})$.  It follows from Lemma \ref{gapproxapplied} that there exists a function $\varepsilon_2(\cdot)\ge0$, s.t. $\varepsilon_2(\Delta t)\to0$, as $\Delta t\to0$, and, $\PP$-a.s., we have
$$
\left|\bar{A}^\alpha(p;-C_0)-A_{\sigma_{t_{N-1}}}(p;-C_0)\right|\le\varepsilon_2(\Delta t),
$$
for all $\alpha\in\tilde{\mathbb{A}}$ and all $p\in\RR$. Then, as we can always choose $\Delta t$ small enough, so that $\varepsilon_2(\Delta t)<\epsilon$, the statements of the lemma would follow if we can show that there exist constants $\epsilon,\delta,C_0>0$, s.t., $\PP$-a.s.,
$$
A_{\sigma_{t_{N-1}}}(-\delta;-C_0)\ge3\epsilon+\sup\limits_{y\ge0}A_{\sigma_{t_{N-1}}}(y;-C_0)
$$
As $\sigma_{t_{N-1}}(\omega)\in[1/C,C]$, $\PP$-a.s., it suffices to find $\epsilon,\delta,C_0>0$, s.t.
$$
A_\sigma(-\delta;-C_0)\ge3\epsilon+\sup\limits_{y\ge0}A_\sigma(y;-C_0), \quad\forall\,\sigma\in[1/C,C].
$$
Note that the above inequality does not involve $\omega$ or $\xi$, and it is simply a property of a deterministic function. 
Notice also that $A_\sigma(p;x)=\sigma A_1\left(p/\sigma;x/\sigma\right)$, with $A_1$ given in (\ref{eq.Asigma.def}).
Then, if we denote by $F(x)$ and $f(x)$, respectively, the cdf and pdf of a standard normal, we obtain
$$
A_1(p;x)=(p-x)(1-F(p))-\int_p^{\infty} t f(t)\text{d}t.
$$ 
A straightforward calculation gives us the following useful properties of $A_1$ and $A_\sigma$

\par\noindent(i) For any $\sigma>0$ and any $x<0$, the function $p\mapsto A_\sigma(p;x)$ has a unique maximizer $p_\sigma(x)$, in particular, it is increasing in $p\le p_\sigma(x)$ and decreasing in $p\ge p_\sigma(x)$.

\par\noindent(ii) The function 
$$
x\mapsto p_\sigma(x)=\sigma p_1(x/\sigma)=\sigma\left((1-F)/f\right)^{-1}(-x/\sigma)
$$ 
is increasing in $x<0$ and converges to $-\infty$, as $x\to-\infty$.

Then, choosing $C_0$ large enough, so that $p_1(-C_0/C)<0$, ensures $p_\sigma(-C_0)<0$, for all $\sigma\in[1/C,C]$. Setting $\delta=-p_1(-C_0/C)/C$ guarantees that $p_\sigma(-C_0)\le-\delta$, for all $\sigma\in[1/C,C]$. Then, by property (i) above, we have, for all $\sigma\in[1/C,C]$
$$
A_\sigma(-\delta;-C_0)>A_\sigma(0;-C_0)=\sup\limits_{y\ge0}A_\sigma(y;-C_0).
$$
Finally, as $A_\sigma(-\delta;-C_0)-A_\sigma(0;-C_0)$ is a continuous function of $\sigma\in[1/C,C]$, we can find $\epsilon$, such that
$$
A_\sigma(-\delta;-C_0)\ge3\epsilon+\sup\limits_{y\ge0}A_\sigma(y;-C_0), \quad\forall\,\sigma\in[1/C,C].
$$
\qed
\end{proof}

Recall that our assumption is that $p^b<-C_0\sqrt{\Delta t}$ holds on a set $\Omega^{\Delta t}$ of positive $\PP$-measure. Recall also that $p^a>0$, $\PP$-a.s., due to Lemma \ref{bidasksigns}.
Then, Lemmas \ref{le:simp.to.true.val} and \ref{gap} imply that there exists $\mathcal{F}_{N-1}$-measurable $\bar{p}\geq p^a$, s.t., on $\Omega^{\Delta t}$, we have a.s.: if $\nu^+(\{p^a\})=0$ then $\bar{p}>p^a$, and, in all cases,
$$
A^{\alpha}(p;p^b) < \sup_{p'\in\RR} A^{\alpha}(p';p^b),\,\,\,\,\,\,\,\,\forall p\in[p^a,\bar{p}],\,\,\,\,\forall \alpha\in\tilde{\mathbb{A}}.
$$
It is intuitively clear that posting limit sell orders at the above price levels $p$ must be suboptimal for the agents. However, the above inequality, on its own, does not yield a contradiction, as the agents' strategies are only optimal up to a set of $\PP$-probability zero, and these sets may be different for different states $(s,\alpha)$. To obtain a contradiction with the definition of $p^a$, we simply repeat the last part of the proof of Lemma \ref{bidasksigns} (following equation (\ref{eq.bidasksigns.ubopt})). This ensures that (\ref{eq.prop1.target}) holds and completes the proof of the theorem.

\section{Proof of Theorem \ref{thm:main.necessary}}
\label{se:pf.2}

Within the scope of this proof, we adopt the notation introduced in (\ref{eq.xi.not}) and use Notational Convention \ref{not:shift} (i.e. we measure the LOB, the expected execution prices, and the demand, relative to $p^0$, but keep the same variables' names).
Assume that the statement of the theorem does not hold: i.e., there exists $\alpha_0\in\tilde{\mathbb{A}}$, such that $\tilde{p}^0$ is not a martingale under $\PP^{\alpha_0}$. Then, there exists $s\in[0,T)$, s.t., with positive probability $\PP^{\alpha_0}$, we have
$$
\tilde{\EE}^{\alpha_0}_{s}\tilde{p}^0_T \neq \tilde{p}^0_{s}.
$$ 
Without loss of generality, we assume that there exists a constant $\delta>0$ and a set $\Omega'\in\mathcal{F}_{s}$, having positive probability $\PP^{\alpha_0}$ (and hence $\PP$), s.t., for all random outcomes in $\Omega'$, we have 
\begin{equation}\label{eq.thm1.delta.def}
\tilde{\EE}^{\alpha_0}_{s}(\tilde{p}^0_T - \tilde{p}^0_{s})\geq \delta
\end{equation}
(the case of negative values is analogous). 
Next, we fix an arbitrary $\Delta t$ from a given family and consider the associated non-degenerate LTC equilibrium.

\begin{lemma}\label{le:necessary.1}
There exists a deterministic function $\varepsilon(\cdot)\geq0$, s.t. $\varepsilon(\Delta t)\rightarrow0$, as $\Delta t\rightarrow0$, and, for any small enough $\Delta t>0$, there exists $n=0,\ldots,N - 3$ and $\Omega''\in\mathcal{F}_n$, s.t. $\PP^{\alpha_0}_n(\Omega'')>0$ and the following holds on $\Omega''$
$$
\PP^{\alpha_0}_{n+2} \left( \EE^{\alpha_0}_{n+3} \left(p^0_N - p^0_{n+3} \right) \leq \delta/2\right) \leq \varepsilon(\Delta t).
$$
\end{lemma}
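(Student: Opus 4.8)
The plan is to reduce everything to a single uniform \emph{right-continuity-in-probability} estimate for the \emph{signal} process. Set
\[
m_t := \tilde{\EE}^{\alpha_0}_t\left(\tilde{p}^0_T-\tilde{p}^0_t\right)=\int_t^T\tilde{\EE}^{\alpha_0}_t\mu^{\alpha_0}_u\,du,\qquad t\in[0,T],
\]
where the second equality uses the dynamics \eqref{eq.p0.cont.a.alpha} under $\PP^{\alpha_0}$ (the stochastic integral is a true $\PP^{\alpha_0}$-martingale by Assumption \ref{ass:sigma}) and the conditional Fubini theorem (legitimate since $|\mu^{\alpha_0}|\le C$, Assumption \ref{ass:A.alpha}); with the notation of \eqref{eq.xi.not}, $g_j:=\EE^{\alpha_0}_j\left(p^0_N-p^0_j\right)=m_{t_j}$. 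I claim there is $\varepsilon_1(\cdot)\ge0$ with $\varepsilon_1(\Delta t)\to0$ such that
\begin{equation*}
\tilde{\PP}^{\alpha_0}_t\left(\left|m_t-m_{t+\Delta t}\right|>\varepsilon_1(\Delta t)\right)\le\varepsilon_1(\Delta t),\qquad\PP^{\alpha_0}\text{-a.s.},\ \forall\,t\in[0,T-\Delta t].\tag{$\star$}
\end{equation*}
This is where Assumption \ref{ass:main.mu.cont.strong} enters: taking there $t=t'$, $t''=t+\Delta t$ controls $\int_t^T\big(\tilde{\EE}^{\alpha_0}_{t+\Delta t}\mu^{\alpha_0}_u-\tilde{\EE}^{\alpha_0}_t\mu^{\alpha_0}_u\big)\,du$ up to $\varepsilon(\Delta t)$ off a set of $\tilde{\PP}^{\alpha_0}_t$-measure $\le\varepsilon(\Delta t)$, and $m_t-m_{t+\Delta t}$ differs from this integral only by $\int_t^{t+\Delta t}\tilde{\EE}^{\alpha_0}_t\mu^{\alpha_0}_u\,du$ and $\int_t^{t+\Delta t}\big(\tilde{\EE}^{\alpha_0}_t\mu^{\alpha_0}_u-\tilde{\EE}^{\alpha_0}_{t+\Delta t}\mu^{\alpha_0}_u\big)\,du$, each of modulus $\le 2C\Delta t$; absorbing these $O(\Delta t)$ terms into $\varepsilon_1$ yields $(\star)$.

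Next I would transfer the hypothesis \eqref{eq.thm1.delta.def} onto the grid. Let $t_k$ be the first grid point with $t_k\ge s$; since $s<T$, for all small enough $\Delta t$ we have $k\le N-3$. Applying Assumption \ref{ass:main.mu.cont.strong} with $t=t'=s$, $t''=t_k$ (admissible because $s\le t_k\le s+\Delta t\le T$) and again discarding the two boundary integrals over $[s,t_k]$ (each $O(\Delta t)$) gives $|m_s-g_k|\le\varepsilon_1(\Delta t)$ off a set of $\tilde{\PP}^{\alpha_0}_s$-measure $\le\varepsilon_1(\Delta t)$. Hence, on $\Omega'$ (where $m_s\ge\delta$) and once $\varepsilon_1(\Delta t)\le\delta/4$, we get $\tilde{\PP}^{\alpha_0}_s\big(g_k\ge 3\delta/4\big)\ge 1-\varepsilon_1(\Delta t)$; integrating over $\Omega'\in\mathcal{F}_s$ shows $\{g_k\ge 3\delta/4\}\in\mathcal{F}_{t_k}$ has $\PP^{\alpha_0}$-probability at least $(1-\varepsilon_1(\Delta t))\,\PP^{\alpha_0}(\Omega')>0$.

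Now take $n:=k$ and $\Omega'':=\{g_{n+2}>\delta/2+\varepsilon_1(\Delta t)\}$ (so $\Omega''\in\mathcal{F}_{n+2}$). First, $\Omega''$ has positive probability for small $\Delta t$: chaining $(\star)$ at $t_k$ and $t_{k+1}$ via the tower property (Assumption \ref{ass:joint.cond.reg}) gives $\PP^{\alpha_0}\big(|g_k-g_{k+2}|>2\varepsilon_1(\Delta t)\big)\le2\varepsilon_1(\Delta t)$, so once $3\delta/4-2\varepsilon_1(\Delta t)>\delta/2+\varepsilon_1(\Delta t)$ we obtain $\PP^{\alpha_0}(\Omega'')\ge\PP^{\alpha_0}\big(g_k\ge 3\delta/4\big)-2\varepsilon_1(\Delta t)>0$. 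Second, on $\Omega''$ we have $\{g_{n+3}\le\delta/2\}\subseteq\{|g_{n+3}-g_{n+2}|>\varepsilon_1(\Delta t)\}$, so $(\star)$ at time $t_{n+2}$ gives, on $\Omega''$,
\[
\PP^{\alpha_0}_{n+2}\left(g_{n+3}\le\delta/2\right)\le\PP^{\alpha_0}_{n+2}\left(\left|g_{n+3}-g_{n+2}\right|>\varepsilon_1(\Delta t)\right)\le\varepsilon_1(\Delta t),
\]
which is the claim, with $\varepsilon:=\varepsilon_1$.

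I expect the main obstacle to be the careful derivation of $(\star)$: Assumption \ref{ass:main.mu.cont.strong} fixes the lower endpoint of $\int_t^T(\cdots)$, whereas $m_{t+\Delta t}-m_t$ has a \emph{moving} lower endpoint, so one must cleanly separate the ``interior'' contribution over $[t+\Delta t,T]$ from the $O(\Delta t)$ boundary pieces, and then re-run the same bookkeeping once more at the off-grid time $t'=s$. The remaining steps — the conditional Fubini interchange, the union bounds chained through the tower property, and the index check $k\le N-3$ — are routine.
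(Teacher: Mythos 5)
Your overall route is the same as the paper's: transfer the drift signal \eqref{eq.thm1.delta.def} from the fixed time $s$ onto nearby grid points via Assumption \ref{ass:main.mu.cont.strong}, and then propagate it one more step conditionally. The bookkeeping (conditional Fubini, the $O(\Delta t)$ boundary integrals controlled by Assumption \ref{ass:A.alpha}, the union bounds, the index check) is fine. But there is a genuine gap in the step you yourself identify as the crux, namely $(\star)$. Assumption \ref{ass:main.mu.cont.strong} is quantified as ``for any $t\in[0,T)$ there exists a deterministic function $\varepsilon(\cdot)$'': the function is allowed to depend on the anchor time $t$, and nothing guarantees $\sup_t\varepsilon^{(t)}(\Delta t)\to0$. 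Your $(\star)$ asserts a single $\varepsilon_1$ valid for \emph{all} $t$, and you then invoke it at the anchors $t_k$, $t_{k+1}$ and $t_{n+2}$, which move with $\Delta t$; so no single vanishing rate is available. This is precisely why the assumption is stated with \emph{three} time points $t\le t'\le t''$: the intended use (and the paper's) is to keep the anchor fixed at $t=s$ throughout and let the pair $(t',t'')$ range over the nearby grid times $(s,t_{n+2})$ and then $(t_{n+2},t_{n+3})$ (all within $s+\mathrm{const}\cdot\Delta t$, which is admissible since the assumption holds for every window size). With that anchoring, the conditional bound you need under $\tilde{\PP}^{\alpha_0}_{t_{n+2}}$ comes out directly, at the cost of carrying the integral $\int_s^T$ rather than $\int_{t_{n+2}}^T$ until the very end, where the discrepancy is again an $O(\Delta t)$ boundary term. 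Your proof is repairable along these lines, but as written the derivation of $(\star)$ does not follow from the stated hypothesis.

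A second, smaller issue: your set $\Omega''=\{g_{n+2}>\delta/2+\varepsilon_1(\Delta t)\}$ is $\mathcal{F}_{n+2}$-measurable, whereas the lemma requires $\Omega''\in\mathcal{F}_n$. This is not purely cosmetic, because in the proof of Theorem \ref{thm:main.necessary} the indicator $\bone_{\Omega''}$ is pulled through conditional expectations given $\mathcal{F}_{n+1}$ (and the fixed-point argument for $\nu^+_n$ lives at time $n$), so measurability at an earlier time is used downstream. To be fair, the paper's own proof is also imprecise here (it asserts an $\mathcal{F}_s$-measurable $\Omega''$ on which an $\mathcal{F}_{t_{n+2}}$-measurable inequality holds pointwise), but your explicit choice makes the mismatch unavoidable, so it is worth flagging rather than silently adopting.
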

\begin{proof}
The proof follows from Assumption \ref{ass:main.mu.cont.strong}. Consider $t=t'=s$ and $t'' = t_{n+2}$. Then, Assumption \ref{ass:main.mu.cont.strong} implies
$$
\tilde{\PP}^{\alpha_0}_{s} \left( \left|\tilde{\EE}^{\alpha_0}_{t_{n+2}} \int_{s}^T \mu^{\alpha_0}_u du - \tilde{\EE}^{\alpha_0}_{s} \int_{s}^T \mu^{\alpha_0}_u du\right| \geq \varepsilon(\Delta t)\right) \leq \varepsilon(\Delta t)
$$
on $\Omega'$, a.s.. 
Notice also that 
$$
\tilde{\EE}^{\alpha_0}_{s} (\tilde{p}^0_T - \tilde{p}^0_{s}) 
= \tilde{\EE}^{\alpha_0}_{s} \int \limits_{s}^T \mu^{\alpha_0}_u \text{d}u.
$$
Then, assuming that $\varepsilon(\Delta t)$ is small enough and recalling (\ref{eq.thm1.delta.def}), we obtain 
$$
\tilde{\PP}^{\alpha_0}_{s} \left( \tilde{\EE}^{\alpha_0}_{t_{n+2}} \int_{s}^T \mu^{\alpha_0}_u du \leq 3\delta/4 \right) \leq \varepsilon(\Delta t),
$$
on $\Omega'$.
Therefore, there exists a set $\Omega''\in\mathcal{F}_{s}\subset\mathcal{F}_{t_{n}}$, s.t. $\tilde{\PP}^{\alpha_0}_{t_{n}}(\Omega'')>0$ and
$$
\tilde{\EE}^{\alpha_0}_{t_{n+2}} \int_{s}^T \mu^{\alpha_0}_u du \geq 3\delta/4,
$$
on $\Omega''$. 
Next, we choose $t=s$, $t'=t_{n+2}$, $t'' = t_{n+3}$, and use Assumption \ref{ass:main.mu.cont.strong}, to obtain
$$
\tilde{\PP}^{\alpha_0}_{t_{n+2}} \left( \left|\tilde{\EE}^{\alpha_0}_{t_{n+3}} \int_{s}^T \mu^{\alpha_0}_u du - \tilde{\EE}^{\alpha_0}_{t_{n+2}} \int_{s}^T \mu^{\alpha_0}_u du\right| \geq \varepsilon(\Delta t)\right) \leq \varepsilon(\Delta t),
$$
on $\Omega''$, a.s.. 
Assuming that $\varepsilon(\Delta t)$ is small enough and using the last two inequalities, we obtain
$$
\tilde{\PP}^{\alpha_0}_{t_{n+2}} \left( \tilde{\EE}^{\alpha_0}_{t_{n+3}} \int_{s}^T \mu^{\alpha_0}_u du \leq \delta/2 \right) \leq \varepsilon(\Delta t).
$$
Finally, due to Assumption \ref{ass:A.alpha}, and as $\Delta t$ is small, we can replace $\int_{s}^T \mu^{\alpha_0}_u du$ by $\int_{t_{n+3}}^T \mu^{\alpha_0}_u du$, and $\delta/2$ by $\delta/4$, in the above equation.
This completes the proof of the lemma.
\qed
\end{proof}

Using the strategy at which the agent in state $(1,\alpha_0)$ waits until the last moment $n=N$, we conclude that the process $(\lambda^a_n(\alpha_0) + p^0_n)$ must be a supermartingale under $\PP^{\alpha_0}$. More precisely, due to the definition of an optimal strategy, we have, $\PP$-a.s.
$$
\lambda^a_{n+2}(\alpha_0) \geq \EE^{\alpha_0}_{n+2} \lambda^a_N(\alpha_0) + \EE^{\alpha_0}_{n+2}\left( \EE^{\alpha_0}_{n+3}(p^0_N - p^0_{n+3}) + \xi_{n+3} \right).
$$
Recall that $\lambda^a_N(\alpha_0) = p^b_N$ and, due to Theorem \ref{le:main.zeroTermSpread} (more precisely, it follows from the proof of the theorem), there exists a constant $C_0>0$, s.t., for all small enough $\Delta t>0$, the following holds $\PP$-a.s.
$$
-C_0\sqrt{\Delta t}\le p^{b}_N <0<p^{a}_N\le C_0\sqrt{\Delta t}.
$$
Thus, we have, $\PP$-a.s.
\begin{equation}\label{eq.necessary.lambdab.est.1}
\lambda^a_{n+2}(\alpha_0) \geq -C_0\sqrt{\Delta t} 
+ \EE^{\alpha_0}_{n+2}\left( \EE^{\alpha_0}_{n+3}(p^0_N - p^0_{n+3}) \right)
+  \EE^{\alpha_0}_{n+2} \xi_{n+3}.
\end{equation}
Due to Assumption \ref{ass:A.alpha}, we have, $\PP$-a.s.
$$
\EE^{\alpha_0}_{n+2} \xi_{n+3} \leq C\Delta t,
\,\,\,\,\,\,\,\,\,\,\,\,\,\,\,
\left|\EE^{\alpha_0}_{n+3}(p^0_N - p^0_{n+3})\right|
\leq C T,
$$
and, hence,
$$
\lambda^a_{n+2}(\alpha_0) \geq -C_0\sqrt{\Delta t} + CT + C\Delta t.
$$
In addition, making use of Lemma \ref{le:necessary.1}, we conclude that, for any small enough $\Delta t$, there exist $n=0,\ldots,N - 2$ and $\Omega''\in\mathcal{F}_n$, s.t. $\PP^{\alpha_0}_n(\Omega'')>0$
and
$$
\PP^{\alpha}_{n+2} \left( \EE^{\alpha}_{n+3} \left(p^0_N - p^0_{n+3} \right) \leq \delta/2\right) \leq \varepsilon(\Delta t),\,\,\,\,\,\text{on}\,\,\Omega''.
$$
Using (\ref{eq.necessary.lambdab.est.1}) and assuming that $\Delta t$ is small enough, we obtain
$$
\lambda^a_{n+2}(\alpha_0) \geq \delta/4,\,\,\,\,\,\,\,\text{on}\,\,\Omega''.
$$
Next, Corollary \ref{cor:piecewiseLin}, in Appendix A, implies that, $\PP$-a.s.,
$$
p^b_{n+1}\geq \EE^{\alpha_0}_{n+1}\left(\lambda^a_{n+2}(\alpha_0)+\xi_{n+2}\big\vert \xi_{n+2}<p^b_{n+1}\right).
$$
Thus, on $\Omega''$, we obtain
\begin{equation}\label{eq.necessary.pa.est.1}
p^b_{n+1} - \EE^{\alpha_0}_{n+1}\left(\xi_{n+2}\big\vert\xi_{n+2}<p^b_{n+1}\right) \geq \delta /4.
\end{equation}

The following lemma shows that, for any number $p$, the conditional expectation of the fundamental price increment, $\EE^{\alpha_0}_{n+1}(\xi_{n+2}|\xi_{n+2}<p)$, approaches $p$ as the size of the time interval vanishes. This result follows from Lemma \ref{le:necessary.marginal.maximum}.

\begin{lemma}\label{le:necessary.2}
There exists a constant $C_3>0$, s.t., for all small enough $\Delta t>0$, and for any $t\in[0,T-\Delta t]$, the following holds $\PP$-a.s.
$$
\sup\limits_{p\le0}\left| p - \tilde{\EE}^{\alpha_0}_t\left(\tilde{p}^0_{t+\Delta t} - \tilde{p}^0_t\,\big\vert\, \tilde{p}^0_{t+\Delta t}-\tilde{p}^0_t<p\right) \right| \leq C_3 \sqrt{\Delta t}.
$$
\end{lemma}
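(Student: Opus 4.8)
The plan is to reduce the statement to Lemma~\ref{le:necessary.marginal.maximum} by time-rescaling $\tilde{p}^0$ on the interval $[t,t+\Delta t]$ under the regular conditional measure $\tilde{\PP}^{\alpha_0}_t$. Fix $t\in[0,T-\Delta t]$ and work under $\hat{\PP}:=\tilde{\PP}^{\alpha_0}_t$ (a genuine probability measure by Assumption~\ref{ass:joint.cond.reg}) with respect to the filtration $\hat{\mathcal{F}}_r:=\tilde{\mathcal{F}}_{t+r\Delta t}$, $r\in[0,1]$. Set
$$
\tilde{X}_r \;=\; \frac{\tilde{p}^0_t-\tilde{p}^0_{t+r\Delta t}}{\sqrt{\Delta t}} \;=\; \int_0^r \hat{\mu}_v\,dv + \int_0^r \hat{\sigma}_v\,dB_v,
$$
where, using the representation \reff{eq.p0.cont.a.alpha} together with the fact that $\big(W^{\alpha_0}_{t+\cdot}-W^{\alpha_0}_t\big)/\sqrt{\Delta t}$ is a Brownian motion under $\hat{\PP}$, one has $\hat{\mu}_v=-\sqrt{\Delta t}\,\mu^{\alpha_0}_{t+v\Delta t}$, $\hat{\sigma}_v=-\sigma_{t+v\Delta t}$, and $B$ a Brownian motion. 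This is precisely the class of processes to which Lemma~\ref{le:necessary.marginal.maximum} applies.

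Next I would verify the hypotheses of Lemma~\ref{le:necessary.marginal.maximum} with $c=1/C$ and the constant $C$ of Assumptions~\ref{ass:sigma} and \ref{ass:A.alpha}: Assumption~\ref{ass:sigma} gives $1/C\le|\hat{\sigma}_\tau|\le C$ for every stopping time $\tau$; Assumption~\ref{ass:A.alpha} gives $\hat{\mu}_\tau^2\le C^2\Delta t$; and, translating a stopping time $\tau$ of $(\hat{\mathcal{F}}_r)$ into the stopping time $t+\tau\Delta t$ of $(\tilde{\mathcal{F}}_s)$ with values in $[t,t+\Delta t]$ and invoking the tower property of Assumption~\ref{ass:joint.cond.reg}, Assumption~\ref{ass:main.L2.strong} yields $\hat{\EE}\big((\hat{\sigma}_{s\vee\tau}-\hat{\sigma}_\tau)^2\,|\,\hat{\mathcal{F}}_\tau\big)\le\varepsilon(\Delta t)$, $\hat{\PP}$-a.s., for all $s\in[0,1]$ and all $\tau$ (the case $s<\tau$ being trivial). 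Hence, once $\Delta t$ is small enough that both $C^2\Delta t$ and $\varepsilon(\Delta t)$ fall below the threshold $\varepsilon=\varepsilon(C)$ produced by Lemma~\ref{le:necessary.marginal.maximum} — a threshold depending only on $C$, hence uniform in $t$ — that lemma applies with, say, $c_1=1$ and delivers a constant $C_1$ depending only on $C$ with
$$
\hat{\PP}\big(\tilde{X}_1>x+z\,\big|\,\tilde{X}_1>x\big)\;\le\; C_1 e^{-z},\qquad \forall\,x,z\ge0.
$$
Here the conditioning is legitimate because $\sigma\ge1/C$ forces $\hat{\PP}(\tilde{X}_1>x)>0$ for all $x\ge0$, i.e. $\hat{\PP}(\tilde{p}^0_{t+\Delta t}-\tilde{p}^0_t<p)>0$ for all $p\le0$, by Dambis--Dubins--Schwarz applied to the bounded-drift Itô increment.

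It then remains to convert the exponential tail into a bound on the conditional mean excess. Writing any $p\le0$ as $p=-x\sqrt{\Delta t}$ with $x\ge0$, we have $\{\tilde{p}^0_{t+\Delta t}-\tilde{p}^0_t<p\}=\{\tilde{X}_1>x\}$ and $\tilde{p}^0_{t+\Delta t}-\tilde{p}^0_t=-\sqrt{\Delta t}\,\tilde{X}_1$, so
$$
p-\tilde{\EE}^{\alpha_0}_t\big(\tilde{p}^0_{t+\Delta t}-\tilde{p}^0_t\,\big|\,\tilde{p}^0_{t+\Delta t}-\tilde{p}^0_t<p\big)
=\sqrt{\Delta t}\,\hat{\EE}\big(\tilde{X}_1-x\,\big|\,\tilde{X}_1>x\big)
=\sqrt{\Delta t}\int_0^\infty \hat{\PP}\big(\tilde{X}_1>x+z\,\big|\,\tilde{X}_1>x\big)\,dz,
$$
which is nonnegative and, by the exponential bound, at most $\sqrt{\Delta t}\int_0^\infty C_1 e^{-z}\,dz=C_1\sqrt{\Delta t}$. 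Taking the supremum over $p\le0$ gives the claim with $C_3:=C_1$. I do not expect a genuine obstacle: Lemma~\ref{le:necessary.marginal.maximum} carries all the analytic weight, and the only point requiring care is the bookkeeping — ensuring $C_3$ depends only on $C$ and not on $t$ or $\Delta t$, and that the ``$\PP$-a.s.'' qualifications in Assumptions~\ref{ass:joint.cond.reg} and \ref{ass:main.L2.strong} transfer correctly into ``$\hat{\PP}$-a.s.'' statements about the rescaled coefficients $\hat{\mu},\hat{\sigma}$.
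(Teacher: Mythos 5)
Your proposal is correct and follows essentially the same route as the paper's proof: rescale $\tilde p^0$ on $[t,t+\Delta t]$ under $\tilde{\PP}^{\alpha_0}_t$ to a process on $[0,1]$ with drift of order $\sqrt{\Delta t}$, verify the hypotheses of Lemma \ref{le:necessary.marginal.maximum} via Assumptions \ref{ass:sigma}, \ref{ass:A.alpha}, \ref{ass:main.L2.strong}, and convert the resulting uniform exponential conditional-tail bound into the $C_3\sqrt{\Delta t}$ estimate on the conditional mean excess. The only cosmetic difference is that you flip the sign of the rescaled process up front, while the paper applies the lemma to $(-X_s)$; the bookkeeping points you flag (uniformity in $t$, positivity of the conditioning event) are handled the same way.
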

\begin{proof}
Fix $t$ and $\Delta t>0$ and consider the evolution of $\tilde{p}^0_s$, for $s\in[t,t+\Delta t]$, under $\PP^{\alpha_0}_t$
$$
\tilde{p}^0_{s} - \tilde{p}^0_t = \int_t^s \mu^{\alpha_0}_u du + \int_t^s \sigma_u dW^{\alpha_0}_u,
$$
where $W^{\alpha_0}$ is a Brownian motion under $\PP^{\alpha_0}$.
Rescaling by $\sqrt{\Delta t}$, we obtain
$$
(\tilde{p}^0_{s} - \tilde{p}^0_t)/\sqrt{\Delta t} = X_{(s-t)/\Delta t},
\quad X_s = \int_0^s \hat{\mu}_u du + \int_0^s \hat{\sigma}_u d\hat{W}_u,
\quad s\in[0,1],
$$
with
$$
\hat{\mu}_s = \sqrt{\Delta t} \, \mu^{\alpha_0}_{t+s\Delta t},
\quad \hat{\sigma}_s = \sigma_{t+s\Delta t},
\quad \hat{W}_s = \frac{1}{\sqrt{\Delta t}} \left(W^{\alpha_0}_{t+s\Delta t} - W^{\alpha_0}_t\right),
\quad s\in[0,1].
$$
Notice that the above processes are adapted to the new filtration $\hat{\mathbb{F}}$, with $\hat{\mathcal{F}}_s = \tilde{\mathcal{F}}_{t+s\Delta t}$, and, $\PP$-a.s., under $\tilde{\PP}^{\alpha_0}_t$, $\hat{W}$ is a Brownian motion with respect to $\hat{\mathbb{F}}$. 
Next, due to Assumptions \ref{ass:sigma} and \ref{ass:main.L2.strong}, for any small enough $\Delta t>0$, $\PP$-a.s., the dynamics of $(-X_s)$, under $\tilde{\PP}^{\alpha_0}_t$, satisfy all the assumptions of Lemma \ref{le:necessary.marginal.maximum}. As a result, we obtain
$$
\tilde{\PP}^{\alpha_0}_t(X_1 < -x-z) 
\leq C_1 e^{-z} \tilde{\PP}^{\alpha_0}_t(X_1 < -x),
\,\,\,\,\,\,\,\,\,\,\,\,\,\forall x,z\geq0.
$$
Finally, we notice that
$$
\sup\limits_{p\le0}\left| p - \tilde{\EE}^{\alpha_0}_t\left(\tilde{p}^0_{t+\Delta t} - \tilde{p}^0_t\big\vert \tilde{p}^0_{t+\Delta t}-\tilde{p}^0_t<p\right)\right|
 = \sqrt{\Delta t} \sup\limits_{p\le0}\left| p - \tilde{\EE}^{\alpha_0}_t\left(X_1\big\vert X_1<p\right)\right|
$$
$$
= \sqrt{\Delta t} \sup\limits_{p\le0}\left| p - \frac{\int_{-p}^{\infty} x \,d\,\tilde{\PP}^{\alpha_0}_t(X_1 < -x) }{\tilde{\PP}^{\alpha_0}_t(X_1 < p)} \right|
= \sqrt{\Delta t} \sup\limits_{p\le0}\left| \frac{\int_{0}^{\infty} \tilde{\PP}^{\alpha_0}_t(X_1 < p - z) dz}{\tilde{\PP}^{\alpha_0}_t(X_1 < p)} \right|
\leq C_1 \sqrt{\Delta t},
$$
which completes the proof of the lemma.
\qed
\end{proof}

Using (\ref{eq.necessary.pa.est.1}) and Lemma \ref{le:necessary.2}, we conclude that, for all small enough $\Delta t$, we have: $p^b_{n+1} > 0$ on $\Omega''$, $\PP$-a.s..
In addition, Corollary \ref{cor:piecewiseLin}, in Appendix A, implies that, for any $\alpha\in\tilde{\mathbb{A}}$, the following holds $\PP$-a.s.
$$
\lambda^a_{n+1}(\alpha) \geq p^b_{n+1}.
$$
Next, with a slight abuse of notation (similar notation was introduced in the proof of Proposition \ref{le:main.zeroTermSpread}), we consider the simplified objective of an agent who posts a limit sell order at the ask price $p^a_n$
$$
\hat{A}^{\alpha}(p^a_n;\lambda^a_{n+1}) = \EE^{\alpha}_n\left( p^a_n - \lambda^a_{n+1} - \xi_{n+1} \,|\, \xi_{n+1} > p^a_n \right).
$$
The above estimates imply that, on $\Omega''$, we have, $\PP$-a.s.
\begin{equation}\label{eq.necessary.simpObj.neg}
\hat{A}^{\alpha}(p^a_n;\lambda^a_{n+1}) \leq  
\EE^{\alpha}_n\left( p^a_n - \xi_{n+1} \,|\, \xi_{n+1} > p^a_n \right)
- \EE^{\alpha}_n\left( p^b_{n+1}\bone_{\Omega''} \,|\, \xi_{n+1} > p^a_n \right)
< 0,\,\,\,\,\,\,\,\,\,\,\,\forall \alpha\in\tilde{\mathbb{A}}.
\end{equation}
To obtain the last inequality in the above, we recall that $\Omega''\in\mathcal{F}_n$ and, $\PP$-a.s., $\bone_{\Omega''}\PP_n(\Omega\setminus\Omega'')=0$, $p^b_{n+1} > 0$ on $\Omega''$, and $\PP^{\alpha}_n(\xi_{n+1} > p^a_n)>0$, for all $\alpha\in\tilde{\mathbb{A}}$.
Next, repeating the proof of Lemma \ref{le:simp.to.true.val} (and using the fact that $\lambda^a_{n+1}$ is absolutely bounded, as shown in Corollary \ref{prop:main.smallspread}), we conclude that, $\PP$-a.s., either $\nu^+_n(\{p^a_n\})>0$, or we have
$$
\left\vert A^{\alpha}(p;\lambda^a_{n+1}) - \hat{A}^{\alpha}(p^a_n;\lambda^a_{n+1})\right\vert\to0,
$$
as $p\downarrow p^a$, uniformly over all $\alpha\in\tilde{\mathbb{A}}$, where we introduce the true objective,
\begin{equation*}
A^\alpha(p;\lambda^a_{n+1})=\EE^\alpha_{n}\left(\left(p-\lambda^a_{n+1}-\xi_{n+1}\right)\bone_{\{D^+_{n+1}(p-\xi_{n+1})>\nu^+_n((-\infty,p))\}}\right).
\end{equation*} 
This convergence, along with (\ref{eq.necessary.simpObj.neg}), implies that there exists a $\mathcal{F}_{n}$-measurable $\bar{p}\geq p^a_n$, such that, on $\Omega''$, the following holds $\PP$-a.s.: if $\nu^+_n(\{p^a_n\})=0$ then $\bar{p}>p^a_n$, and, in all cases,
\begin{equation*}
A^{\alpha}(p;\lambda^a_{n+1}) < 0,\,\,\,\,\,\,\,\,\,\forall p\in[p^a_n, \bar{p}], \,\,\,\,\forall \alpha\in\tilde{\mathbb{A}}.
\end{equation*} 
Finally, we repeat the last part of the proof of Lemma \ref{bidasksigns} (following equation (\ref{eq.bidasksigns.ubopt})), to obtain a contradiction with the definition of $p^a_n$, and complete the proof of the theorem.
The last argument also shows that, when $\Delta t$ is small enough, it becomes suboptimal for the agents to post limit sell orders, as the expected relative profit from this action becomes negative, causing the market to degenerate.

\section{Summary and future work}
\label{se:conclusion}

In this paper, we present a new framework for modeling market microstructure, which does not require the existence of a designate market maker, and in which the LOB arises endogenously, as a result of equilibrium between multiple strategic players (aka agents).
This framework is based on a continuum-player game. It closely approximates the mechanics of an auction-style exchange, so that, in particular, it can be used to analyze the liquidity effects of changes in the rules of the exchange.
We use the proposed framework to study the liquidity effects of high trading frequency.
In particular, we demonstrate the dual nature of high trading frequency. On the one hand, in the absence of a bullish or bearish signal about the asset, the higher trading frequency improves the efficiency of the market. On the other hand, at a sufficiently high trading frequency, even a very small trading signal may amplify the adverse selection effect, creating a disproportionally large change in the LOB, which is interpreted as an endogenous liquidity crisis.

The present article raises many questions for further research.
Notice that our main results are of a qualitative nature: they demonstrate the general behavior of the LOB, as a function of trading frequency, but do not immediately allow for computations.
It would also be interesting to establish quantitative results.
In particular, we would like to construct an equilibrium in a more realistic, and more concrete, model than the one used in Section \ref{se:examples}. Such a model would allow for heterogeneous beliefs, and it would prescribe the specific sources of information (i.e., relevant market factors) used by the agents to form their beliefs. A model of this type could be calibrated to market data and used to study the effects of changes in relevant market parameters on the LOB. 
Finally, it would be interesting to develop a continuous time version of the proposed framework, in order to better capture the present state of the markets, where the trading frequency is not restricted. All these questions are the subject of our follow-up paper \cite{GaydukNadtochiy2}.

\section{Appendix A}

This section contains several useful technical results on the representation of the value function of an agent in the proposed game. 
Notice that (\ref{eq.stateProc.def}) and (\ref{eq.intro.Jm.def}) imply that, if $\nu$ is admissible, then, for any $(\alpha,m,p,q,r)$, we have, $\PP$-a.s.
$$
\left|J^{(p,q,r)}\left(m,s,\alpha,\nu\right) - J^{(p,q,r)}\left(m,s',\alpha,\nu\right)\right|
\leq |s-s'|\, \EE^{\alpha}_m |p^a_N| \vee |p^b_N|,
\,\,\,\,\,\,\,\,\,\,\forall s,s'\in\mathbb{\RR}
$$
This implies that every $J^{(p,q,r)}\left(m,\cdot,\alpha,\nu\right)$ and $V^{\nu}_m(\cdot,\alpha)$ has a continuous modification under $\PP$.
Thus, whenever $\nu$ is admissible, we define the value function of an agent as the aforementioned continuous modification of the left hand side of (\ref{eq.gen.Val.randField}).

\begin{lemma}\label{le:DPP}
Assume that an optimal control exists for an admissible LOB $\nu$.
Assume also that, for any $\alpha\in\mathbb{A}$, the associated value function $V^{\nu}_n(\cdot,\alpha)$, defined in (\ref{eq.gen.Val.randField}), is measurable with respect to $\mathcal{F}_n\otimes\mathcal{B}(\RR)$.
Then, it satisfies the following Dynamic Programming Principle.
\begin{itemize}

\item For $n=N$ and all $(s,\alpha)\in\mathbb{S}$, we have, $\PP$-a.s.
\begin{equation}\label{eq.het.VN}
V^{\nu}_N(s,\alpha) = s^+ p^b_N - s^- p^a_N
\end{equation}

\item For all $n=N-1,\ldots,0$ and all $(s,\alpha)\in\mathbb{S}$, we have
$$
V^{\nu}_n(s,\alpha) = \text{esssup}_{p,q,r}\left\{\bone_{\left\{r_n=0\right\}}\EE_n^{\alpha} \left( V^{\nu}_{n+1}\left(s,\alpha\right) 
+ \left(q_n p_n + V^{\nu}_{n+1}\left(s-q_n,\alpha\right) - V^{\nu}_{n+1}\left(s,\alpha\right)\right)\cdot
\right.\right.
$$
\begin{equation}\label{eq.het.Vn}
\left.\left.
\cdot\left( \bone_{\left\{q_n\geq0,\,D^+_{n+1}(p_n) > \nu^+_n((-\infty,p_n)) \right\}} + \bone_{\left\{q_n<0,\,D^-_{n+1}(p_n) > \nu^-_n((p_n,\infty)) \right\}} \right) \right)\right.
\end{equation} 
$$
\left.
+ \bone_{\left\{r_n=1\right\}} \left( q^+_n p^b_n - q^-_n p^a_n + \EE_n^{\alpha} V^{\nu}_{n+1}\left(s-q_n,\alpha\right) \right)
\right\},
$$
where the essential supremum is taken under $\PP$, over all admissible controls $(p,q,r)$. 
\end{itemize}
\end{lemma}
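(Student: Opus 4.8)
The plan is to prove the terminal identity directly from the definition and then establish the recursive step by a one‑step decomposition of the objective functional, combined with the tower property and the assumed existence of an optimal control (which is exactly what lets us avoid a measurable‑selection argument inside the essential supremum).

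For the terminal case, observe that at $n=N$ there is no control left to exercise, the state satisfies $S^{(p,q,r)}_N(N,s,\nu)=s$, and the sum in \eqref{eq.intro.Jm.def} is empty, so $J^{(p,q,r)}(N,s,\alpha,\nu)=s^+p^b_N-s^-p^a_N$ for every $(p,q,r)$; admissibility of $\nu$ (Definition \ref{def:admis.LOB}) makes this $\PP^{\alpha}$‑integrable, giving \eqref{eq.het.VN}. For the recursive step, fix $n\le N-1$, a state $(s,\alpha)$, and an admissible control $(p,q,r)$. First I would isolate the $n$‑th summand in \eqref{eq.intro.Jm.def} and rewrite the remainder, using the tower property $\EE^{\alpha}_n\EE^{\alpha}_{n+1}=\EE^{\alpha}_n$ (valid for the chosen modification of conditional probabilities, as described in Section \ref{se:setup}), as
\[
J^{(p,q,r)}(n,s,\alpha,\nu)=\EE^{\alpha}_n\Big[\mathcal R_n(p_n,q_n,r_n)+J^{(p,q,r)}\big(n+1,S^{(p,q,r)}_{n+1}(n,s,\nu),\alpha,\nu\big)\Big],
\]
where $\mathcal R_n$ is the explicit one‑step reward read off from \eqref{eq.stateProc.def}–\eqref{eq.intro.Jm.def}: for $r_n=1$ it equals $q^+_np^b_n-q^-_np^a_n$ with $S_{n+1}=s-q_n$, and for $r_n=0$ it equals $-p_n\Delta S^{(p,q,r)}_{n+1}$, with $\Delta S^{(p,q,r)}_{n+1}$ given by the execution indicators. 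Here the continuation control is the restriction of the random fields $(p,q,r)$ to times $n+1,\dots,N-1$ started at the $\mathcal F_{n+1}$‑measurable state $S_{n+1}$; the $\mathcal F_{n+1}\otimes\mathcal B(\RR)$‑measurability hypothesis on $V^{\nu}_{n+1}(\cdot,\alpha)$ (and the corresponding property of $J$) is what makes $\omega\mapsto J^{(p,q,r)}(n+1,S_{n+1}(\omega),\alpha,\nu)(\omega)$ a well‑defined random variable with the right conditional‑expectation interpretation.

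Next I would prove the two inequalities. For ``$\le$'': bound $J^{(p,q,r)}(n+1,S_{n+1},\alpha,\nu)\le V^{\nu}_{n+1}(S_{n+1},\alpha)$ $\PP$‑a.s., from the definition \eqref{eq.gen.Val.randField} evaluated at the random initial state, so that $J^{(p,q,r)}(n,s,\alpha,\nu)$ is dominated by the bracket on the right‑hand side of \eqref{eq.het.Vn} at $(p_n,q_n,r_n)$, hence by its essential supremum; taking the essential supremum over $(p,q,r)$ on the left yields $V^{\nu}_n(s,\alpha)\le(\text{RHS of \eqref{eq.het.Vn}})$. For ``$\ge$'': note that the bracket in \eqref{eq.het.Vn} depends only on the time‑$n$ triple and on $V^{\nu}_{n+1}$; given any admissible time‑$n$ choice $(p_n,q_n,r_n)$, concatenate it with an optimal control from time $n+1$ onward, started at the resulting state $S_{n+1}$ — such a control exists by hypothesis and, by the time‑consistency built into Definition \ref{def:optControl}, attains $V^{\nu}_{n+1}(S_{n+1},\alpha)$ — to obtain an admissible control whose objective equals $\EE^{\alpha}_n[\mathcal R_n+V^{\nu}_{n+1}(S_{n+1},\alpha)]$, which is therefore $\le V^{\nu}_n(s,\alpha)$; taking the essential supremum over the time‑$n$ choices gives the reverse inequality.

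I expect the main obstacle to be the bookkeeping around measurability and admissibility of the concatenated control: one must verify that splicing an arbitrary admissible time‑$n$ action onto the optimal continuation produces a genuinely admissible control in the sense of Definition \ref{def:admis} — i.e. that the positive part of the spliced objective is $\PP^{\alpha}$‑integrable — which combines admissibility of $\nu$ (the bound $\EE^{\alpha}_n|p^a_N|\vee|p^b_N|<\infty$) with the Lipschitz‑in‑inventory estimate for $J^{(p,q,r)}$ recorded just before the lemma, and that the resulting object is progressively measurable as a random field in the state. The handling of the essential supremum (replacing it by an honest supremum over the one‑step action, which is where the existence of optimal controls is used) also requires some care. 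Everything else is routine manipulation of conditional expectations under the tower property.
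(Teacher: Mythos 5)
Your proposal is correct and follows essentially the same route as the paper's proof: the one-step decomposition of $J$ via the tower property, the upper bound from $J^{(p,q,r)}(n+1,S_{n+1},\alpha,\nu)\le V^{\nu}_{n+1}(S_{n+1},\alpha)$, and the lower bound obtained by concatenating an arbitrary time-$n$ action with the assumed optimal control from time $n+1$ onward. The admissibility and measurability points you flag are exactly the ones the paper handles (briefly) as well, so there is nothing to add.
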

\begin{proof}
The most important step is to show that, for all $n=0,\ldots N-1$ and $(s,\alpha)\in\mathbb{S}$,
\begin{equation}\label{eq.DPP.aux.1}
V^{\nu}_n(s,\alpha) = \text{esssup}_{p,q,r} 
\EE^{\alpha}_n \left( V^{\nu}_{n+1}\left(S^{n,s,(p,q,r)}_{n+1},\alpha\right) 
- g^{\nu}_{n}\left(p_n,q_n,r_n,D_{n+1}\right) \right),
\end{equation}
where the essential supremum is taken under $\PP$, over all admissible controls $(p,q,r)$, and
$$
g^{\nu}_{n}\left(p_n,q_n,r_n,D_{n+1}\right) = \left(p_n\bone_{\left\{ r_n = 0\right\}} + p^a_n\bone_{\left\{ r_n = 1, q_n <0\right\}} + p^b_n\bone_{\left\{ r_n = 1, q_n >0\right\}} \right) \Delta S^{n,s,(p,q,r)}_{n+1}
$$
does not depend on $s$.
Assume that $J^{(p,q,r)}\left(n,\cdot,\alpha,\nu\right)$ is a continuous modification of the objective function.
Notice that, for all $m\leq k \leq n$, we have, $\PP$-a.s.
$$
\EE^{\alpha}_k J^{(p,q,r)}\left(n,S_n^{m,s,(p,q,r)},\alpha,\nu\right)
= J^{(p,q,r)}\left(k,S_k^{m,s,(p,q,r)},\alpha,\nu\right)
+ \EE^{\alpha}_k \sum_{j=k}^{n-1} g^{\nu}_{j}\left(p_j,q_j,r_j,D_{j+1}\right)
$$
Notice also that, for any $(p,q,r)$ we have, $\PP$-a.s.: $J^{(p,q,r)}\left(m,s,\alpha,\nu\right) \leq V_m^\nu(s,\alpha)$, for all $s\in\mathbb{S}$.
Let us show that the left hand side of (\ref{eq.DPP.aux.1}) is less than its right hand side
$$
V_m^\nu(s,\alpha)
=\text{essup}_{p,q,r} J^{(p,q,r)}\left(m,S_m^{m,s,(p,q,r)},\alpha,\nu\right)
$$
$$
=\text{essup}_{p,q,r} \EE^{\alpha}_m \left( J^{(p,q,r)}\left(m+1,S_{m+1}^{m,s,(p,q,r)},\alpha,\nu\right)
- g^{\nu}_{m}\left(p_m,q_m,r_m,D_{m+1}\right) \right)
$$
$$
\leq \text{essup}_{p,q,r} \EE^{\alpha}_m \left( V^{\nu}_{m+1}\left(S_{m+1}^{m,s,(p,q,r)},\alpha\right)
- g^{\nu}_{m}\left(p_m,q_m,r_m,D_{m+1}\right) \right)
$$
Next, we show that the right hand side of (\ref{eq.DPP.aux.1}) is less than its left hand side.
For any $(p,q,r)$, we have, $\PP$-a.s.
\begin{equation*}
\EE^\alpha_m \left(V^\nu_{m+1}\left(S_{m+1}^{m,s,(p,q,r)},\alpha\right)
- g^{\nu}_{m}\left(p_m,q_m,r_m,D_{m+1}\right)\right)
\end{equation*}
$$
=\EE^\alpha_m \left(J^{(\hat{p},\hat{q},\hat{r})}\left(m+1,S_{m+1}^{m,s,(p,q,r)},\alpha,\nu\right)
- g^{\nu}_{m}\left(p_m,q_m,r_m,D_{m+1}\right)\right)
= J^{(\tilde{p},\tilde{q},\tilde{r})}\left(m,s,\alpha,\nu\right)
\leq V^{\nu}_m(s,\alpha),
$$
where $(\tilde{p}_n,\tilde{q}_n,\tilde{r}_n)$ coincide with $(\hat{p}_n,\hat{q}_n,\hat{r}_n)$, for $n\geq m+1$, while they are equal to $(p_m,q_m,r_m)$, for $n=m$.
The proof is completed easily by plugging the dynamics of the state process, (\ref{eq.stateProc.def}), into (\ref{eq.DPP.aux.1}).
\qed
\end{proof}

The following corollary provides a more explicit recursive formula for the value function and optimal control. In particular, it states that the value function of an agent at any time remains linear in $s$, in both positive and negative half lines (with possibly different slopes).

\begin{cor}\label{cor:piecewiseLin}
Assume that an admissible LOB $\nu$ has an optimal control $(\hat{p},\hat{q},\hat{r})$. Then, for any $(s,\alpha)\in\mathbb{S}$, the following holds $\PP$-a.s., for all $n=0,\ldots,N-1$
\begin{enumerate}

\item $V^{\nu}_n(s,\alpha) = s^+ \lambda^a_n(\alpha) - s^- \lambda^b_n(\alpha)$, with some adapted processes $\lambda^a(\alpha)$ and $\lambda^b(\alpha)$, such that $\lambda^a_N(\alpha) = p^b_N$ and $\lambda^b_N(\alpha) = p^a_N$;

\item $p^a_n \geq \EE_{n}^{\alpha} \left( \lambda^a_{n+1}(\alpha) \right)$ and $p^b_n \leq \EE_{n}^{\alpha} \left( \lambda^b_{n+1}(\alpha) \right)$;

\item if, for some $p\in\RR$, $\PP^{\alpha}_n\left(D^+_{n+1}(p) > \nu^+_{n}((-\infty,p))\right)>0$, then
\begin{equation*}\label{eq.het.NoPred.1}
p \leq \EE_{n}^{\alpha} \left( \lambda^b_{n+1}(\alpha) \,|\, D^+_{n+1}(p) > \nu^+_{n}((-\infty,p)) \right);
\end{equation*}

\item if, for some $p\in\RR$, $\PP^{\alpha}_n\left(D^-_{n+1}(p) > \nu^-_{n}((p,\infty))\right)>0$, then
\begin{equation*}\label{eq.het.NoPred.2}
p \geq \EE_{n}^{\alpha} \left( \lambda^a_{n+1}(\alpha) \,|\, D^-_{n+1}(p) > \nu^-_{n}((p,\infty)) \right);
\end{equation*}

\item for all $s>0$,

\begin{itemize}
\item $\lambda^a_n(\alpha) = \max\left\{ p^b_n,
\EE^{\alpha}_n \lambda^a_{n+1}(\alpha) + \left(\sup_{p\in\RR} \EE^{\alpha}_n \left( \left( p - \lambda^a_{n+1}(\alpha) \right) \bone_{\left\{ D^+_{n+1}(p) > \nu^+_{n}((-\infty,p)) \right\}} \right)\right)^+  \right\}$,

\item if $\hat{q}_n(s,\alpha)\neq 0$ and $\hat{r}_n(s,\alpha)=0$, then
$$
\lambda^a_n(\alpha) = 
\EE^{\alpha}_n \lambda^a_{n+1}(\alpha) + \sup_{p\in\RR} \EE^{\alpha}_n \left( \left( p - \lambda^a_{n+1}(\alpha) \right) \bone_{\left\{ D^+_{n+1}(p) > \nu^+_{n}((-\infty,p)) \right\}} \right),
$$
and $p=\hat{p}_n(s,\alpha)$ attains the above supremum,

\item if $\hat{q}_n(s,\alpha) = 0$ and $\hat{r}_n(s,\alpha)=0$, then $\lambda^a_n(\alpha) = \EE^{\alpha}_n \lambda^a_{n+1}(\alpha)$,

\item if $\hat{r}_n(s,\alpha)=1$, then $\lambda^a_n(\alpha) = p^b_n$;
\end{itemize}

\item for all $s<0$,

\begin{itemize}
\item $\lambda^b_n(\alpha) = \min\left\{ p^a_n,
\EE^{\alpha}_n \lambda^b_{n+1}(\alpha) - \left(\sup_{p\in\RR} \EE^{\alpha}_n \left( \left( \lambda^b_{n+1}(\alpha) - p \right) \bone_{\left\{ D^-_{n}(p) > \nu^-_{n-1}((p,\infty)) \right\}} \right)\right)^+  \right\}$,

\item if $\hat{q}_n(s,\alpha)\neq 0$ and $\hat{r}_n(s,\alpha)=0$, then
$$
\lambda^b_n(\alpha) = 
\EE^{\alpha}_n \lambda^b_{n+1}(\alpha) - \sup_{p\in\RR} \EE^{\alpha}_n \left( \left( \lambda^b_{n+1}(\alpha) - p \right) \bone_{\left\{ D^-_{n}(p) > \nu^-_{n-1}((p,\infty)) \right\}} \right),
$$
and $p=\hat{p}_n(s,\alpha)$ attains the above supremum,

\item if $\hat{q}_n(s,\alpha) = 0$ and $\hat{r}_n(s,\alpha)=0$, then $\lambda^b_n(\alpha) = \EE^{\alpha}_n \lambda^b_{n+1}(\alpha)$,

\item if $\hat{r}_n(s,\alpha)=1$, then $\lambda^b_n(\alpha) = p^a_n$.
\end{itemize}

\end{enumerate}
\end{cor}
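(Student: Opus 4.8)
The plan is a backward induction on $n$ from $n=N$ down to $n=0$, driven entirely by the Dynamic Programming Principle of Lemma \ref{le:DPP}. The base case $n=N$ is immediate: \eqref{eq.het.VN} gives $V^{\nu}_N(s,\alpha)=s^+p^b_N-s^-p^a_N$, so item~1 holds with $\lambda^a_N(\alpha)=p^b_N$, $\lambda^b_N(\alpha)=p^a_N$, both $\mathcal{F}_N$-measurable; in particular the joint-measurability hypothesis of Lemma \ref{le:DPP} is met at $n=N$, and it will be re-verified at each step. In the inductive step I assume $V^{\nu}_{n+1}(s,\alpha)=s^+\lambda^a_{n+1}(\alpha)-s^-\lambda^b_{n+1}(\alpha)$ with $\lambda^a_{n+1}(\alpha),\lambda^b_{n+1}(\alpha)$ adapted and $\PP^{\alpha}$-integrable, substitute this into the right-hand side of \eqref{eq.het.Vn}, and evaluate the essential supremum explicitly. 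By the symmetry of the model under $(\text{buy}\leftrightarrow\text{sell},\ a\leftrightarrow b,\ s\mapsto-s)$ it suffices to treat $s>0$ (the case $s=0$ being trivial, since both sides of item~1 vanish).

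For $s>0$, the inductive hypothesis makes the value of every candidate action an explicit function of the order size $q_n$ that is affine on each of the three regions $\{0\le q_n\le s\}$, $\{q_n>s\}$, $\{q_n<0\}$, with a slope depending only on the sign of the post-trade inventory $s-q_n$: e.g.\ a limit sell order of size $q_n\in[0,s]$ at price $p$ contributes $s\,\EE^{\alpha}_n\lambda^a_{n+1}(\alpha)+q_n\,\EE^{\alpha}_n\big[(p-\lambda^a_{n+1}(\alpha))\bone_{\{D^+_{n+1}(p)>\nu^+_n((-\infty,p))\}}\big]$; the same order with $q_n>s$ carries extra $q_n$-slope $\EE^{\alpha}_n\big[(p-\lambda^b_{n+1}(\alpha))\bone_{\{\cdots\}}\big]$; a market buy order ($q_n<0$) has slope $p^a_n-\EE^{\alpha}_n\lambda^a_{n+1}(\alpha)$; a market sell order with $q_n>s$ has slope $p^b_n-\EE^{\alpha}_n\lambda^b_{n+1}(\alpha)$; etc. Since the existence of an optimal control forces $V^{\nu}_n(s,\alpha)$ to equal an actual (admissible) objective value, it is finite — bounded below by the ``do nothing forever'' value $s\,\EE^{\alpha}_n p^b_N$ (finite by admissibility of $\nu$) and above by Definition \ref{def:admis}. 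Consequently every unbounded direction of $q_n$ must give a non-positive contribution, and spelling this out yields exactly items~2–4: the two inequalities of item~2 come from market orders of unboundedly large size in either direction, item~3 from a limit sell order with $q_n\to+\infty$ (over-selling into a short position, whose continuation value is $-\lambda^b_{n+1}$), and item~4 from a limit buy order with $q_n\to-\infty$. Each such inequality is first established for a fixed rational $p$ — its failure on an $\mathcal{F}_n$-set of positive probability would produce admissible controls with objective values tending to $+\infty$ — and then extended to all $p$ by a dense-set argument using the monotonicity of $p\mapsto D^{\pm}_{n+1}(p)$ and $p\mapsto\nu^{\pm}_n$.

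Once items~2–4 are in force, each affine piece is maximized at an endpoint $q_n\in\{0,s\}$, and one reads off
$$
V^{\nu}_n(s,\alpha)=s\max\Big\{p^b_n,\ \EE^{\alpha}_n\lambda^a_{n+1}(\alpha)+\Big(\sup_{p\in\RR}\EE^{\alpha}_n\big[(p-\lambda^a_{n+1}(\alpha))\bone_{\{D^+_{n+1}(p)>\nu^+_n((-\infty,p))\}}\big]\Big)^+\Big\},
$$
the first term being a market sell, the inner term the best limit sell with $q_n=s$, the ``do nothing'' value $s\,\EE^{\alpha}_n\lambda^a_{n+1}(\alpha)$ being absorbed by the positive part, and limit buy orders never strictly helping a long agent. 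Defining $\lambda^a_n(\alpha)$ to be the bracketed quantity gives item~1 and the first bullet of item~5; it is $\mathcal{F}_n$-measurable — so the induction continues — once $\sup_{p\in\RR}$ is reduced to a supremum over a countable dense set by the same monotonicity together with a measurable-selection argument. The remaining bullets of item~5 (and symmetrically item~6) follow by identifying which family of actions attains this maximum: if the optimal control uses a limit order with $\hat q_n(s,\alpha)\ne0$, the maximizer is the limit-sell term and $\hat p_n(s,\alpha)$ attains the inner supremum; if it uses a limit order with $\hat q_n(s,\alpha)=0$, the ``do nothing'' value is optimal and $\lambda^a_n(\alpha)=\EE^{\alpha}_n\lambda^a_{n+1}(\alpha)$; if it uses a market order, $\lambda^a_n(\alpha)=p^b_n$.

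The part I expect to require the most care is the rigorous treatment of \emph{unbounded} order sizes: the essential supremum in \eqref{eq.het.Vn} ranges over \emph{all} admissible controls, not merely bounded ones, so one must argue precisely that finiteness of $V^{\nu}_n$ is equivalent to the no-arbitrage inequalities of items~2–4, and in particular upgrade those inequalities from ``holds for a fixed $p$'' to ``holds $\PP$-a.s.\ simultaneously for all $p$'' — this is exactly where the monotonicity of the execution event in $p$, combined with a countable dense-set argument, is needed. A secondary, more routine, technical point is the measurable selection guaranteeing that $\lambda^a_n(\alpha)$ and $\lambda^b_n(\alpha)$, defined through a supremum over $p$, are genuinely $\mathcal{F}_n$-measurable, so that Lemma \ref{le:DPP} applies at the next step of the induction.
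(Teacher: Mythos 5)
Your proposal is correct and follows essentially the same route as the paper's own proof: substitute the piecewise-linear ansatz for $V^{\nu}_{n+1}$ into the Dynamic Programming Principle \reff{eq.het.Vn}, observe that the objective is piecewise affine in $q_n$ with kinks at $q_n=0$ and $q_n=s$, derive items 2--4 from the requirement that the slopes in the unbounded directions be sign-constrained (else the value function is infinite), and conclude that the optimum is attained at $q_n\in\{0,s\}$, which propagates the piecewise-linear structure backwards. Your additional care about the backward-inductive verification of the measurability hypothesis of Lemma \ref{le:DPP}, the countable-dense-set upgrade in $p$, and the measurable selection replacing the essential supremum are exactly the technical points the paper's (terser) proof also invokes.
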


\begin{proof}

Let us plug the piecewise-linear form of the value function into (\ref{eq.het.Vn})
\begin{equation*}
V^{\nu}_n(s,\alpha) = \text{esssup}_{p,q,r}\left\{\bone_{\left\{r_n=0\right\}} 
\left( s^+ \EE_n^{\alpha} \lambda^a_{n+1}(\alpha) - s^- \EE_n^{\alpha} \lambda^b_{n+1}(\alpha) 
\right.\right.
\end{equation*}
$$
\left.\left.
+ \EE_n^{\alpha} \left( \left(q_n p_n + (s-q_n)^+ \lambda^a_{n+1}(\alpha) - (s-q_n)^- \lambda^b_{n+1}(\alpha) - s^+ \lambda^a_{n+1}(\alpha) + s^- \lambda^b_{n+1}(\alpha) \right)\cdot
\right.\right.\right.
$$
$$
\left.\left.\left.
\left( \bone_{\left\{q_n\geq0,\,D^+_{n+1}(p_n) > \nu^+_n((-\infty,p_n)) \right\}} + \bone_{\left\{q_n<0,\,D^-_{n+1}(p_n) > \nu^-_n((p_n,\infty)) \right\}} \right)\right) \right)\right.
$$ 
$$
\left.
+ \bone_{\left\{r=1\right\}} \left( q^+_n p^b_n - q^-_n p^a_n + (s-q_n)^+ \EE_n^{\alpha} \lambda^a_{n+1}(\alpha) - (s-q_n)^- \EE_n^{\alpha} \lambda^b_{n+1}(\alpha) \right)
\right\}
$$
First, notice that it suffices to consider the essential supremum over all random variables $(p_n,q_n,r_n)$.\footnote{The admissibility constraint does not cause any difficulties here, as, in the case where $(p_n,q_n,r_n)$ do not attain the supremum, they can be improved, so that $(p_n,q_n)$ increase by no more than a fixed constant.}
Moreover, the essential supremum can be replaced by the supremum over all deterministic $(p_n,q_n,r_n)\in\RR^2\times\{0,1\}$. To see the latter, it suffices to assume that the supremum is not attained by the optimal strategy (with positive probability), and construct a superior strategy via the standard measurable selection argument (cf. Corollary 18.27 and Theorem 18.26 in \cite{Aliprantis}), which results in a contradiction.
It is easy to see that, for any fixed $(p_n,s,r_n)$, the above function is piece-wise linear in $q_n$, with the slope changing at $q_n=0$ and $q_n=s$. Hence, for a finite maximum to exists, the slope of this function must be nonnegative, at $q_n\rightarrow-\infty$, and non-positive, at $q_n\rightarrow \infty$. This must hold for any $(p_n,r_n,s)$, to ensure that the value function of an agent is finite: otherwise, an agent can scale up her position to increase the value function arbitrarily. Considering $r_n=1$, we obtain condition 2 of the corollary. The case $r_n=0$ yields conditions 3 and 4. Notice also that the maximum of the aforementioned function is always attained at $q_n=0$ or $q_n=s$. Considering all possible cases: $r_n=0,1$, $q_n=0,s$, $s=0$, $s>0$ and $s<0$ -- we obtain the recursive formulas for $\lambda^a_n$ and $\lambda^b_n$ (i.e., conditions  5 and 6 of the corollary). In addition, as the optimal $q_n$ takes values $0$ and $s$, it is easy to see that the piece-wise linear structure of the value function in $s$ is propagated backwards, and, hence, condition 1 of the corollary holds.
\qed
\end{proof}

It is also useful to have a converse statement.

\begin{cor}\label{cor:piecewiseLin.verif}
Consider an admissible LOB $\nu$ and admissible control $(\hat{p},\hat{q},\hat{r})$, such that $\hat{q}_n(s,\alpha)\in\left\{ 0,s\right\}$. Assume that, for any $\alpha\in\mathbb{A}$ and any $n=0,\ldots,N$, there exists a 
progressively measurable random function $V^{\nu}_{\cdot}(\cdot,\alpha)$, such that, for any $s\in\mathbb{\RR}$, $\PP$-a.s., $(\hat{p},\hat{q},\hat{r},V^{\nu})$ satisfy the conditions 1--6 of Corollary \ref{cor:piecewiseLin}. Then, $(\hat{p},\hat{q},\hat{r})$ is an optimal control for the LOB $\nu$.
\end{cor}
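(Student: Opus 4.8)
The plan is the classical verification argument: show that $(\hat p,\hat q,\hat r)$ realizes the candidate value $V^\nu$, and that no admissible control can do better, so the two together give optimality in the sense of Definition \ref{def:optControl}. \textbf{Step 1 (the candidate attains $V^\nu$).} I would prove by backward induction on $n=N,N-1,\dots,0$ that $J^{(\hat p,\hat q,\hat r)}(n,s,\alpha,\nu)=V^\nu_n(s,\alpha)$, $\PP$-a.s., for every $(s,\alpha)$. The base case $n=N$ is condition 1 of Corollary \ref{cor:piecewiseLin} (which gives $V^\nu_N(s,\alpha)=s^+p^b_N-s^-p^a_N$) together with the definition (\ref{eq.intro.Jm.def}) of the objective at the terminal time. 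For the inductive step, I would insert the dynamics (\ref{eq.stateProc.def}) of $S^{(\hat p,\hat q,\hat r)}$ into (\ref{eq.intro.Jm.def}), split off the time-$n$ term, and use the tower property of $\{\PP^\alpha_n\}$ and the induction hypothesis to write $J^{(\hat p,\hat q,\hat r)}(n,s,\alpha,\nu)=\EE^\alpha_n\bigl(V^\nu_{n+1}(S^{n,s,(\hat p,\hat q,\hat r)}_{n+1},\alpha)-g^\nu_n(\hat p_n,\hat q_n,\hat r_n,D_{n+1})\bigr)$, with $g^\nu_n$ as in the proof of Lemma \ref{le:DPP}. Since the hypothesis forces $\hat q_n(s,\alpha)\in\{0,s\}$, exactly one of the four bulleted cases in conditions 5 and 6 applies (according to whether $\hat r_n=1$, or $\hat r_n=0$ with $\hat q_n=0$, or $\hat r_n=0$ with $\hat q_n=s$), and each such case states precisely that this one-step quantity equals $s^+\lambda^a_n(\alpha)-s^-\lambda^b_n(\alpha)=V^\nu_n(s,\alpha)$; the ``$p=\hat p_n(s,\alpha)$ attains the supremum'' clause is what makes the case $\hat q_n=s$, $\hat r_n=0$ work.

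\textbf{Step 2 (no admissible control beats $V^\nu$).} Fix $(s,\alpha)$ and an admissible control $(p,q,r)$, write $S_k=S^{n,s,(p,q,r)}_k$, and consider
$$
M_k = V^\nu_k(S_k,\alpha) - \sum_{j=n}^{k-1} g^\nu_j(p_j,q_j,r_j,D_{j+1}),\qquad k=n,\dots,N.
$$
The key one-step estimate is that, for every $k$ and every choice of $(p_k,q_k,r_k)$,
$$
\EE^\alpha_k\!\left( V^\nu_{k+1}(S^{k,S_k,(p,q,r)}_{k+1},\alpha) - g^\nu_k(p_k,q_k,r_k,D_{k+1}) \right)\le V^\nu_k(S_k,\alpha),\qquad \PP\text{-a.s.}
$$
This is established exactly as in the proof of Corollary \ref{cor:piecewiseLin}: substituting the piecewise-linear form from condition 1 makes the left-hand functional piecewise-linear in $q_k$ with breakpoints at $q_k=0$ and $q_k=S_k$; conditions 2--4 force the slopes at $q_k\to+\infty$ and $q_k\to-\infty$ to have the right signs, so the maximum over $q_k$ is finite and attained at $q_k\in\{0,S_k\}$; and conditions 5--6 identify the maximal value with $V^\nu_k(S_k,\alpha)$. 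Hence $\EE^\alpha_k M_{k+1}\le M_k$, i.e. $(M_k)_{k=n}^N$ is a $\PP^\alpha$-supermartingale (admissibility of $(p,q,r)$ and the bound $\EE^\alpha_m|p^a_N|\vee|p^b_N|<\infty$ from admissibility of $\nu$ supply the integrability needed to take conditional expectations and apply the tower property). Since $M_N$ coincides with the expression inside the expectation in (\ref{eq.intro.Jm.def}) for the control $(p,q,r)$, taking $\EE^\alpha_n$ gives $J^{(p,q,r)}(n,s,\alpha,\nu)=\EE^\alpha_n M_N\le M_n=V^\nu_n(s,\alpha)$.

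\textbf{Conclusion.} Combining Steps 1 and 2: $(\hat p,\hat q,\hat r)$ is admissible by hypothesis, and for every $(m,s,\alpha)$ and every admissible $(p,q,r)$ one has $J^{(\hat p,\hat q,\hat r)}(m,s,\alpha,\nu)=V^\nu_m(s,\alpha)\ge J^{(p,q,r)}(m,s,\alpha,\nu)$, $\PP$-a.s., which is Definition \ref{def:optControl}.

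\textbf{Main obstacle.} I expect the delicate part to be Step 2 --- verifying the one-step inequality for \emph{all} $(p_k,q_k,r_k)$ (in particular those with $q_k\notin\{0,S_k\}$), and making the supermartingale argument rigorous under the subjective measures $\PP^\alpha$: one must control integrability, keep track of the $0\cdot\infty$ conventions in the payoff, and ensure that the countably many $\PP$-null exceptional sets from the ``$\PP$-a.s.'' identities in conditions 1--6 do not interfere over the finitely many time steps. The piecewise-linearity in $q_k$ together with the sign conditions 2--4 is what drives the argument, so this is essentially a matter of reusing, in the reverse direction, the computation already performed in the proof of Corollary \ref{cor:piecewiseLin}.
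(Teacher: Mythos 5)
Your proposal is correct and is essentially the paper's argument written out in full: the paper's proof is a one-line instruction to ``revert the arguments in the proof of Corollary \ref{cor:piecewiseLin}'' (noting that $\hat{q}$ may be restricted to $\{0,s\}$ without loss of optimality), and your two-step verification --- backward induction showing the candidate attains $V^{\nu}$, plus the supermartingale bound for arbitrary admissible controls via the piecewise-linearity in $q_k$ and the sign conditions 2--4 --- is exactly that reversal made explicit.
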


\begin{proof}
It suffices to revert the arguments in the proof of Corollary \ref{cor:piecewiseLin}, and recall that $\hat{q}$ can always be chosen to be equal to $0$ or $s$, without compromising the optimality.
\qed
\end{proof}


\section{Appendix B}

\emph{Proof of Lemma \ref{gapproxapplied}}.
The following lemma shows that the normalized price increments are close to Gaussian in the conditional $\mathbb{L}^2$ norm.

\begin{lemma}\label{l2conv}
Let Assumptions \ref{ass:sigma}, \ref{ass:A.alpha}, \ref{ass:joint.cond.reg}, \ref{ass:main.L2.strong} hold.
Then, there exists a deterministic function $\epsilon(\cdot)\ge0$, such that $\epsilon(\Delta t)\to0$, as $\Delta t\to0$, and, $\PP$-a.s., for all $\alpha\in\mathbb{A}$ and all $n=1,\ldots,N$, we have
$$
\EE^\alpha_{n-1}\left(\left(\xi_n/\sqrt{\Delta t} - \sigma_{t_{n-1}}(W^\alpha_{t_n}-W^\alpha_{t_{n-1}})/\sqrt{\Delta t}\right)^2\right) \le \epsilon(\Delta t).
$$
\end{lemma}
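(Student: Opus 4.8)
\textbf{Proof plan for Lemma \ref{l2conv}.} The plan is to decompose the price increment into a Gaussian surrogate plus a small remainder, and to estimate the (conditional second moment of the) remainder piece by piece. Under $\PP^{\alpha}$, equation (\ref{eq.p0.cont.a.alpha}) gives
$$
\xi_n = \tilde{p}^0_{t_n} - \tilde{p}^0_{t_{n-1}} = \int_{t_{n-1}}^{t_n}\mu^{\alpha}_s\,ds + \int_{t_{n-1}}^{t_n}\sigma_s\,dW^{\alpha}_s,
$$
with the \emph{same} $\sigma$ as under $\PP$. Writing $\Delta W^{\alpha}_n = W^{\alpha}_{t_n}-W^{\alpha}_{t_{n-1}}$ and using that $\sigma_{t_{n-1}}$ is $\mathcal{F}_{t_{n-1}}$-measurable, we get
$$
\xi_n - \sigma_{t_{n-1}}\Delta W^{\alpha}_n = I_1 + I_2,\qquad I_1 := \int_{t_{n-1}}^{t_n}\mu^{\alpha}_s\,ds,\qquad I_2 := \int_{t_{n-1}}^{t_n}(\sigma_s-\sigma_{t_{n-1}})\,dW^{\alpha}_s.
$$
By $(a+b)^2\le 2a^2+2b^2$, it suffices to bound $\EE^{\alpha}_{n-1}(I_1^2)/\Delta t$ and $\EE^{\alpha}_{n-1}(I_2^2)/\Delta t$ by a deterministic function tending to $0$ as $\Delta t\to0$. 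Throughout, all conditional expectations are taken under the tower-compatible family of Assumption \ref{ass:joint.cond.reg}, which is precisely what legitimizes the iterated-conditioning manipulations below $\PP$-a.s.

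For the drift term, the Cauchy--Schwarz inequality and the bound $|\mu^{\alpha}_s|\le C$ from Assumption \ref{ass:A.alpha} give, pathwise, $I_1^2 \le \Delta t\int_{t_{n-1}}^{t_n}(\mu^{\alpha}_s)^2\,ds \le C^2(\Delta t)^2$, so that $\EE^{\alpha}_{n-1}(I_1^2)/\Delta t \le C^2\,\Delta t$. For the martingale term, $\sigma$ is bounded by Assumption \ref{ass:sigma}, so $(\sigma_s-\sigma_{t_{n-1}})_{s\in[t_{n-1},t_n]}$ is a bounded progressive integrand, and the conditional It\^o isometry together with a conditional Fubini theorem yield, $\PP$-a.s.,
$$
\EE^{\alpha}_{n-1}(I_2^2) = \EE^{\alpha}_{n-1}\!\left(\int_{t_{n-1}}^{t_n}(\sigma_s-\sigma_{t_{n-1}})^2\,ds\right) = \int_{t_{n-1}}^{t_n}\EE^{\alpha}_{n-1}\!\left((\sigma_s-\sigma_{t_{n-1}})^2\right)ds.
$$
Now apply Assumption \ref{ass:main.L2.strong} with $t=\tau=t_{n-1}$ (a deterministic, hence legitimate, stopping time): for each fixed $s\in[t_{n-1},t_n]$ it gives $\EE^{\alpha}_{n-1}\big((\sigma_s-\sigma_{t_{n-1}})^2\big)\le\varepsilon(\Delta t)$, $\PP$-a.s., uniformly in $\alpha$. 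By Fubini--Tonelli on $[t_{n-1},t_n]\times\Omega$, the set on which the integrand exceeds $\varepsilon(\Delta t)$ is Lebesgue$\otimes\PP$-null, so $\EE^{\alpha}_{n-1}(I_2^2)\le\varepsilon(\Delta t)\,\Delta t$, $\PP$-a.s., whence $\EE^{\alpha}_{n-1}(I_2^2)/\Delta t\le\varepsilon(\Delta t)$. Combining the two estimates, $\epsilon(\Delta t):=2C^2\Delta t + 2\varepsilon(\Delta t)$ works; since there are only finitely many $n$ and the $\PP$-a.s. statements above are uniform in $(n,\alpha)$, the bound holds for all $n$ and all $\alpha\in\mathbb{A}$ on a single $\PP$-full event.

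The only delicate points are bookkeeping ones, and I do not expect either to be a genuine obstacle: (i) that the conditional It\^o isometry and Fubini hold $\PP$-a.s. for the chosen regular conditional probabilities --- this is exactly the role of the tower property in Assumption \ref{ass:joint.cond.reg}; and (ii) the passage from the ``for each fixed $s$, $\PP$-a.s.'' statement produced by Assumption \ref{ass:main.L2.strong} to the ``for Lebesgue-a.e.\ $s$, $\PP$-a.s.'' form needed to integrate in $s$, which is the Fubini--Tonelli step above. The substance of the lemma is simply that the $\mathbb{L}^2$-continuity postulated in Assumption \ref{ass:main.L2.strong} transfers, after rescaling by $\sqrt{\Delta t}$, into closeness of the normalized increment $\xi_n/\sqrt{\Delta t}$ to its Gaussian surrogate $\sigma_{t_{n-1}}\Delta W^{\alpha}_n/\sqrt{\Delta t}$ in the conditional $\mathbb{L}^2$ norm.
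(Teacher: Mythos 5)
Your proof is correct and follows exactly the paper's own (very terse) argument: the same decomposition of $\xi_n - \sigma_{t_{n-1}}(W^\alpha_{t_n}-W^\alpha_{t_{n-1}})$ into the drift integral plus the stochastic integral of $\sigma_s-\sigma_{t_{n-1}}$, with the drift bounded via Assumption \ref{ass:A.alpha} and the martingale term via the conditional It\^o isometry and Assumption \ref{ass:main.L2.strong}. You simply supply the bookkeeping details the paper omits.
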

\begin{proof}
Notice: $\xi_n/\sqrt{\Delta t} - \sigma_{t_{n-1}}(W^\alpha_{t_n}-W^\alpha_{t_{n-1}})/\sqrt{\Delta t}
=\frac{1}{\sqrt{\Delta t}} \int\limits_{t_{n-1}}^{t_n}\mu^\alpha_s \text{d}s 
+ \frac{1}{\sqrt{\Delta t}} \int\limits_{t_{n-1}}^{t_n}(\sigma_s-\sigma_{t_{n-1}}) \text{d}W^\alpha_s$.
Then, using Assumptions \ref{ass:A.alpha}, \ref{ass:main.L2.strong}, and It\^{o}'s isometry, we obtain the statement of the lemma.
\qed
\end{proof}

The next lemma connects the proximity in terms of $\mathbb{L}^2$ norm and the proximity of expectations of certain functions of random variables. This result would follow trivially from the classical theory, but, in the present case, we require additional uniformity -- hence, a separate lemma is needed (whose proof is, nevertheless, quite simple).

\begin{lemma}\label{gapprox}
For any constant $C>1$, there exists a deterministic function $\gamma(\cdot)\ge0$, s.t. $\gamma(\varepsilon)\to0$, as $\varepsilon\to0$, and, for any $\varepsilon>0$, $\sigma\in[1/C,C]$, and any random variables $\eta\sim\mathcal{N}(0,\sigma^2)$ and $\xi$ (the latter is not necessarily Gaussian), satisfying $\EE(\xi-\eta)^2\le\varepsilon$, the following holds for all $p\in\RR$
\begin{itemize}
\item[(i)] $(|p|\vee 1)\left| \PP(\xi>p) - \PP(\eta>p) \right|\le\gamma(\varepsilon)$,
\item[(ii)] $\left| \EE(\xi\bone_{\{\xi>p\}}) - \EE(\eta\bone_{\{\eta>p\}}) \right|\le\gamma(\varepsilon)$.
\end{itemize}
\end{lemma}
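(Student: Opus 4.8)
The plan is to prove (i) and (ii) separately, each time splitting the difference into the region $|p|\le M$ and the tails $|p|>M$, using Chebyshev's inequality on the $\mathbb{L}^2$-closeness of $\xi$ and $\eta$ for the central part and explicit Gaussian tail bounds (together with a crude $\mathbb{L}^1$ or $\mathbb{L}^2$ bound on $\xi$) for the tails, then optimizing $M$ in terms of $\varepsilon$. The key point is that all constants can be taken to depend only on $C$, since $\sigma\in[1/C,C]$ controls the Gaussian $\eta$ uniformly, and $\mathbb{E}\xi^2\le 2\mathbb{E}(\xi-\eta)^2+2\mathbb{E}\eta^2\le 2\varepsilon+2C^2$ controls $\xi$ uniformly.

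For part (i), write $\PP(\xi>p)-\PP(\eta>p) = \PP(\xi>p,\eta\le p)-\PP(\xi\le p,\eta>p)$, and note that on each of these events $|\xi-\eta|$ must be at least the distance from $p$ to the other variable; more simply, for any $\rho>0$,
\[
\left|\PP(\xi>p)-\PP(\eta>p)\right| \le \PP(|\xi-\eta|>\rho) + \left|\PP(\eta>p-\rho)-\PP(\eta>p+\rho)\right| \le \frac{\varepsilon}{\rho^2} + \frac{2\rho}{\sqrt{2\pi}\,(1/C)},
\]
where the last term bounds the probability that a mean-zero variance-$\sigma^2$ Gaussian lies in an interval of length $2\rho$ (its density is bounded by $C/\sqrt{2\pi}$). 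Choosing $\rho=\varepsilon^{1/3}$ gives a bound of order $\varepsilon^{1/3}$, uniform in $p$ and $\sigma\in[1/C,C]$. To get the extra factor $|p|\vee 1$, we use instead, for $|p|$ large, the combination of Chebyshev for $\xi$, $\PP(\xi>p)\le \mathbb{E}\xi^2/p^2 \le (2\varepsilon+2C^2)/p^2$, and the Gaussian tail bound $\PP(\eta>p)\le e^{-p^2/(2C^2)}$; both decay at least like $1/p^2$, so $(|p|\vee1)$ times either term is bounded by $(|p|\vee1)$ times an $O(1/p^2)$ quantity for $|p|\ge M$, which is small once $M$ is large, while for $|p|\le M$ the factor $|p|\vee 1$ is at most $M$ and we apply the previous display with $\rho$ chosen so that $M\cdot(\varepsilon/\rho^2 + c\rho)\to 0$; taking $M=\varepsilon^{-1/12}$ and $\rho=\varepsilon^{5/12}$, say, makes everything vanish. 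I would package the resulting bound as $\gamma(\varepsilon)$.

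For part (ii), similarly decompose
\[
\mathbb{E}(\xi\bone_{\{\xi>p\}}) - \mathbb{E}(\eta\bone_{\{\eta>p\}}) = \mathbb{E}\big((\xi-\eta)\bone_{\{\xi>p\}}\big) + \mathbb{E}\big(\eta(\bone_{\{\xi>p\}}-\bone_{\{\eta>p\}})\big).
\]
The first term is at most $\|\xi-\eta\|_{\mathbb{L}^2}\le\sqrt{\varepsilon}$ by Cauchy--Schwarz. For the second term, bound $|\eta|\le|\eta|\bone_{\{|\eta|\le M\}} + |\eta|\bone_{\{|\eta|>M\}}$: the first piece is $\le M\,\mathbb{E}|\bone_{\{\xi>p\}}-\bone_{\{\eta>p\}}| = M\,|\PP(\xi>p,\eta\le p)+\PP(\xi\le p,\eta>p)|$, which is controlled exactly as in part (i) by $M(\varepsilon/\rho^2 + c\rho)$, and the second piece is $\le \mathbb{E}(|\eta|\bone_{\{|\eta|>M\}}) \le \sqrt{\mathbb{E}\eta^2}\sqrt{\PP(|\eta|>M)}$, a Gaussian tail that is uniformly small in $M$ for $\sigma\in[1/C,C]$. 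Optimizing $M$ against $\rho$ and $\varepsilon$ as before yields a bound $\gamma(\varepsilon)\to 0$ depending only on $C$.

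I expect the only mildly delicate point to be bookkeeping the uniformity: making sure that the auxiliary thresholds $M=M(\varepsilon)$ and $\rho=\rho(\varepsilon)$ are chosen purely as functions of $\varepsilon$ and $C$ (not of $p$ or $\sigma$), so that the final $\gamma$ is genuinely universal. There is no real analytic obstacle — all the ingredients are Chebyshev's inequality, Cauchy--Schwarz, and the boundedness of the Gaussian density and the smallness of Gaussian tails, all uniform over $\sigma\in[1/C,C]$. I would then set $\gamma(\varepsilon)$ to be the maximum of the two bounds obtained in (i) and (ii).
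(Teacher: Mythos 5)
Your proposal is correct and follows essentially the same route as the paper: the identical decomposition of (ii) into $\EE((\xi-\eta)\bone_{\{\xi>p\}})$ plus $\EE(\eta(\bone_{\{\xi>p\}}-\bone_{\{\eta>p\}}))$, with Cauchy--Schwarz for the first term, and Chebyshev on $|\xi-\eta|$ combined with the uniformly bounded Gaussian density to control $\PP(\xi>p,\eta\le p)+\PP(\xi\le p,\eta>p)$. In fact you are slightly more thorough than the paper, which leaves the $(|p|\vee 1)$ weight in (i) implicit, whereas you handle it explicitly via the Chebyshev/Gaussian tail bounds for $|p|\ge M$ (just note that for large negative $p$ one applies the same tail argument to the complementary events $\{\xi\le p\}$, $\{\eta\le p\}$).
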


\begin{proof}
\noindent (ii) Note that
$$
\left| \EE(\xi\bone_{\{\xi>p\}}) - \EE(\eta\bone_{\{\eta>p\}}) \right| \le
\left| \EE\left((\xi-\eta)\bone_{\{\xi>p\}}\right) \right| + \left|\EE\left(\eta(\bone_{\{\xi>p\}}-\bone_{\{\eta>p\}})\right)\right|
$$
$$
\leq \sqrt{\varepsilon} + \left\Vert\eta\right\Vert_2 \sqrt{\PP(\xi>p,\eta\le p) + \PP(\xi\le p,\eta>p)},
$$
and
$$
\PP(\xi>p,\eta\le p) \le \PP(p\ge\eta\ge p-\sqrt[3]{\varepsilon}) + \PP(|\xi-\eta|>\sqrt[3]{\varepsilon}) \le M\sqrt[3]{\varepsilon}+\frac{\EE(\xi-\eta)^2}{(\sqrt[3]{\varepsilon})^2}\le (M+1)\sqrt[3]{\varepsilon},
$$
where we used the fact that $\eta$ has a density bounded by a fixed constant $M$. We can similarly show that $\PP[\xi\le p,\eta>p]\le(M+1)\sqrt[3]{\varepsilon}$. The resulting estimates yield the statement of the lemma.
\qed
\end{proof}

Taking $\varepsilon(\Delta t)=\gamma(\epsilon(\Delta t))$ and applying the above lemmas, we get the statement of Lemma \ref{gapproxapplied}, with $(W^\alpha_{t_n}-W^\alpha_{t_{n-1}})/\sqrt{\Delta t}$ in place of $\eta_0$. Finally, we note that the laws of the two random variables coincide under $\PP^{\alpha}_{n-1}$, and the statement depends only on these laws. The last statement of Lemma \ref{gapproxapplied} follows from the fact that Lemma \ref{gapprox} is stable under analogous substitution.
\qed

\bibliography{MFGLOB_refs}


\begin{figure}
\begin{center}
  \begin{tabular} {cc}
    {
    \includegraphics[width = 0.49\textwidth]{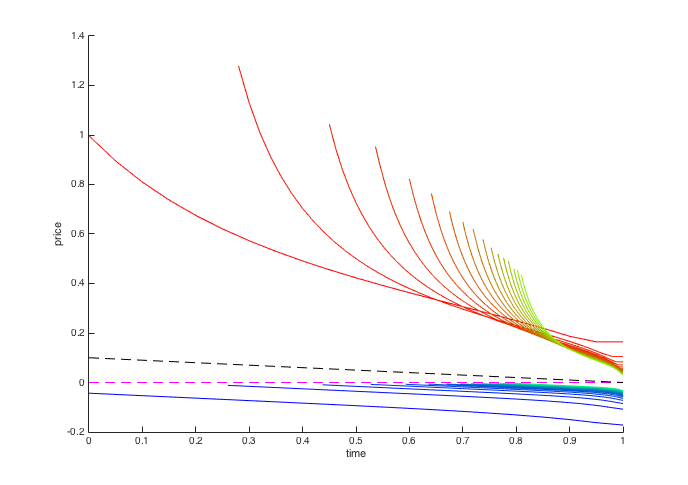}
    } & {
    \includegraphics[width = 0.49\textwidth]{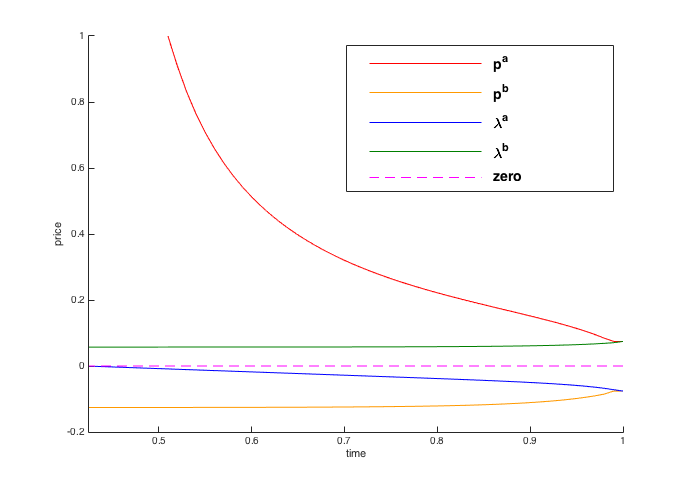}
    }\\
  \end{tabular}
  \caption{On the left: ask price $\hat{p}^a$ (in red) and the associated expected execution prices $\hat{\lambda}^a$ (in blue); different curves correspond to different trading frequencies ($N=20,\ldots,500$); black dashed line is the expected change in the fundamental price $\alpha (T-t)$. On the right: ask price $\hat{p}^a$ (in red) and the associated expected execution price $\hat{\lambda}^a$ (in blue), bid price $\hat{p}^b$ (in orange) and the associated expected execution price $\hat{\lambda}^b$ (in green), for $N=100$. Non-degenerate equilibrium exists only on a time interval where $\hat{\lambda}^a < 0$. All prices are measured relative to the fundamental price and are plotted as functions of time. Positive drift: $\alpha=0.1$, $\sigma=1$, $T=1$.}
    \label{fig:2}
\end{center}
\end{figure}


\begin{figure}
\begin{center}
  \begin{tabular} {cc}
    {
    \includegraphics[width = 0.49\textwidth]{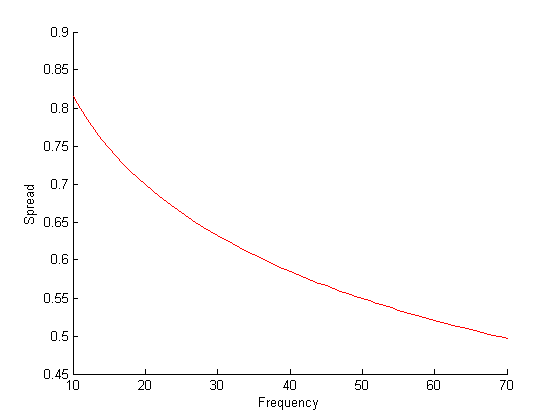}
    } & {
    \includegraphics[width = 0.49\textwidth]{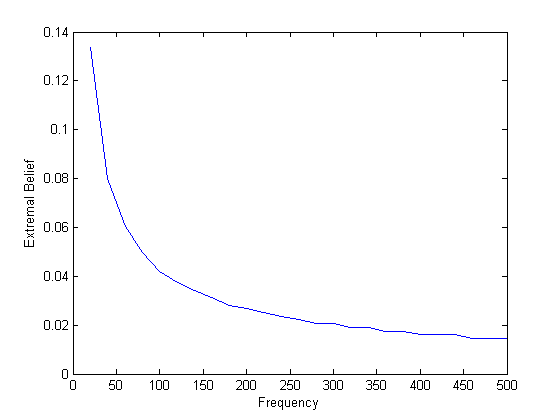}
    }\\
  \end{tabular}
  \caption{The horizontal axis represents trading frequency, measured in the number of steps $N$. Left: time-zero bid-ask spread in the zero-drift case ($\alpha=0$). Right: the maximum value of drift $\alpha$ for which a non-degenerate equilibrium exists on the entire time interval. Parameters: $\sigma=1$, $T=1$.}
    \label{fig:3}
\end{center}
\end{figure}

\end{document}